\newcommand{\SO}{{\mathsf{SO}}}
\newcommand{\gn}{{\mathsf{n}}}
\newcommand{\weyl}{{\mathsf w}}
\newcommand{\sigweyl}{\weyl}
\newcommand{\elweyl}{{\mathsf s}}
\newcommand{\helweyl}{\hat\elweyl}
\newcommand{\hweyl}{\hat\weyl}
\newcommand{\Weyl}{{\mathsf W}}
\renewcommand{\sl}{{\mathfrak{sl}}}
\newcommand{\gl}{{\mathfrak{gl}}}
\newcommand{\so}{{\mathfrak{so}}}
\newcommand{\qPS}{\psi}
\newcommand{\QPS}{\Psi}
\newcommand{\rank}{r}
\newcommand{\spa}{u}
\newcommand{\multiplicity}{d}
\newcommand{\Yangian}{\mathcal{Y}}
\newcommand{\qrep}{W}
\newcommand{\grep}{\nu}
\newcommand{\SourcePS}{S}
\newcommand{\qV}{V}
\newcommand{\QV}{V}
\newcommand{\qv}{v}
\newcommand{\Qq}{{\bf P}}
\newcommand{\obasis}{\varepsilon}
\newcommand{\algg}{{\mathfrak{g}}}
\newcommand{\fullset}{{\bar{\emptyset}}}
\newcommand{\ii}{{\mathsf{i}}}
\newcommand{\qDirac}{\psi}
\newcommand{\QDirac}{\Psi}
\newcommand{\Projp}{\Gamma^+}
\newcommand{\Projm}{\Gamma^-}
\newcommand{\Projpm}{\Gamma^{\pm}}
\newcommand{\dBasis}{\theta}
\newcommand{\degQ}{M}
\newcommand{\Plucker}{Pl\"{u}cker }
\newcommand{\sh}{D}
\newcommand{\hQ}{{Q}}
\newcommand{\hQDirac}{{\QDirac}}
\newcommand{\hqV}{{\qV}}
\newcommand{\hQV}{{\QV}}
\newcommand{\hQq}{{\Qq}}
\newcommand{\bP}{{\hQq}}
\newcommand{\Pso}{\Phi}
\newcommand{\PSes}{{\QPS_{\emptyset}}}
\newcommand{\Tg}{T^{\gl_r}}
\newcommand{\bTg}{\mbox{\ensuremath{\bar T}}^{\gl_r}}
\newcommand{\es}{\emptyset}
\newcommand{\mB}{\bar{\mathcal{B}}}
\newcommand{\bmB}{\mathcal{B}}
\newcommand{\ba}{{\bf a}}
\newcommand{\bb}{{\bf b}}
\newcommand{\glQ}{{\mathcal{Q}}}
\newcommand{\basvec}{{\mathfrak{e}}}
\DeclareMathOperator{\TS}{TS}
\newcommand{\fs}{{\bar\es}}
\newcommand{\be}{\begin{eqnarray}}
\newcommand{\ee}{\end{eqnarray}}
\newcommand{\dressing}{\sigma}
\newcommand{\CO}{{\mathcal{O}}}
\newcommand{\genericdel}[4]{%
  \ifcase#3\relax
  \ifx#1.\else#1\fi#4\ifx#2.\else#2\fi\or
  \bigl#1#4\bigr#2\or
  \Bigl#1#4\Bigr#2\or
  \biggl#1#4\biggr#2\or
  \Biggl#1#4\Biggr#2\else
  \left#1#4\right#2\fi
}
\newcommand{\eg}{{\it e.g. }}
\newcommand{\ie}{{\it i.e. }}
\newcommand{\cf}{{\it cf. }}
\newcommand{\rhs}{{r.h.s. }}
\newcommand{\lhs}{{l.h.s. }}
\newcommand{\wrt}{{w.r.t. }}
\newtheorem{theorem}{Theorem}
\newtheorem*{conjecture*}{Conjecture}
\newtheorem{corollary}{Corollary}[theorem]
\newtheorem{lemma}[theorem]{Lemma}
\newtheorem*{lemma*}{Lemma}
\theoremstyle{definition}
\theoremstyle{remark}
\begin{document}
\preprint{APS/123-QED}

\title{Bethe Algebra using Pure Spinors}

\author{Simon Ekhammar}
\email{simon.ekhammar@physics.uu.se}
\author{Dmytro Volin}%
 \email{dmytro.volin@physics.uu.se}
\affiliation{Department of Physics and Astronomy,
Uppsala University, Box 516, SE-751 20 Uppsala, Sweden}
\affiliation{Nordita, KTH Royal Institute of Technology and Stockholm University,
Hannes Alfvéns väg, SE-106 91 Stockholm, Sweden}

\begin{abstract}
We explore a $\gl_{\rank}$-covariant parameterisation of Bethe algebra appearing in $\so_{2\rank}$ integrable models, demonstrate its geometric origin from a fused flag, and use it to compute the spectrum of periodic rational spin chains, for various choices of the rank $r$ and Drinfeld polynomials.
\end{abstract}

\maketitle

\tableofcontents
\section{Introduction}
Conserved charges of integrable models form a commutative algebra known as the Bethe algebra. In this paper, we propose an efficient way to parameterise it for models with $\so_{2\rank}$ symmetry. Our motivation is two-fold. From a practical perspective, $\so_\gn$ systems emerged in AdS/CFT integrability offering an elementary toolkit to study holography, in terms of fishnets and fishchains \citep{Gurdogan:2015csr,*Gromov:2019bsj}. Understanding of the Bethe algebra is required for solving thermodynamic Bethe Ansatz (TBA) equations and performing separation of variables \citep{Balog:2005yz,*Basso:2019xay,*Derkachov:2019tzo,*Cavaglia:2021mft}. On the other hand, from a conceptual point of view, the Bethe algebra is surprisingly universal. For generic values of the spectral parameter, it features the same relations that appear in numerous studies: integrable models derived by quantum inverse scattering method (with rational, trigonometric, elliptic cases included); TBA (both relativistic and AdS/CFT-type systems); a variety of differential equations (leading to ODE/IM and similar correspondences); supersymmetric gauge theories (by virtue of Bethe/gauge correspondence); enumerative geometry (where the Bethe algebra is a quantum cohomology ring) \citep{[{See e.g. the following works and references therein: }][] Kuniba:2010ir,*Masoero:2015rcz,*Nekrasov:2013xda,*Maulik:2012wi}. Another name for our study is parameterisation of a finite-difference oper, \eg of the one in \cite{Frenkel:2020iqq}. Whereas universality holds for any Lie algebra $\algg$, $\so_{2\rank}$ is the simplest possible case after $\sl_{\rank+1}$. The $\sl_{\rank+1}$ case, although studied in great detail, is liable to simplifications shading the nature of relations we would like to explore. $\so_{2\rank}$ is hence an important stepping stone towards the general case we aim to address in future works.

In \cite{Ekhammar:2020enr} building on techniques of \cite{Sun:2012xw,Masoero:2015lga}, H.~Shu and the authors described the Bethe algebra for any simply-laced $\algg$ in terms of an extended Q-system. This approach offers full covariance: if an integrable model is based on the quantum deformation of the current or the loop algebra of $\algg$ then Baxter Q-functions generating the Bethe algebra are \Plucker coordinates of the fused flag corresponding to the Langlands dual $^L\hat\algg$. A physical counterpart is a recently proposed realisation of Q-operators by magnetically charged 't Hooft lines \cite{Costello:2021zcl}.

The covariant description has an unfortunate downside: too many Q-functions are involved. Indeed, the number of Q-functions of the extended Q-system grows exponentially with $\rank$, whereas the number of functionally independent Q-functions is equal to the rank of the algebra. 

In the $\sl_{\rank}$ case (of rank $\rank-1$), this issue is elegantly handled with `single-indexed' Q-functions $\glQ_a$, $a=1,\ldots,\rank$, which are projective coordinates $[\glQ_1:\ldots:\glQ_\rank]$ of $\mathbb{P}^{\rank-1}$. They naturally transform under $\sl_{\rank}$-action ($\mathsf{PGL}_r$ to be precise, but we mostly work on the Lie algebra level). The other Q-functions, which are \Plucker coordinates of Grassmannians $\mathsf{Gr}(\rank,k)$,  are conveniently computed by discrete Wronskian determinants $\glQ_{a_1\ldots a_k}=W(\glQ_{a_1},\ldots,\glQ_{a_{k}})$ \citep{Krichever:1996qd,Tsuboi:2009ud}. For a reader familiar with nested Bethe Ansatz, we recall that Q-functions in the case of spin chains have the structure $\glQ_{a_1\ldots a_k}=\sigma_k\, q_{a_1\ldots a_k}$, where $q_{a_1\ldots a_k}$ are polynomials. A rather standard notation choice is to identify $q_{1\ldots k}$ for $k=1,\ldots,r-1$ as the polynomials whose zeros satisfy nested Bethe equations. We can then choose $\glQ_{12\ldots k}$ to generate the Bethe algebra but this is probably the least covariant way to proceed, and this is not what we are aiming for. We want to work with single-indexed $\glQ_a$ as generators, and this is practically feasible because a replacement of nested Bethe equations exists: Wronskian Bethe equations $W(\glQ_1,\glQ_2,\ldots, \glQ_\rank)=1$ are used directly for fixing $\glQ_a$.

For  $\so_{2\rank}$ models, candidates for analogs of $\glQ_a$ appeared in the work \cite{Ferrando:2020vzk} by G.~Ferrando, R.~Frassek, and V.~Kazakov who considered Q-system on the Weyl orbit (we use the terminology of \cite{Ekhammar:2020enr}).  Overall, this work offered a variety of relations that transform naturally under the action of the Weyl group of $\so_{2\rank}$ and moreover are $\gl_\rank$-covariant, where $\gl_r\subset \so_{2\rank}$ corresponds to the standard embedding $\mathsf{U}(\rank)\subset \SO(2\rank)$ (we do not need to specify real forms and hence select $\gl_r$ for notation). In particular, equation (7.2) there  was suggested as a substitute for Wronskian Bethe equations, it featured $\rank+1$ Q-functions, their equivalents in our work shall be $\Psi_a$ and $\Psi_{\emptyset}$, where $\Psi_a$ form a vector $\gl_r$-multiplet and $\Psi_{\emptyset}$ is a singlet. However, unlike in the $\sl_{\rank}$ case, (7.2) is not sufficient by itself. As we shall see, extra relations involving an anti-symmetric tensor $\Psi_{ab}$ are required.

In this paper we explain how the $\gl_r$-covariant expressions of \cite{Ferrando:2020vzk} are related to the geometric $\so_{2r}$-covariant description in \cite{Ekhammar:2020enr}. We investigate this interplay and complete (7.2) to a system of equations on a relatively small number of Q-functions to effectively encode the whole Bethe algebra.

Our study starts from the observation made in \cite{Ekhammar:2020enr}: Because Q-functions are \Plucker coordinates, those of them who define maximal isotropic subspaces are components of pure spinors. Then one can benefit from Cartan parameterisation of pure spinors which uses only $1+\rank+{r\choose 2}$ functions, we shall denote them respectively as
\be
\PSes\equiv 1/\Pso\,,\quad \QPS_a \equiv \Qq_a/\Pso\,,\quad \QPS_{ab}\equiv\mu_{ab}/\Pso\,.
\ee
Because of the fusion relations, only $\rank$ of them are independent. If one chooses $\Qq_a$ as the independent ones, they would be naturally the `single-indexed' Q-functions of a $\gl_r$ Q-system (and not of $\sl_r$ because they alone do not form a set of projective coordinates). Owing to other fused flag relations, all the members of the extended Q-system are just rational combinations of $\PSes,\QPS_a,\QPS_{ab}$ and so this triple contains all, or almost all, non-trivial information about the Bethe algebra. A suitable approach to work with the triple depends on the question to study, we give two examples in Sections~\ref{sec:Tfunc}~and~\ref{sec:RationalSpinChains}.

The paper is organised as follows: in Section~\ref{sec:FusedFlags} we summarise the relevant findings of \cite{Ekhammar:2020enr} about the extended $\so_{2r}$ Q-system, with a slight update of notations. Section~\ref{sec:PSparameterisation} describes $\gl_\rank$ parameterisation of the Q-system and builds up to relations \eqref{eq:SpinorQSystem} between $\PSes,\QPS_a,\QPS_{ab}$ which play the role of Wronskian Bethe equations and is a way to concisely describe Bethe Algebra. From these relations we then derive all the features of the extended Q-system and use this techniques to prove that, starting from generic enough $r$ Q-functions as an input, one can always construct a consistent extended Q-system in the unique up to the symmetry way.   Section~\ref{sec:Tfunc} studies a $\gl_r$-invariant approach to compute transfer matrices in terms of $\Phi, \Qq_a$ only but eventually resolves that adding $\mu_{ab}$, while optional, is a useful simplification. Section~\ref{sec:RationalSpinChains} is a proof of concept that the developed formalism works in practice: we explicitly and efficiently compute the spectrum of rational spin chains, the results are available in the ancillary {\it Mathematica} notebook. These results offer rich experimental evidence that equations \eqref{eq:SpinorQSystem}, plus kinematic constraints in the non-basic cases, provide a rigorous description of the Bethe Algebra in the same sense as $\sl_{\rank}$ Wronskian Bethe equations do, the precise mathematical conjecture is in Section~\ref{sec:CF}. The results of the work are summarised in section Conclusions. The first appendix explores action of Weyl group on Q-functions and provides further comparison with the results of \cite{Ferrando:2020vzk}. Finally, the second appendix contains several technical proofs.
\section{\label{sec:FusedFlags}Extended \texorpdfstring{$\so_{2\rank}$}{so(2r)} Q-system}
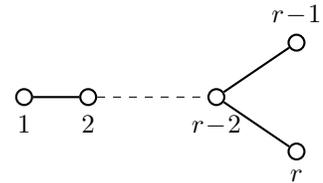
\begin{wrapfigure}{r}{0.3\textwidth}
{
\begin{center}
            \begin{tikzpicture}[scale=.426] \tikzset{node/.style={draw,circle,thick,inner sep=0pt,minimum size=6pt}}
                \node[node, label=below:$1$] (n1) at (0, 0) {};
                \node[node, label=below:$2$] (n2) at (2, 0) {};
                \node[node, label=below:$r\!-\!2$] (n3) at (6, 0) {};
                \node[node, label=above:$r\!-\!1$] (n4) at (8.5, 1.7) {};
                \node[node, label=below:$r$] (n5) at (8.5, -1.7) {};
                \draw[thick] (n1) -- (n2);
                \draw[dashed](n2)--(n3);
                \draw[thick] (n3) -- (n4);
                \draw[thick] (n3) -- (n5);
            \end{tikzpicture}
       \caption{Labeling conventions on Dynkin diagram}
    \label{fig:DDynkin}
\end{center}
}
\end{wrapfigure} 
The $\so_{2\rank}$ extended Q-system is a collection of functions $Q_{(a)}^\mathfrak{i}(u)$ of a single variable $u$ called the spectral parameter. These functions must satisfy a set of equations listed below in this section. Label $a$ denotes the choice of a fundamental representation $L(\omega_a)$, $a=1,2,\dots,\rank$, see our labelling convention on the Dynkin diagram, and $\mathfrak{i}$ runs over all components of the representation. With an appropriate choice \cite{Ekhammar:2020enr} of weight basis vectors $\basvec_{\mathfrak{i}}^{(a)}$ in $L(\omega_a)$, one forms a Q-vector $Q_{(a)}=\sum_{\mathfrak{i}}Q_{(a)}^{\mathfrak{i}}\basvec_{\mathfrak{i}}^{(a)}$.

Until Section~\ref{sec:RationalSpinChains}, our discussion will be of universal nature. For this sake, functions of $u$ need only to be defined and single-valued in a certain domain, their analyticity and even continuity may be in principle waived. What matters is existence of a free action of the `shift' group $\mathbb{Z}$ on the domain of definition. For a function $F$, the notation $F^{[k]}$, $k\in\mathbb{Z}$, shall denote the pullback through this action. We also use the short-hand $F^{\pm}\equiv F^{[\pm 1]}$.

In Section~\ref{sec:RationalSpinChains}, we study an explicit physical example of a rational spin chain when the analytic properties of Q-functions will play a role. For this example, the shift group lives up to its name: $F^{[k]}(u)=F(u+\frac{\ii}{2}k)$. Other typical case is a trigonometric system with the multiplicative shift $F^{[k]}(z)=F(q^{k/2}z)$ (we will not consider it). The additive implementation of the shift  explains the mnemonics of our abbreviations: The shift operator $\sh=e^{\frac \ii2\partial_u}$ acts as $D^k\,F=F^{[k]}$, the fused power is defined by $f^{[n]_\sh} =\prod\limits^{n}_{k=1} f^{[n+1-2k]}$, and the finite-difference Wronskian determinant is $\label{eq:WronskianDef}
    W(F_{1},F_{2},\dots,F_{k}) = |F^{[k+1-2a]}_{b}|_{\substack{a=1\dots k\\b=1\dots k}}$, where $|M_{ab}|$ is the determinant of a matrix, the range of indices $a,b$ will be omitted when there is no risk of confusion.

The function $Q_{(a)}(u)$ valued in $L(\omega_a)$ can be viewed locally as a section in an associated $\mathsf{SO}({2r})$-bundle (or, more accurately, $\mathsf{Pin}(2\rank)$-bundle). We can then introduce a connection and gauge the definition of the shift $Q_{(a)}^{[k]}$, it would be a finite-difference analog of passing from ordinary to covariant derivative, \cf \cite{Kazakov:2015efa} for a similar discussion in the $\gl_{\rank}$ case. This way of thinking links our results to those about opers, as is explained in \cite{Ekhammar:2020enr}, and has its advantages. However, one can always locally gauge away the connection as long as the function class of Q-functions is unconstrained, and  we assume this gauge choice throughout the paper.

\medskip
\noindent
Probably the most famous relations satisfied by Q-functions are the so-called QQ-relations \cite{Mukhin:2005aa},\cite{Masoero:2015lga}. If to include their transforms under action of Weyl group elements $\weyl$---precise definitions are in \cite{Ekhammar:2020enr} and Appendix~\ref{sec:Weyl}---they read
\be\label{eq:DynkinDiagramQQ}
W(\hQ_{(a)}^{\weyl(1)},\hQ_{(a)}^{\weyl(2)}) =\pm  \prod_{b\sim a} \hQ_{(b)}^{\weyl(1)}\,,
\ee
where $b\sim a$ indicates nodes on the Dynkin diagram adjacent to $a$, and the sign $\pm$ depends on $\weyl$. 

The other relations of the extended Q-system, at least in general position situation, ultimately follow from \eqref{eq:DynkinDiagramQQ} but this feature is not trivial to demonstrate, its full combinatorial proof is one of the results of this work. For the moment we shall list them as an additional requirement. Some of them first time appeared in \cite{Sun:2012xw} but their complete package is due to \cite{Ferrando:2020vzk},\cite{Ekhammar:2020enr}. These relations have an intriguing geometrical meaning. For instance, they imply that components of $Q_{(a)}$ are projective \Plucker coordinates $[Q_{(a)}^{1}:Q_{(a)}^{2}:\ldots]$ of the partial flag manifold $G/P_a$, where $P_a$ is the corresponding maximal parabolic subgroup. More generally, they tell us that all Q-functions together parameterise a fused flag \cite{Ekhammar:2020enr}. The very fact that all these relations can be simultaneously satisfied is remarkable.

\medskip
\noindent
To list down the relations, it is useful to introduce a more intuitive labelling for functions $Q_{(a)}^\mathfrak{i}$. Note that the vector representation of $\so_{2\rank}$ is associated to the first node of the Dynkin diagram and the spinor representations with the last two nodes. Hence the following notation becomes natural:
\begin{align}\label{eq:BasicRepresentations}
    &Q_{(1)}^{i} = \qV^{i}\,, 
    &
    &Q_{(\rank-1)} = \Projm\QDirac\,,   
    &
    &Q_{(\rank)} = \Projp\QDirac\,. 
\end{align}
Here $\QDirac$ is a $2^{\rank}$-dimensional Dirac spinor and $\Projp,\Projm$ are projection-matrices onto $2^{r-1}$-dimensional Weyl spinors. An $\so(2r)$ vector has $2\rank$ components, to label them it is convenient to use $i,j,\ldots\in \{1,2,\dots,\rank,-\rank,\dots,-2,-1\}$, while $a,b,\ldots\in \{1,2,\dots, \rank\}$ shall be always positive. To further precise conventions, we choose the anti-diagonal metric $g_{ij}$ and $\Gamma$-matrices forming the Clifford algebra:
\be\label{eq:TensorsConventions}
    g_{ij} = \delta_{i+j,0}\,,
    \quad
    \{\Gamma_{i},\Gamma_{j}\} = g_{ij}\,.
\ee
Also, we pick the charge-conjugation matrix $C$ as
\be\label{eq:ChargeC}
    C = (-1)^{\frac{r(r-1)}{2}}(\Gamma_{1}+\Gamma_{-1}) \dots  (\Gamma_{\rank}+\Gamma_{-\rank})\,.
\ee
Using the metric we have the identification
\begin{equation}
\label{eq:ama}
    V^{-a} = V_{a}\,.
\end{equation}

\begin{samepage}
Apart from Q-functions that are components for the vector and the spinor representations, there are $\rank-3$ other collections of Q-functions corresponding to anti-symmetric tensor representations. These Q-functions can be equated to Wronskians of $\qV^{i}$: Introduce a multi-index $I=i_1i_2\dots i_a$, and let $|I|=a$ be its cardinality, then, for $Q_{(a\leq r-2)}^{\mathfrak{i}\equiv I}\equiv V^{I}$,
\begin{equation}\label{eq:TensorQ}
    V^{I} = W(V^{i_1},\dots V^{i_a})\,.
\end{equation}
For $|I|=r-1,r$, \ie for tensors outside of the Dynkin diagram, \eqref{eq:TensorQ} shall be the definition of $V^{I}$. 
\end{samepage}

The above identification \eqref{eq:TensorQ} is analogous to the one in the $\gl_{\rank}$ Q-system~\footnote{We recall its definition in a paragraph after \eqref{eq:GammaAsForms}}, however $\so_{2\rank}$ Q-system has novel features. For one thing, tensors $V^{I}$ define isotropic hyperplanes for $|I|\leq r-1$ which means that the following orthogonality relation holds
\begin{equation}
\label{eq:fusedortho}
    \left(V^{i}\right)^{[m]}V_{i}^{[-m]}=0\,,\quad m\in \{-\rank+2,\dots,\rank-2\} \,;
\end{equation}
In particular, the vectors are null $V_iV^{i}=0$. For another, it is possible to build $V^{I}$ from spinors:
\begin{subequations}
\label{eq:VectorSpinorRelations}
\begin{align}
\label{eq:VectorSpinorRelationsa}
    &\hQDirac^{[-\rank+1+|I|]}C\Gamma^{I}\Projpm\hQDirac^{[\rank-1-|I|]}=
   \hqV^{I}\,,\quad
    |I|<\rank \,, \\
\label{eq:VectorSpinorRelationsb}
    &\hQDirac^{-}C\Gamma^{I}\hQDirac^{+} = \hqV^{I}\,,
    \quad
    |I| = \rank\,.
\end{align}
\end{subequations}
Spinors also satisfy relations analogous to \eqref{eq:fusedortho}. Namely, when the shift in \eqref{eq:VectorSpinorRelationsa} is smaller than the prescribed one, the bilinear combination vanishes
\begin{equation}\label{eq:ProjectionRelations}
    \QDirac^{[-m]}C\Gamma^{I}\Projpm\QDirac^{[m]}=
    0\,, \quad |I| \leq \rank-2\,, \quad m \in   \{-\rank+2+|I|,\dots,\rank-2-|I|\} \,. 
\end{equation}
In particular $\QDirac$ is a pure spinor, that is, it satisfies
\begin{equation}
\label{eq:PureSpinor}
    \QDirac C \Gamma^I \Projpm \QDirac = 0\,,
    \quad
    |I| \leq \rank-2\,.
\end{equation}
Relations of type \eqref{eq:ProjectionRelations} appeared for $\so_6\!\subset\! \mathfrak{osp}_{6|4}$ in the study of the AdS$_4$/CFT$_3$ quantum spectral curve~\cite{Bombardelli:2017vhk}. 

Finally, there is also a non-vanishing inner product between vectors and  between spinors when the appropriate shift of the spectral parameter is included:
\begin{subequations}
\label{eq:Quant}
\begin{align}
\label{eq:QuantVector}
    &\left(\hqV^{I}\right)^{[\rank-1]}\hqV_{I}^{[-\rank+1]} = (-1)^{|I|(\rank+1)}\,,
    \\
\label{eq:QuantSpinor}
    &\hQDirac^{[-\rank+1]} C\Projpm \hQDirac^{[\rank-1]} = 1\,. 
\end{align}
\end{subequations}
The values of shifts $\pm(\rank-1)$ here are related to the fact that $2\rank-2$ is the Coxeter number of $\so_{2r}$. Summation over a multi-index $I$ is defined to carry a normalization factor of $\frac{1}{|I|!}$, alternatively the sum is only over ordered multi-indices. The same summation convention shall be used below for multi-indices $A,B,\ldots $ that are defined to feature only positive entries.

Finally, to avoid misinterpretations, we spell out how the above-encountered expressions of type $\Psi_1 C \Gamma^I \Psi_2$ are decoded for the sake of explicit computations: $\Psi_1$ is transposed to be a row-vector (boldly, without doing extra complex conjugation or other involution), and then the standard row-times-column multiplication is performed to get a number which depends on $I$ and  which is the component of the corresponding rank-$|I|$ antisymmetric tensor.

\section{Pure spinor parameterisation}\label{sec:PSparameterisation}
\subsection{Cartan decomposition \label{sec:Cartandecomp}}
Consider the $\gl_\rank$ subalgebra of $\so_{2\rank}$ corresponding to removing the $r$-th node of the Dynkin diagram (but keeping the whole Cartan subalgebra) and decompose spinor representations into the irreps \wrt $\gl_\rank$. This gives rise to what is known as $\mathsf{U}(\rank)$ or Cartan decomposition. In it, the spinor representation of the $(r-1)$-th node is a direct sum of the exterior forms of odd rank, and the spinor representation of the $r$-th node is a direct sum of the exterior forms of even rank. We shall denote these forms collectively as $\QPS_{(k)}$, $k=0,\dots,\rank$, the components of the $k$-form $\QPS_{(k)}$ are $\QPS_{a_1\dots,a_k}$ and we denote the basis $\dBasis^{A}$ for $A=a_1\ldots a_k$.  We think about $\dBasis^a$ as Grassmann variables and Dirac spinor is treated as a super-function:
\begin{equation}
   \QDirac = \sum^{\rank}_{k=0} \QPS_{(k)}\,,
   \quad{\rm where}\quad
   \QDirac_{(k)}=\QPS_{A}\dBasis^{A}\,,\quad |A|=k\,.
\end{equation}
   Furthermore we will raise and lower indices using Levi-Civita symbol so that
$
    \QPS_{A} = \QPS^{B}\epsilon_{BA}\,. 
$

The off-diagonal metric in \eqref{eq:TensorsConventions} implies that the $\Gamma$-matrices satisfy the fermionic oscillator algebra. It can be realised on the $\dBasis$-basis as 
\begin{subequations}
\label{eq:GammaAsOperators}
\begin{align}
\label{eq:14a}
    &\Gamma^{a}\,\dBasis^{B} = \theta^a\dBasis^{B}\,, 
    \quad
    \Gamma_{a}\,\dBasis^{B} = \partial_a\dBasis^{B}\,,
    \\
    &\Projpm\,\dBasis^{B} = \frac{1\pm(-1)^{|B|}}{2}\dBasis^{B}\,,
    \\
    &C\,\theta^{A} = (-1)^{\frac{(r-|A|)(r-|A|-1)}{2}}\,*\dBasis^A\,,    
\end{align}
\end{subequations}
where $\bar{A}$ is the complement of $A$ and $*\dBasis^A\equiv \epsilon^{A\bar{A}}\dBasis^{\bar{A}}$ is Hodge conjugation. 

As $\mathfrak{so}_{2\rank}$ is realised by commutator of $\Gamma$-matrices, action  \eqref{eq:14a} suggests the decomposition $\so_{2\rank}=\Lambda^2(\mathbb{C}^\rank)\oplus\mathfrak{u}(\rank)\oplus\Lambda^2(\mathbb{C}^{*\rank})$, where the three terms are spanned, respectively, by $\theta^a\theta^b$, $\theta^a\partial_b-\frac 12\delta^a{}_b$, $\partial_a\partial_b$. Additionally, one should think about $\theta^a\partial_b$ with $a>b$ and $\partial_a\partial_b$ as raising operators, $\theta^a\partial_a-\frac 12$ for $a=1,\ldots,\rank$ as Cartan subalgebra, and the rest as lowering operators. It follows that the highest-weight components of the Q-vectors are
\begin{equation}
\label{eq:419}
    \PSes = Q_{(r)}^{1}\,,
    \quad
    \QPS_{\rank} = Q_{(r-1)}^{1}\,.
\end{equation}

In the introduced parameterisation,  $\Psi_1 C \Gamma^I \Psi_2$ (where $\Psi_1,\Psi_2$ are column-vectors) gets the meaning of the top component of the form $\Psi_1\wedge *\, C\Gamma^I \Psi_2$ (where $\Psi_1,\Psi_2$ are super-functions).

We shall now use the introduced exterior algebra notation to solve explicitly the pure spinor conditions. To this end, one computes
\begin{subequations}
\label{eq:448}
\begin{align}
\label{eq:448a}
0=\epsilon_{AB}\left(\QDirac\,C\Gamma^{A}\Gamma^{\pm}\QDirac\right)\dBasis^{B}  
&=
(-1)^{{r-|A|} \choose 2}\sum\limits_{\substack{k =0 \\k\ \rm{even/odd}}}^{\rank-|A|}(-1)^{\frac{k(k-1)}{2}}\QPS_{(\rank-|A|-k)} \wedge \QPS_{(k)}\,,& r-|A| &\geq 2\,,
\\
\label{eq:448b}
0=\epsilon_{AB}\left(\QDirac\,C\Gamma^{Ac}{}\Gamma_{c}\Gamma^{\pm}\QDirac\right)\dBasis^{B}  
&=
(-1)^{{r-|A|} \choose 2}\sum\limits_{\substack{k =0 \\k\ \rm{even/odd}}}^{\rank-|A|}(-1)^{\frac{k(k-1)}{2}}k\,\QPS_{(\rank-|A|-k)} \wedge \QPS_{(k)}\,,& r-|A|&\geq 4\,.
\end{align}    
\end{subequations}
Here in the second line we use a non-antisymmetrised expression $\Gamma^{Ac}\Gamma_{c}$ which is acceptable since its difference with $\Gamma^{Ac}{}_{c}$ is $\frac 12\Gamma^A$ for which the first line can be used. It is easy to solve \eqref{eq:448} explicitly  for all $\QPS_{(k)}$ with $k\geq 3$ in terms of $\Psi_{(k)}$ for $k=0,1,2$. We illustrate the first two steps. Taking even $k$ and $r-|A|=3$ in \eqref{eq:448a}, one gets $\Psi_{(3)}\Psi_{\es}-\Psi_{(1)}\wedge\Psi_{(2)}=0$, while  $r-|A|=4$ leads to $\Psi_{(2)}\wedge\Psi_{(2)}-2\,\Psi_{\es}\Psi_{(4)}=0$. For $r-|A|\geq 4$ it is sufficient to use only \eqref{eq:448b}. The full solution is
\begin{align}
\label{eq:PureSpinorCondition}
\boxed{
    \QPS_{(2n)} = \frac{1}{n!}\frac{\QPS_{(2)}\wedge \QPS_{(2)} \wedge \dots \wedge \QPS_{(2)}}{\PSes^{n-1}}\,,\quad \QPS_{(2n-1)} = \frac{1}{(n-1)!}\frac{\QPS_{(1)}\wedge \QPS_{(2)}\wedge \dots \wedge \QPS_{(2)}}{\PSes^{n-1}}
    }\,.
\end{align}
This is a well-known property of pure spinors that can be found already in the book of Cartan \cite{Cartan:104700}. Its meaning becomes transparent after we rewrite it as follows:
\be\label{eq:exp}
Q_{(r)}=\sum_{n=0}^{[r/2]}\QPS_{(2n)}=e^{\frac{\QPS_{(2)}}{\QPS_{\es}}}\,\QPS_{\es}\,,\quad Q_{(r-1)}=\sum_{n=0}^{[(r-1)/2]}\QPS_{(2n+1)}=e^{\frac{\QPS_{(2)}}{\QPS_{\es}}}\,\QPS_{(1)}\,.
\ee
Notice that $\frac{\QPS_{(2)}}{\QPS_{\es}}\in\Lambda^2(\mathbb{C}^\rank)$ and so its exponent can be viewed as a group element.  Hence, since $\Psi_{(1)}\in e^{\mathfrak{u}(\rank)}\theta^\rank$, \eqref{eq:exp} states that Weyl spinors $Q_{(r)},Q_{(r-1)}$ are on the group orbit of the corresponding highest-weight vectors. Being on such an orbit is the universal definition of a pure vector for any choice of a Lie group and its irrep. Furthermore, a group element acting on the two highest-weight vectors of $L(\omega_{r-1})$ and $L(\omega_{r})$  to produce $Q_{(r)}$, $Q_{(r-1)}$ can be chosen to be the same one, this is just saying differently that  two pure Weyl spinors are combined into one pure Dirac spinor~$\Psi$. 
 
 This interpretation allows us to also argue why all relations \eqref{eq:PureSpinor} are satisfied even though we only used their subset to derive \eqref{eq:PureSpinorCondition}. Indeed, the two highest-weight vectors obviously satisfy \eqref{eq:PureSpinor} and then, by covariance, any spinors on the group orbit of the highest-weight vectors will satisfy \eqref{eq:PureSpinor} as well. A direct combinatorial derivation of \eqref{eq:PureSpinor} from \eqref{eq:PureSpinorCondition} shall be given later as part of Section~\ref{sec:DerivationOfExtendedQsystem}.

\subsection{Tensors from spinors}
We concluded in the previous subsection that all components of spinor Q-functions can be expressed through $\Psi_{\es},\Psi_{a},\Psi_{ab}$.
We shall now derive that all tensor Q-functions $V^{I}$ can be expressed through $\Psi_{\es},\Psi_{a},\Psi_{ab}$ as well. This time, in contrast to pure spinor relations \eqref{eq:PureSpinor}, we will use equations that are non-local in spectral parameter. In other words, we need the fact that the extended Q-system is not just a flag but a fused flag.

A typical fused combination we shall exploit is 
\begin{equation}\label{eq:GammaToForms}
\begin{split}
    \epsilon_{AB}\left(\QDirac^{[-m]}C\Gamma^{A}\Gamma^{\pm}\QDirac^{[m]}\right)\dBasis^{B}  =\sum_{\substack{k=0\\r-|A|-k\ {\rm even/odd}}}^{\rank-|A|}(-1)^{\frac{k(k-1)}{2}}\QPS^{[m]}_{(\rank-|A|-k)} \wedge \QPS^{[-m]}_{(k)}\,.
\end{split}    
\end{equation}
A particularly important case is to consider a compatibility condition following from  \eqref{eq:VectorSpinorRelations} and to rewrite it using  \eqref{eq:GammaToForms}:
\begin{equation}\label{eq:Consistency}
\begin{split}
    \QDirac^- C\Gamma^{\fullset/\{a,b\}}\Projp \QDirac^+ =\QDirac^- C\Gamma^{\fullset/\{a,b\}}\Projm \QDirac^+ \quad
    \implies \quad W(\QPS_{ab},\PSes) = W(\QPS_a,\QPS_b)\,.
\end{split}
\end{equation}

Relation \eqref{eq:Consistency} together with \eqref{eq:PureSpinorCondition} allow rewriting  combination \eqref{eq:GammaToForms} in a remarkable way. Consider  the following example: set $r-|A| = 4$ in \eqref{eq:GammaToForms} and choose the case $\Projp$, then the \rhs of \eqref{eq:GammaToForms}  becomes
\be
\begin{aligned}
    \QPS^{[m]}_{(4)}\PSes^{[-m]} - \QPS^{[m]}_{(2)}\wedge \QPS^{[-m]}_{(2)} + \PSes^{[m]}\wedge \QPS^{[-m]}_{(4)}
    &= \frac{\PSes^{[m]}\PSes^{[-m]}}{2}\left(\left(\frac{\QPS_{(2)}}{\PSes}\right)^{[m]}-\left(\frac{\QPS_{(2)}}{\PSes}\right)^{[-m]}\right)^2\\
    &= \begin{cases}
        0 &  m=0,1,2\,, \\
        \frac{\QPS^{[3]}_{(1)}\wedge\QPS^{[1]}_{(1)} \wedge \QPS^{[-1]}_{(1)}\wedge \QPS^{[-3]}_{(1)}}{\PSes^{[2]_{\sh}}} & m=3\,,
        \end{cases}             
\end{aligned}
\ee
where we used \eqref{eq:Consistency} in the form notation: $\QPS^+_{(2)}\PSes^--\QPS^-_{(2)}\PSes^+ = \QPS^+_{(1)} \wedge \QPS^-_{(1)}$.

In general, using the same method, one finds
\be\label{eq:GammaAsForms}
\begin{aligned}
    \epsilon_{AB}\QDirac^{[-m]} C\Gamma^{A} \Projpm \QDirac^{[m]} \dBasis^{B}=
    \begin{cases}
        0\,, \quad m=0,1,\dots,\rank-2-|A| \\
       \frac{\QPS_{(1)}^{[\rank-|A|-1]}\wedge \QPS_{(1)}^{[\rank-|A|-3]}\dots \wedge \QPS_{(1)}^{[-\rank+|A|+1]}}{\PSes^{[\rank-2-|A|]_{\sh}}}\,,
        \quad
        m=\rank-1-|A|. 
        \end{cases}
\end{aligned}
\ee
The last relation suggests introducing a $\gl_\rank$ Q-system. Indeed, recall that by such a system we mean a collection of functions $\glQ_A$ that satisfy QQ-relations
\be
\label{eq:QQoriginal}
W(\glQ_{Aa},\glQ_{Ab})=\glQ_{Aab}\,\glQ_{A}\,.
\ee
These relations are solved in terms of single-index Q-functions and $\glQ_{\es}$: $\glQ_{A}=\glQ_{\es}^{-[|A|-1]_{\sh}}W(\glQ_{a_1},\glQ_{a_2},\ldots,\glQ_{a_{|A|}})$, and an analogous Wronskian determinant is what appears on the \rhs of \eqref{eq:GammaAsForms}. Note also that Hodge-dual functions $\glQ^A=\epsilon^{AB}\glQ_B$ also form a Q-system. The notations and main features of $\gl_\rank$ Q-systems suitable for arbitrary rank including their supersymmetric generalisation are due to Tsuboi \cite{Tsuboi:2009ud}, for a review and further details see e.g. \cite{Kazakov:2015efa}.

We shall label elements of the relevant for us $\gl_\rank$ Q-system as $\Qq_A$ and $\Qq^A$ defined as follows
\be\label{eq:QPsiMap}
\begin{aligned}
    &\Qq_{a}=\Pso\,\QPS_{a}\,, 
    \quad
    \Qq_{A} = W(\Qq_{a_1},\Qq_{a_2},\dots,\Qq_{a_{|A|}})\,, 
    \\
    &\Qq_{\emptyset}=1\,,\quad \Qq^{A} = \epsilon^{AB}\Qq_{B}\,,
\end{aligned}
\ee
where $\Phi\equiv \frac{1}{\PSes}$. We can think about $\Phi$ as the quantity used to rescale the projective coordinates $\Psi\to \Phi\, \Psi$ such that $\PSes\to 1$. Note that \eqref{eq:DynkinDiagramQQ} implies a very concrete normalisation of $\Psi$, \cf \eqref{eq:Quant}, and the rescaling with $\Pso$ does not preserve it. Hence we use the new notation $\Qq_a$ for the rescaling of $\Psi_a$, note also that $\Qq_{A}$ is defined through Wronskian determinant and hence {\it is not} a rescaling of $\Psi_{A}$ for $|A|>1$. Further benefits of this alternative normalisation are: First, the character limit of $\Qq_{a}$ coincides with the one of the genuine Q-functions of a $\gl_{\rank}$ spin chain; Second, another useful combination $\Psi_{ab}\to \Psi_{ab}/\Psi_{\es}$ is formed. We encountered it several times already, for instance in \eqref{eq:exp}, and shall give it a special labelling as well: $\mu_{ab}\equiv \Psi_{ab}/\Psi_{\es}$. To avoid possible confusion, we point out that all  relevant rescaled quantities received a new labelling and so functions $\Psi_{A}$ remain in their original normalisation in what follows.

\medskip
\noindent
Applying \eqref{eq:GammaAsForms} to \eqref{eq:VectorSpinorRelations} with $I=A$, we obtain
\begin{align}\label{eq:VectorUpperFromSpinor}
   &\hQV^{A} = \frac{{\hQq^{A}}}{ \Pso^{[r-1-|A|]}\Pso^{[-r+1+|A|]}}\,.
\end{align}

We remind that $A=a_1\ldots a_{|A|}$ is a multi-index with all entries being positive, and so \eqref{eq:VectorUpperFromSpinor} does not produce all components of $V^I$, in particular not all $V^i$. Let us now compute $V_{a} = V^{-a}$. To this end we notice that pure spinor conditions \eqref{eq:PureSpinorCondition} can be rewritten as
\begin{subequations}
\label{eq:PSDifferential}
\begin{align}
\label{eq:PSDifferentialEven}
    &(\partial_a - \mu_{ab}\theta^{b})\Gamma^+\Psi = 0\,,\\
\label{eq:PSDifferentialOdd}
    &(\partial_a- \mu_{ab}\theta^b)\Gamma^{-}\Psi = \Qq_a\,\Gamma^+ \Psi\,.
\end{align}
\end{subequations}
Now replacing Grassmann-variables and their derivatives with $\Gamma$-matrices as in \eqref{eq:GammaAsOperators} we can find $V_{a}$ as follows
\begin{equation}\label{eq:LowerVPsi}
    V_{a} = \Psi^{[-r+2]}C\Gamma_{a}\Gamma^\pm\Psi^{[r-2]}  =  \mu_{ab}^{[r-2]}\Psi^{[-r+2]}C\Gamma^{b}\Gamma^\pm\Psi^{[r-2]} = \mu_{ab}^{[r-2]}\, V^{b}\,,
\end{equation}
notice that for the choice of $\Gamma^-$ in $\Gamma^{\pm}$ the derivation additionally relies on \eqref{eq:ProjectionRelations} for $I=\es$.

The overall shift of $\mu_{ab}$ in \eqref{eq:LowerVPsi} is not uniquely fixed. To see this we use that \eqref{eq:Consistency}, conveniently written as
\begin{equation}   
\label{eq:shiftmutrick}
    \mu_{ab}^{[m]} = \mu_{ab}^{[m\pm 2]} \mp W(\Qq_{a},\Qq_{b})^{[m\pm 1]}\,,
\end{equation}
allows us to shift $\mu_{ab}$ at the cost of introducing terms involving $\Qq^{[m]}_{a},m\in\{-r+2,\dots,r-2\}$. However, when the shift of $\mu_{ab}$ is restricted to $[-r+2,r-2]$, these additional terms vanish due to
\be\label{eq:QuantisationSUN}
    \hQq_{b}^{[m]}\hQq^{b} = \begin{cases}
                           0\,,\quad m=-r+2,-r+4,\dots,r-2\,, \\
                            (\pm 1)^{\rank-1}\hQq_{\fullset}^{\pm}\,,\quad m=\pm\rank\,.
                           \end{cases}\,,
\ee
which is a direct consequence of definitions \eqref{eq:QPsiMap}.

Let us summarise the resulting expressions including also their inverse:
\begin{subequations}
\label{eq:VectorLowerFromSpinor}
\begin{align}
\label{eq:VectorLowerFromSpinora} 
   \qV_{a} &= \mu_{ab}^{[m]}V^b\,, 
   &
   \mu_{ab} &\equiv \Pso\,\QPS_{ab}\,,
  \\
\label{eq:VectorLowerFromSpinorInverse}
  \qV^{a} &=
   (\mu^{ab})^{[m]}\qV_{b} \,,
   &
   \mu^{ab} &\equiv -\frac{1}{\QPS_{\fullset}}\,\QPS^{ab}\,,
\end{align}
\end{subequations}
for $m\in \{-\rank+2,-\rank+4,\dots,\rank-2\}$. 
For even $\rank$, $\mu^{ab}$ is the inverse of $\mu_{ab}$. For odd $\rank$, the determinant of $\mu_{ab}$ vanishes and consistency of \eqref{eq:VectorLowerFromSpinor} is less obvious. In this case we have to use first $\mu^{ab}\mu_{bc} = \delta^{a}{}_{c} - \frac{\Psi^{a}\Psi_{c}}{\Psi_{\es}\Psi_{\fullset}}$ and then once again \eqref{eq:QuantisationSUN}.

Finally, we also want to express $V^{I}$ with any multi-index $I$ using only spinors. To this end represent $V^I$ in the form $V_{A}{}^{B}$ by lowering the negative indices according to \eqref{eq:ama}. Then we systematically use local pure spinor properties \eqref{eq:PSDifferential} and fused pure spinor properties \eqref{eq:ProjectionRelations}  and \eqref{eq:VectorSpinorRelations}. As an example let us calculate $V_{a}{}^{b}$:
\begin{equation}
    \begin{split}
        V_{a}{}^{b} &=  \Psi^{[-r+3]}C\Gamma_{a}{}^{b}\Gamma^- \Psi^{[r-3]} \\
        &=-\Psi^{[-r+3]}C \Gamma^{b}\Gamma_a \Gamma^- \Psi^{[r-3]} + \frac{\delta^{a}_{b}}{2} \underbrace{\Psi^{[-r+3]}C\Gamma^-\Psi^{[r-3]}}_{=0} \\
        &=\mu_{ac}^{[r-3]} \Psi^{[r-3]}C\Gamma^{cb}\Gamma^-\Psi^{[r-3]} - \Qq_a^{[r-3]}\underbrace{\Psi^{[-r+3]}C\Gamma^b\Gamma^+\Psi^{[r-3]}}_{=0} \\
        &=\mu^{[r-3]}_{ac}\frac{\bP^{cb}}{\Phi^{[r-3]}\Phi^{[-r+3]}}\,.
    \end{split}
\end{equation}
This computation generalises directly to any number of indices, we have
\begin{equation}
    V_{A}{}^{B} = \mu^{[r-1-|A|-|B|]}_{AA'}\Psi^{[-r+1+|A|+|B|]}C\Gamma^{A'B}\Gamma^\pm\Psi^{[r-1-|A|-|B|]} + \dots\,,
\end{equation}
where $\mu_{AA'}$ is the determinant $|\mu_{aa'}|_{\substack{a\in A\\a'\in A'}}$ and all suppressed terms will vanish. In summary, we arrive at the compact expression 
\begin{align}
\label{eq:VQPsi}
\boxed{
\hQV_{A}{}^{B}=\frac{\mu_{AA'}^{[m]}\,\hQq^{A'B}}{\Pso^{[r-1-|A|-|B|]}\Pso^{[-r+1+|A|+|B|]}}
}\,,
\end{align}
where $m\in \{-\rank+1+|A|+|B|,\dots,\rank-1-|A|-|B|\}$, the admissible range of $m$ follows from the same logic that lead to \eqref{eq:VectorLowerFromSpinor}. The formula is valid for $|A|+|B|\leq \rank-1$, \ie for all $V^{I}$ defining isotropic hyperplanes. Recall that~\cite{Ekhammar:2020enr}~self- and anti-self-dual hyperplanes are not given by $V^{I}$, $|I|=\rank$, but by $\Psi C\Gamma^{(\rank)}\Gamma^{\pm}\Psi$---a combination which, in contrast to \eqref{eq:VectorSpinorRelations}, has $\Psi$ without shifts.

\medskip{}
\noindent
We also notice that $V^{\emptyset}=1$ which means that $\hQq^{\emptyset}=\hQq_{\fullset}=W(\hQq_1,\ldots,\hQq_\rank)=\Pso^{[r-1]}\Pso^{[-r+1]}$, \cf \eqref{eq:VectorUpperFromSpinor}.
\subsection{Pure spinor Q-system and Wronskian Bethe equations}
We came to conclusion that relations \eqref{eq:VQPsi} and \eqref{eq:PureSpinorCondition} compute all Q-functions of the extended Q-system directly from $\Psi_{\emptyset},\Psi_{a},\Psi_{ab}$, that is from $1+\rank+{\rank\choose 2}$ functions. We however also know that the functional freedom in a Q-system is equal to the rank of the Lie algebra and thus there should be $1+{\rank\choose 2}$ relations between functions of the triple. These relations can be indeed found in the above-introduced zoo of formulae. They play the central role for our study:
\begin{center}
\begin{tcolorbox}[title=Wronskian Bethe equations,top=-2.5mm,width=80mm]
\begin{subequations}\label{eq:SpinorQSystem}
    \begin{align}
        \label{eq:DeterminantCondition}
        &W(\QPS_1,\ldots,\QPS_\rank) = \PSes^{[\rank-2]_{\sh}}\,,\\
        \label{eq:U1Conditions}
      &W(\QPS_a,\QPS_b)=W(\QPS_{ab},\PSes)\,.
    \end{align}
\end{subequations}
\end{tcolorbox}
\end{center}
\noindent Here the first line is yet another way to write \eqref{eq:VectorUpperFromSpinor} for $A=\emptyset$, and the second line is \eqref{eq:Consistency}. 

 At least formally, for any choice of $\QPS_a$, \eqref{eq:SpinorQSystem} can be always solved for $\PSes$ and $\QPS_{ab}$ and so one may wonder how the above equations contain any interesting information. Without imposing requirements on  analytic properties of Q-functions, these equations are indeed essentially non-constraining. But our attention at this stage is different: that they imply all other relations of the extended Q-system, so the functional freedom is indeed $\rank$ independent functions as expected. This is the subject of Sections~\ref{sec:DerivationOfExtendedQsystem} and \ref{sec:Uniquenss}. Eventually, when one imposes analytic restrictions on Q-functions to describe the desired physics, as in Section~\ref{sec:RationalSpinChains}, equations \eqref{eq:SpinorQSystem} become far from trivial.

\medskip
\noindent
It is interesting to juxtapose \eqref{eq:U1Conditions} with QQ-relations \eqref{eq:QQoriginal} of $\gl_{\rank}$ Q-system. To elaborate even further, we notice a more general relation that holds as well
\begin{center}
\begin{tcolorbox}[title=Pure spinor Q-system,top=-2.5mm,width=80mm]
\be
\label{eq:PsiPsi}
W(\Psi_{Aa},\Psi_{Ab})=W(\Psi_{Aab},\Psi_{A})\,.
\ee
\end{tcolorbox}
\end{center}
This relation is in fact rewriting of $W(\hQ_{(r-1)}^{\weyl(1)},\hQ_{(r-1)}^{\weyl(2)})=W(\hQ_{(r)}^{\weyl(1)},\hQ_{(r)}^{\weyl(2)})$, a consequence of \eqref{eq:DynkinDiagramQQ}, using the exterior algebra parameterisation \cf Appendix~\ref{sec:Weyl}.

\medskip
\noindent
We shall call {\it pure spinor Q-system} a collection of functions $\Psi_{A}$ that satisfy \eqref{eq:PsiPsi}. Curiously, we do not need to even require Cartan formulae \eqref{eq:PureSpinorCondition} as they arrive naturally as a consequence of \eqref{eq:PsiPsi}:
\begin{lemma}[Solution of pure spinor Q-system]
\label{leamma:purespinor}
If $\Psi_{\emptyset},\Psi_{a},\Psi_{ab}$ satisfy \eqref{eq:U1Conditions} and $\Psi_{\emptyset}\neq 0$ then $\Psi_A$ computed via \eqref{eq:PureSpinorCondition} solve \eqref{eq:PsiPsi}. Moreover, this solution is unique provided combinations $W(\Psi_{Aa},\Psi_{Ab})$ are not zero.
\end{lemma}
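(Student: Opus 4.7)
The lemma has two parts that I would prove separately: existence (the $\Psi_A$ computed from the Cartan formula solve \eqref{eq:PsiPsi}) and uniqueness.

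\textbf{Existence.} The first step is to use the gauge covariance $\Psi_A \to g\Psi_A$---which preserves both the Cartan formula \eqref{eq:PureSpinorCondition} and the target \eqref{eq:PsiPsi} (each side of \eqref{eq:PsiPsi} picks up the same factor $g^+g^-$)---to normalise $\PSes = 1$. In this gauge $\Psi_{ab} = \mu_{ab}$, the input \eqref{eq:U1Conditions} reads $W(\Psi_a, \Psi_b) = \mu_{ab}^+ - \mu_{ab}^-$, and the Cartan formula collapses to the Pfaffian expressions $\Psi_A = \mathrm{Pf}(\mu|_A)$ for $|A|$ even and $\Psi_A = \sum_{c\in A}(-1)^{\mathrm{pos}(c)-1}\Psi_c\,\mathrm{Pf}(\mu|_{A\setminus c})$ for $|A|$ odd. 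I would then verify \eqref{eq:PsiPsi} by induction on $|A|$, expanding $\Psi_A, \Psi_{Aa}, \Psi_{Ab}, \Psi_{Aab}$ through these formulae, substituting $\mu_{ab}^+ - \mu_{ab}^- = W(\Psi_a, \Psi_b)$ wherever a difference of shifts of $\mu$ appears, and reducing what remains to Pfaffian Laplace-type identities. A more conceptual alternative is to package all $\Psi_A$ into the generating function $\QPS(\theta) = e^{\mu(\theta)}(1+\chi(\theta))$ with $\mu(\theta) = \tfrac{1}{2}\mu_{ab}\theta^a\theta^b$ and $\chi(\theta) = \Psi_a\theta^a$, observe that the hypothesis is equivalent to the two-variable identity
\begin{equation*}
\chi^+(\theta_1)\chi^-(\theta_2) - \chi^-(\theta_1)\chi^+(\theta_2) = (\mu_{ab}^+ - \mu_{ab}^-)\,\theta_1^a\theta_2^b\,,
\end{equation*}
and read off \eqref{eq:PsiPsi} from a factorised expression for $\QPS^+(\theta_1)\QPS^-(\theta_2) - \QPS^-(\theta_1)\QPS^+(\theta_2)$ obtained by splitting the symmetric and antisymmetric parts of its exponent under $\theta_1 \leftrightarrow \theta_2$.

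\textbf{Uniqueness.} I argue by strong induction on $|A|$: two solutions $\Psi, \tilde\Psi$ of \eqref{eq:PsiPsi} agreeing for $|A|\le k$ (with $k\ge 2$) necessarily agree for $|A|=k+1$. Fix $C$ with $|C|=k+1$. For each subset $A\subset C$ of size $k-1$, writing $\{a,b\}=C\setminus A$, equation \eqref{eq:PsiPsi} together with the inductive hypothesis yields $W(\Psi_C-\tilde\Psi_C, \Psi_A) = 0$; being a first-order finite-difference relation, this forces $\Psi_C - \tilde\Psi_C = h_A\,\Psi_A$ for some $h_A$ with $h_A^+ = h_A^-$. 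Choosing two such subsets that differ in exactly one element---say $A_1 = A_0 \cup \{x_1\}$ and $A_2 = A_0 \cup \{x_2\}$ with $|A_0| = k-2$ and $x_1 \ne x_2$---gives $h_{A_1}\Psi_{A_1} = h_{A_2}\Psi_{A_2}$. If both $h$'s were non-zero, the ratio $h_{A_2}/h_{A_1} = \Psi_{A_1}/\Psi_{A_2}$ would also satisfy $f^+ = f^-$, equivalent to $W(\Psi_{A_1}, \Psi_{A_2}) = W(\Psi_{A_0 x_1}, \Psi_{A_0 x_2}) = 0$, contradicting the non-vanishing hypothesis. Hence $h_{A_i} = 0$, $\Psi_C = \tilde\Psi_C$, and the induction closes. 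The hypothesis also ensures $\Psi_A \ne 0$ at every level, since otherwise $W(\Psi_{Aa}, \Psi_{Ab}) = W(\Psi_{Aab}, \Psi_A)$ would vanish.

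The main obstacle I anticipate is in the existence step: the generating-function manipulation demands careful Grassmann-sign bookkeeping between the two $\theta$-sets, and the alternative Pfaffian induction requires systematic accounting of signs arising in $\mathrm{Pf}(\mu|_{A\setminus c})$ expansions. Uniqueness, by contrast, is essentially automatic once one recognises \eqref{eq:PsiPsi} as first-order in shifts, so that its two-parameter ambiguity at each level is killed by any two non-degenerate instances supplied by the hypothesis.
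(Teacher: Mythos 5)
Your uniqueness argument is correct and is essentially the paper's own: the paper runs it as a minimal-counterexample rather than an upward induction, writes the ambiguity as $\delta\Psi_{Aabc}=h_{bc}\Psi_{Aa}$, uses antisymmetry of $\delta\Psi$ to get a second representation $h_{ca}\Psi_{Ab}$, and kills $h$ against the non-vanishing of $W(\Psi_{Aa},\Psi_{Ab})$ --- the same mechanism as your ``two overlapping subsets'' step. Your observation that the hypothesis forces $\Psi_A\neq 0$ at every level is also the right way to justify the division.

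The existence half, however, has a genuine gap: it is a plan, not a proof, and the part you defer is precisely where all the work lies. Your first route (``expanding through the Pfaffian formulae \ldots and reducing what remains to Pfaffian Laplace-type identities'') never exhibits the identities or shows that the shift-mismatched terms generated by substituting $\mu_{ab}^+-\mu_{ab}^-=W(\Psi_a,\Psi_b)$ actually cancel; this is nontrivial because \eqref{eq:PsiPsi} at level $k$ mixes Pfaffians evaluated at $u^+$ and $u^-$ of a matrix that is \emph{not} the same at the two points. Your generating-function alternative has a concrete obstruction you do not address: \eqref{eq:PsiPsi} is not the vanishing of a single coefficient of $\QPS^+(\theta_1)\QPS^-(\theta_2)-\QPS^-(\theta_1)\QPS^+(\theta_2)$, but an equality between the coefficients of two \emph{different} monomials, $\theta_1^{Aa}\theta_2^{Ab}$ and $\theta_1^{Aab}\theta_2^{A}$; a factorised form of that antisymmetrised product therefore cannot simply be ``read off'' without an additional contraction identifying these sectors, which you do not supply. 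For comparison, the paper's proof spends most of its length exactly here: it Hodge-dualises to reduce \eqref{eq:PsiPsi} to a statement about top forms (so that genuine Pl\"ucker identities apply), proves the key intermediate identity $\Psi_{(2n+2)}^+-\Psi_{(2n+2)}^-=\Psi_{(1)}^+\wedge\Psi_{(1)}^-\wedge\Psi_{(2n)}^{\pm}$ by a separate induction, decomposes $\mu^-$ into eigenvector pairs $\Upsilon_{\pm\alpha}$ to control the wedge powers, and only then applies Pl\"ucker relations, with the even and odd cases handled separately. Some version of these steps (or a completed Pfaffian calculus replacing them) is needed before the existence claim can be considered proved.
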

This lemma is the analog of the fact in $\gl_{\rank}$ case that $\glQ_{A}=W(\glQ_{a_1},\glQ_{a_2},\ldots,\glQ_{a_{|A|}})$ unambiguously solves \eqref{eq:QQoriginal} provided $\glQ_{A}\neq 0$ (for simplicity, we have set $\glQ_\es=1$). The proof in $\gl_{\rank}$ case relies on using \Plucker identities or equivalently relations between minors of Wronski matrix (a matrix whose entries are $\glQ_{a}^{[2n]}$), see \eg\cite{Kazakov:2015efa}. The proof in the pure spinor case is similar in spirit but it uses additionally \eqref{eq:U1Conditions} and overall is considerably longer, we provide it in Appendix~\ref{app:techproofs}. If using minors in $\gl_{\rank}$ case is a determinant-based computation, for pure spinor system we use techniques typical for operations with Pfaffians.

\medskip
\noindent
To better understand the role of \eqref{eq:DeterminantCondition}, we note that relations \eqref{eq:PsiPsi} are projective in the sense that rescaling $\Psi_{A}=g\,\Psi_{A}$ with arbitrary function $g(u)$ is their symmetry. An analogous rescaling in the $\gl_{\rank}$ case is $\glQ_A\to g^{[\,|A|\,]_\sh}\, \glQ_A$. Fixing normalisation is necessary and note that it removes one functional degree of freedom, then one speaks about $\sl_\rank$ system instead of $\gl_\rank$. A way to go is to set $W(\glQ_1,\ldots,\glQ_r)=1$. By itself the requirement $W(\glQ_1,\ldots,\glQ_r)=1$ is of no particular meaning but combined with concrete analytic Ansatz on Q-functions it becomes a way to describe Bethe algebra, it is in fact the only relation needed for a spin chain with sites in vector representations \cite{MTV},\cite{Chernyak:2020lgw} replacing the need of using nested Bethe equations; for this reason we refer to $W(\glQ_1,\ldots,\glQ_r)=1$ as $\sl_{\rank}$ Wronskian Bethe equations\footnote{A pedantic way would be to call Wronskian Bethe equations the equations that come in the consequence of an analytic Ansatz on Q-functions applied to $W(\glQ_1,\ldots,\glQ_r)=1$. We found it convenient to lift the name directly up to $W(\glQ_1,\ldots,\glQ_r)=1$ as it is a suitable place holder for a variety of different Ansatze to try.}.

Setting the normalisation for $\Psi_A$ is done to get the agreement with the extended Q-system. The latter was introduced with fixed normalisation as is clearly visible in \eqref{eq:Quant}. The requirement of compatibility in normalisations is realised by \eqref{eq:DeterminantCondition} and so we may consider it as the equivalent of  $W(\glQ_1,\ldots,\glQ_r)=1$, a proposal that was already made in Section~7 of \cite{Ferrando:2020vzk}, equation~(7.2) there. However, we observe that it is not enough to use only \eqref{eq:DeterminantCondition} and the model-specific analytic input to fix the Bethe algebra. One of the messages of our work,  Section~\ref{sec:RationalSpinChains}, is that \eqref{eq:DeterminantCondition} and \eqref{eq:U1Conditions} together form a system that properly describes the Bethe algebra. It then makes sense to collectively refer to \eqref{eq:SpinorQSystem} as $\so_{2\rank}$ Wronskian Bethe equations.

\medskip
\noindent
Instead of $\Psi_{\emptyset},\Psi_{a},\Psi_{ab}$, we can also select $\Phi,\Qq_a,\mu_{ab}$  as our main functions thus forming
\begin{center}
\begin{tcolorbox}[title=The $\mathfrak{so}_{2r}$ ${\bf P}\mu$-system,top=-2.5mm,width=80mm]
\begin{subequations}\label{eq:PmuSystem}
    \begin{align}
        \label{eq:PmuSystema}
        &W({\bf P}_1,\ldots,{\bf P}_\rank) = \Phi^{[r-1]}\Phi^{[-r+1]}\,,\\
        \label{eq:PmuSystemb}
       &W({\bf P}_a,{\bf P}_b)=\mu_{ab}^+-\mu_{ab}^-\,
    \end{align}
\end{subequations}
\end{tcolorbox}
\end{center}
\noindent which shares similarities with the ${\bf P}\mu$-system of AdS$_5$/CFT$_4$ \citep{Gromov:2013pga,*Gromov:2014caa}. One important difference is that raising/lowering indices using the `symplectic structure' $\mu_{ab}$ does not close the system on itself but rather expands it introducing the second half of vector Q-functions $V^i$ according to \eqref{eq:VectorLowerFromSpinor}.

\subsection{Deriving the extended Q-system}
\label{sec:DerivationOfExtendedQsystem}
The triple $\QPS_{\emptyset},\QPS_{a},\QPS_{ab}$ allows computing other Q-functions by sheer combinatorics, but even more remarkable is that all relations of the extended Q-system can be obtained as a consequence of \eqref{eq:SpinorQSystem}:
\begin{theorem}
\label{thm:derivingextendedQsystem}
If relations \eqref{eq:SpinorQSystem} for $\Psi_\es,\Psi_a,\Psi_{ab}$ are satisfied then components of Q-vectors $Q_{(a)}$ computed using \eqref{eq:VQPsi} and \eqref{eq:PureSpinorCondition}, $V_{(r-1)}$ computed using \eqref{eq:VQPsi}, and $V_{(r)}$ computed using \eqref{eq:TensorQ}~\footnote{Note that \eqref{eq:TensorQ} computes $V_{(r)}$ from $V_{(1)}\equiv Q_{(1)}$, and it is meant that $Q_{(1)}$ is the one computed from \eqref{eq:VQPsi}.} satisfy all the relations listed in 
Section~\ref{sec:FusedFlags}.
\end{theorem}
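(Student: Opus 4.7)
The plan is to verify each family of relations from Section~\ref{sec:FusedFlags} in an order tailored so that each step rests on the previous, using Lemma~\ref{leamma:purespinor} as the principal engine. Once the lemma is applied, the input triple $(\QPS_\es,\QPS_a,\QPS_{ab})$ extends to a full pure spinor Q-system, so the burden reduces to checking compatibility between that structure and the $\gl_r$ Q-system $\Qq_A$ built from $\Qq_a=\Pso\,\QPS_a$ via \eqref{eq:QPsiMap}.

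First I would apply Lemma~\ref{leamma:purespinor} to \eqref{eq:U1Conditions} to obtain all $\QPS_A$ via \eqref{eq:PureSpinorCondition} together with the pure-spinor Q-system relations \eqref{eq:PsiPsi}. The pointwise pure-spinor conditions \eqref{eq:PureSpinor} should then follow by direct manipulation: substituting \eqref{eq:PureSpinorCondition} into each bilinear $\QDirac C\Gamma^I\Gamma^\pm\QDirac$ yields a sum of wedge products of $\QPS_{(2)}/\PSes$ with itself, and this sum collapses by antisymmetry. This is the combinatorial derivation advertised immediately below \eqref{eq:PureSpinorCondition}.

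Next I would address the vector--tensor sector. With $V^A$ defined by \eqref{eq:VectorUpperFromSpinor} and $V_a$ by \eqref{eq:LowerVPsi}, the bilinear identifications \eqref{eq:VectorSpinorRelations} and their sub-maximal-shift vanishings \eqref{eq:ProjectionRelations} are \emph{the same} computation \eqref{eq:GammaAsForms} evaluated at different values of $m$; the only inputs beyond Step~1 are \eqref{eq:U1Conditions} repackaged as \eqref{eq:shiftmutrick}. The Wronskian identification \eqref{eq:TensorQ} for $|I|\leq r-2$ is automatic, since $\Qq_A$ is a $\gl_r$ Q-system by construction, and the isotropy \eqref{eq:fusedortho} reduces to the $\gl_r$ quantisation \eqref{eq:QuantisationSUN} after expressing $V_a$ through $\mu_{ab}$ and $V^b$. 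The quantisation relations \eqref{eq:Quant} follow from \eqref{eq:DeterminantCondition}: it identifies $\Qq_\fullset=W(\Qq_1,\ldots,\Qq_r)$ with $\Pso^{[r-1]}\Pso^{[-r+1]}$, giving \eqref{eq:QuantVector} after substituting into \eqref{eq:VectorUpperFromSpinor}; \eqref{eq:QuantSpinor} then arises from \eqref{eq:GammaAsForms} at maximal shift.

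The main obstacle is the Weyl-covariant extension of the QQ-relations \eqref{eq:DynkinDiagramQQ} to all $\weyl\in\Weyl$. At the spinor branch nodes they are \eqref{eq:PsiPsi} written in the Cartan basis; at other Dynkin nodes they reduce, for $\weyl=e$, to standard Plücker identities on the $\Qq_A$. For generic $\weyl$ one must re-parameterise $\QDirac$ in a Cartan decomposition compatible with $\weyl$ and check that the resulting new triple still satisfies an analogue of \eqref{eq:SpinorQSystem}. The cleanest strategy is to induct on the length of $\weyl$: show that a simple reflection preserves \eqref{eq:SpinorQSystem} up to relabelling and signs, so that the full Weyl orbit is generated. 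Managing the signs in \eqref{eq:DynkinDiagramQQ} and the reparameterisation of $\mu_{ab}$ under each simple reflection is where the tedious bookkeeping will live, presumably interacting with the material of Appendix~\ref{sec:Weyl}; everything else should reduce to the pure-spinor Q-system machinery already in place.
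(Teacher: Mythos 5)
Your overall architecture (Cartan formulae plus the $\gl_r$ Q-system built from $\Qq_a=\Pso\,\QPS_a$, with \eqref{eq:shiftmutrick} and \eqref{eq:QuantisationSUN} doing the shift bookkeeping) is the right one, but three steps you declare easy or automatic are exactly where the work lives. First, \eqref{eq:TensorQ} is \emph{not} automatic for multi-indices with negative entries: the $\gl_r$ Q-system only gives $V^A=W(V^{a_1},\dots,V^{a_{|A|}})$ for positive indices, and to get $V_A{}^B=W(V_{a_1},\dots,V^{b_1},\dots)$ with $V_a=\mu_{ab}^{[m]}V^b$ one must pull the $\mu$'s inside the Wronskian determinant, which works only because the admissible range of $m$ in \eqref{eq:VQPsi} lets each $\mu^{[m]}$ be aligned with the shift of the row it multiplies. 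Similarly, \eqref{eq:PureSpinor} and \eqref{eq:ProjectionRelations} for $I$ containing negative entries do not just ``collapse by antisymmetry'': there $\Gamma_a$ acts as $\partial_a$, and you need \eqref{eq:PSDifferential} to trade $\Gamma_a$ for $\mu_{ab}\Gamma^b$ (picking up a $\Qq_a$-term from \eqref{eq:PSDifferentialOdd}) before reducing to the positive-index case. Second, you never address \eqref{eq:VectorSpinorRelationsb}, the $|I|=r$ identification of $W(V^{i_1},\dots,V^{i_r})$ with $\Psi^-C\Gamma^{I}\Psi^+$; this lies outside the pattern of \eqref{eq:GammaAsForms} and requires a separate case analysis (all indices positive, one negative, two or more negative), which is a substantial computation in its own right.

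Third, your plan for the Weyl-transformed QQ-relations both overcomplicates and under-delivers. No induction on the length of $\weyl$ is needed: the Weyl orbit consists precisely of the components $\Psi_A$ together with the $V_A{}^B$ with $A\cap B=\es$, so once \eqref{eq:TensorQ} and \eqref{eq:VectorSpinorRelationsa} are established for \emph{arbitrary} multi-indices, every instance of \eqref{eq:DynkinDiagramQQ} is already contained in the identities $W(V^{Ii},V^{Ij})=V^{Iij}V^{I}$ and the $|I|=r-2$ Fierz relations --- there is nothing left to transport along the Weyl group. Conversely, the route you sketch is not a relabelling exercise: under the reflection exchanging $\Psi_A\leftrightarrow\Psi_{Ar}$ the new singlet is $\Psi_r$ and the new vector multiplet is $(\Psi_{1r},\dots,\Psi_{r-1,r},\Psi_\es)$, so the transformed version of \eqref{eq:DeterminantCondition} reads $W(\Psi_{1r},\dots,\Psi_{r-1,r},\Psi_\es)=\Psi_r^{[r-2]_\sh}$ --- a nontrivial consequence of the original system that the base step of your induction would have to prove, and which you leave entirely to ``tedious bookkeeping''. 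As it stands the proposal assembles the right ingredients but does not close the argument on any of these three points.
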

\begin{proof} Two key features associated with $\mu_{ab}$ participate in the majority of the arguments. The first feature is the possibility to change the value of shift in $\mu_{ab}^{[m]}$ in an admissible range in various formulae, this is based on  the trick of using \eqref{eq:shiftmutrick} together with \eqref{eq:QuantisationSUN}. Note that \eqref{eq:shiftmutrick} is just a rewriting of \eqref{eq:U1Conditions} while \eqref{eq:QuantisationSUN} is a direct consequence of how $\bP^a$ are defined, so using the shift trick relies only on the assumptions of the theorem and hence is allowed in the proof. 

This first feature is in particular used to get an admissible range of shifts in \eqref{eq:VQPsi} which  allows in turn deriving Wronskian relation \eqref{eq:TensorQ} for $V^{I}$, $|I|\leq r-1$. Indeed, assume $\Pso=1$ for simplicity, then
\begin{align}
V_{A}{}^{B}=V_{a_1\ldots a_k}{}^{b_1\ldots b_l}
=\mu_{a_1a_1'}^{[m]}\ldots \mu_{a_ka_k'}^{[m]}W(V^{a_1'},\ldots,V^{a_k'},V^{b_1},\ldots,V^{b_l})&
\nonumber\\
=W(\mu^{[\epsilon]}_{a_1a_1'}V^{a_1'},\ldots,\mu^{[\epsilon]}_{a_ka_k'}V^{a_k'},V^{b_1},\ldots,V^{b_l})&=
W(V_{a_1},\ldots,V_{a_k},V^{b_1},\ldots,V^{b_l})\,,
\end{align}
where $\epsilon=0$ for even rank and $\epsilon=\pm 1$ for odd rank. What we used is $\mu_{aa'}^{[m]}(V^{a'})^{[s]}=\mu_{aa'}^{[s+\epsilon]}(V^{a'})^{[s]}$, since for all $s$ appearing in Wronskian determinant $m=s$ happens to be in the admissible range, this allowed us to pull $\mu_{a_ia_i'}$ inside the determinant. Checking that the argument works in the same way when $\Phi\neq 1$ is straightforward, this however relies additionally on the normalisation $V^{\es}=\frac{\bP^{\es}}{\Pso^{[r-1]}\Pso^{[-r+1]}}=1$ which is \eqref{eq:DeterminantCondition} rewritten in the form \eqref{eq:PmuSystema}. 

As discussed after \eqref{eq:QQoriginal}, having \eqref{eq:TensorQ} also implies $W(V^{Ii},V^{Ij})=V^{Iij}V^{I}$ which are precisely QQ-relations \eqref{eq:DynkinDiagramQQ} for $a=1,\ldots,r-3$, where $a=|I|+1$.

\medskip
\noindent
The second key feature involving $\mu_{ab}$ is using \eqref{eq:PSDifferential}---a way to present Cartan formulae \eqref{eq:PureSpinorCondition}---to effectively raise/lower indices. 
The first example of its usage is a derivation of the fused pure spinor relation \eqref{eq:ProjectionRelations}, including  \eqref{eq:PureSpinor} as a particular case. If one takes $I=A$, \ie all positive indices, in $\Gamma^I$ of \eqref{eq:ProjectionRelations} then the fused pure spinor relation is precisely what is stated in the first line of \eqref{eq:GammaAsForms}; we notice that the latter is derived by using solely \eqref{eq:U1Conditions} and Cartan formulae and so it can be used in this proof as well. For arbitrary $I$, we employ \eqref{eq:PSDifferential} to replace any occurrence of $\Gamma_a$ with $\mu_{ab}\Gamma^b$ which reduces the question to the already understood case of $\Gamma^A$. As \eqref{eq:PSDifferentialOdd} has r.h.s., we note that $\bP_a$ is a scalar prefactor  and so the question of vanishing of the \lhs of \eqref{eq:ProjectionRelations} involving $\Gamma^{I}$ is replaced with the one involving  $\Gamma^{I'}$ with $|I'|=|I|-1$. 

Derivation of fused Fierz relations \eqref{eq:VectorSpinorRelationsa} is the reverse of how \eqref{eq:VQPsi} were obtained. Indeed, the equivalence of \eqref{eq:VectorSpinorRelationsa} and \eqref{eq:VQPsi} was demonstrated through a set of equalities which are based on features already established above in the logic of the current proof. 

Fused Fierz relations \eqref{eq:VectorSpinorRelationsa} for $|I|=r-2$ are  QQ-relations \eqref{eq:DynkinDiagramQQ} for $a=r-1$ with the choice of $\Gamma^-$ in \eqref{eq:VectorSpinorRelationsa}, and for $a=r$ with the choice of $\Gamma^+$.  QQ-relations \eqref{eq:DynkinDiagramQQ} for $a=r-2$ is $W(V^{Ii},V^{Ij})=V^{Iij}V^{I}$ for $|I|=r-3$ with $V^{Iij}$  replaced by the bi-linear combination of spinors using \eqref{eq:VectorSpinorRelationsa} for $|I|=r-1$. Note that at this stage we have accomplished the derivation of all QQ-relations \eqref{eq:DynkinDiagramQQ}.

\medskip
\noindent
Both key features involving $\mu_{ab}$ are exploited in the derivation of fused orthogonality properties of vectors \eqref{eq:fusedortho}:
\begin{align}
\label{eq:projection}
(V^{i})^{[m]}(V_{i})^{[-m]}=(V^{a})^{[m]}(V_{a})^{[-m]}+(V_{a})^{[m]}(V^{a})^{[-m]}
=
(V^{a})^{[m]}(\mu^{[m]}_{ab}-\mu^{[-m]}_{ab})(V^{b})^{[-m]} = 0\,,\quad |m|\leq r-2\,, 
\end{align}
where we used a possibility to raise index $V_a=\mu_{ab}V^b$, antisymmetry of $\mu_{ab}$ and, at the last step, the possibility to replace $\mu_{ab}^{[-m]}$ with $\mu_{ab}^{[m]}$ using the trick based on \eqref{eq:shiftmutrick} and \eqref{eq:QuantisationSUN}.

If  $m=\rank-1$, one cannot use the shifting trick  to set $m=-m$ in \eqref{eq:projection} because the contribution from $W(\Qq_a,\Qq_b)$ survives on one occasion when using \eqref{eq:shiftmutrick}. Then we invoke the second line of \eqref{eq:QuantisationSUN} and ultimately \eqref{eq:DeterminantCondition} to conclude the normalisation property
\begin{align}
\label{eq:V}
    (\hqV^{i})^{[\rank-1]}\hqV_{i}^{[-\rank+1]}=(\hqV^a)^{[-\rank+1]}\Qq_{ab}(\hqV^{b})^{[\rank-1]}  
  =(V^{a[-r]}\Qq_a)^+(\Qq_b V^{b[r]})^- = (-1)^{r-1}\,.
\end{align}
Now we use \eqref{eq:projection} and \eqref{eq:V} to derive \eqref{eq:QuantVector}: \mbox{$
    (\hqV^{I})^{[\rank-1]}\hqV_{I}^{[-\rank+1]} \!\!=((\hqV^{i})^{[\rank-1]}\hqV_{i}^{[-\rank+1]})^{[|I|]_{\sh}} \!\! =\! (-1)^{|I|(r-1)}\,.$} 

The normalisation property of spinors \eqref{eq:QuantSpinor} follows from the second line of \eqref{eq:GammaAsForms} for $A=\emptyset$, and \eqref{eq:DeterminantCondition}. 

The last relation to demonstrate is \eqref{eq:VectorSpinorRelationsb}. Its derivation uses a combination of the already presented techniques, we give it in Appendix~\ref{app:techproofs}.
\end{proof}

\subsection{\label{sec:Uniquenss}Existence and uniqueness}
To conclude this section, we use the gained knowledge to build the explicit transition between $Q_{(a)}^1$, $a=1,\ldots,\rank$, and the pure spinor Q-system. We recall that using $Q_{(a)}^1$ is a historically better-known way to store information about the Bethe algebra: in the case of spin chains and up to a normalisation, $Q_{(a)}^1$ are the eigenvalues of transfer matrices in prefundamental representations \cite{Frenkel_2015}; q-character expressions are written as Laurent polynomials in  $\mathcal{Y}_a={Q_{(a)}^{1}}/{Q_{(a)}^{1\,[-2]}}$; zeros of $Q_{(a)}^1$ satisfy nested Bethe Ansatz equations, see \eg \cite{Reshetikhin:1983vw,Kuniba:1994na,Frenkel:1998ojj} and also Section~\ref{sec:5a}.

We already showed that all the Q-functions $Q_{(a)}^{\mathfrak{i}}$ and hence $Q_{(a)}^1$ are computed unambiguously from $\Psi_{\emptyset},\Psi_a,\Psi_{ab}$. It is the opposite statement which we shall demonstrate now:

\begin{theorem}
For any general position functions $Q_{(a)}^1$, $a=1,\ldots r$, there is the unique up to a symmetry set of functions $\Psi_{\emptyset},\Psi_a,\Psi_{ab}$ such that the relations of the extended Q-system are satisfied.
\end{theorem}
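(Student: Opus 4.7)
The plan is to invert the $\gl_r$ parameterisation of Section~\ref{sec:PSparameterisation}: reconstruct $\Psi_\emptyset$, $\Psi_a$ and $\Psi_{ab}$ by solving the defining relations \eqref{eq:SpinorQSystem} inductively, and then invoke Theorem~\ref{thm:derivingextendedQsystem} to close. Relation \eqref{eq:419} gives at once
\be
\Psi_\emptyset = Q_{(r)}^1\,,\qquad \Phi = 1/\Psi_\emptyset\,,\qquad \Psi_r = Q_{(r-1)}^1\,,
\ee
so that $\bP_r = \Phi\,\Psi_r$ is known. For $k = r-1, r-2, \ldots, 1$, I would identify the highest-weight component $Q_{(k-1)}^1 = V^{12\ldots(k-1)}$, which via \eqref{eq:VectorUpperFromSpinor} and $\bP^{12\ldots(k-1)} = \bP_{k(k+1)\ldots r} = W(\bP_k,\ldots,\bP_r)$ becomes
\be\label{eq:proofreconstr}
W(\bP_k,\bP_{k+1},\ldots,\bP_r) = \Phi^{[r-k]}\Phi^{[-r+k]}\,Q_{(k-1)}^1\,,
\ee
with the convention $Q_{(0)}^1 \equiv 1$ so that $k=1$ reproduces \eqref{eq:PmuSystema}. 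With $\bP_{k+1},\ldots,\bP_r$ already fixed, \eqref{eq:proofreconstr} is a linear finite-difference equation for $\bP_k$ whose general-position homogeneous solution space is $(r-k)$-dimensional over shift-invariant functions, spanned by $\bP_{k+1},\ldots,\bP_r$. Iterating from $k=r-1$ down to $k=1$ fixes $\bP_a$, and hence $\Psi_a = \bP_a/\Phi$, modulo $\sum_{k=1}^{r-1}(r-k)=\binom{r}{2}$ periodic-function parameters. Finally $\mu_{ab}$, and hence $\Psi_{ab} = \mu_{ab}/\Phi$, is recovered from the first-order difference equation \eqref{eq:PmuSystemb}, $\mu_{ab}^+ - \mu_{ab}^- = W(\bP_a,\bP_b)$, solvable up to one additive shift-invariant function per pair $a<b$, contributing $\binom{r}{2}$ more freedoms.

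By construction both \eqref{eq:DeterminantCondition} and \eqref{eq:U1Conditions} hold, so Theorem~\ref{thm:derivingextendedQsystem} immediately produces the full extended $\so_{2r}$ Q-system; the highest-weight components read back via \eqref{eq:419} and \eqref{eq:VectorUpperFromSpinor} tautologically reproduce the input $Q_{(a)}^1$, which establishes existence. The residual freedom totals $2\binom{r}{2}=r(r-1)$ shift-invariant-function parameters, matching the number of positive roots of $\so_{2r}$ and corresponding to explicit symmetries of the extended Q-system fixing the highest-weight components: the $\bP$-freedoms are the $\gl_r$-unipotent shifts $\bP_a \to \bP_a + p_{ab}\,\bP_b$ (with $p_{ab}$ shift-invariant and $a<b$), which leave every Wronskian $W(\bP_k,\ldots,\bP_r)$ invariant by column collision, while the $\mu$-freedoms $\mu_{ab}\to \mu_{ab}+c_{ab}$ (with $c_{ab}$ shift-invariant antisymmetric) realise the extra $\Lambda^2(\mathbb{C}^r)$ generators in the Cartan decomposition $\so_{2r}=\Lambda^2(\mathbb{C}^r)\oplus\mathfrak{u}(r)\oplus\Lambda^2(\mathbb{C}^{*r})$ of Section~\ref{sec:Cartandecomp}. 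Together they span the unipotent radical of a Borel of $\so_{2r}$ with shift-invariant coefficients --- the ``symmetry'' in the statement.

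The hard part is uniqueness: showing that any two triples solving the extended Q-system with the same $Q_{(a)}^1$ differ by such a Borel-unipotent transformation. I would use an inductive matching argument --- starting from the base case $\bP_r = \bP_r'$, the equations \eqref{eq:proofreconstr} force $\bP_k - \bP_k'$ to be a shift-invariant linear combination of the already-fixed $\bP_{k+1},\ldots,\bP_r$, which realises the upper-triangular part of the $\gl_r$-unipotent action; a parallel analysis of \eqref{eq:PmuSystemb} pins down the residual $\mu$-freedom to the claimed $\Lambda^2$ form, with cross-terms induced by the already-established $\bP$-transformations. The ``general position'' hypothesis is invoked throughout to guarantee maximal-dimensional solution spaces at each inductive step, linear independence of $\bP_1,\ldots,\bP_r$ over shift-invariant functions, and --- for even $r$ --- non-degeneracy of $\mu_{ab}$ as needed by \eqref{eq:VectorLowerFromSpinor}.
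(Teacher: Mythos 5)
Your proposal is correct and follows the same overall strategy as the paper: translate the highest-weight data $Q_{(a)}^1$ into the tower $\bP_{\fs},\ldots,\bP_{r}$, recover the single-indexed $\bP_a$ by solving linear finite-difference equations whose homogeneous ambiguities are exactly the unipotent symmetries, then recover $\mu_{ab}$ from \eqref{eq:PmuSystemb} up to additive shift-invariant functions, and close existence by invoking Theorem~\ref{thm:derivingextendedQsystem}. The one substantive difference is in the middle step: the paper descends through the full lattice of multi-indexed $\bP_{a\ldots b}$ using ${r\choose 2}$ two-by-two Wronskian QQ-relations of type \eqref{eq:639}, each a first-order difference equation, whereas you invert the $r-1$ full Wronskian equations \eqref{eq:proofreconstr} directly, each an order-$(r-k)$ equation whose homogeneous space is spanned (in general position) by the already-known $\bP_{k+1},\ldots,\bP_r$. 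Both are valid and give the same ambiguity count; the paper's route has the advantage that each step is an elementary first-order inversion (explicitly solvable as in \eqref{eq:651}), while yours avoids introducing the intermediate $\bP_{a\ldots b}$. Your uniqueness argument is also organised differently and arguably more directly: since any solution of the extended Q-system necessarily satisfies \eqref{eq:proofreconstr}, the difference of two solutions is a homogeneous solution and hence a unipotent shift; the paper instead argues that any alternative reconstruction algorithm must reproduce the same triple because, by Theorem~\ref{thm:derivingextendedQsystem}, the triple it produces satisfies all relations including those used in the paper's own algorithm. Your sketch of the $\mu$-part of uniqueness (cross-terms induced by the $\bP$-transformations) is left schematic but is sound, and the identification of the residual freedom with $\exp(\mathfrak{n}_-)$ matches the paper's discussion preceding its proof.
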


Before doing the proof, we explain what `up to a symmetry' means in practice. All relations of the extended Q-system are invariant under transformations $Q_{(a)}\to G\cdot Q_{(a)}$, where $G$ is a $\mathsf{Spin}(2\rank)-$valued function of the spectral parameter that satisfies $G^+=G^-$, and it acts on $Q_{(a)}$ as on the vector in $a$-th fundamental representation. As $Q_{(a)}^1$ are fixed, the only possible symmetries are transformations from ${\rm exp}(\mathfrak{n}_-)$, where $\mathfrak{n}_-$ is generated by the lowering operators of $\mathfrak{so}_{2\rank}$. Explicitly, these are $\Psi_a\to\Psi_a+\sum\limits_{b>a}g_{ab}\Psi_b$ and $\Psi_{ab}\to\Psi_{ab}+{\tilde g}_{ab}\Psi_\es$ for some $g_{ab}^+=g_{ab}^-$, $\tilde g_{ab}^+=\tilde g_{ab}^-$, \cf \eqref{eq:419}. These transformations is the allowed ambiguity in the derivation of the triple $\Psi_\es,\Psi_a,\Psi_{ab}$.

\begin{proof}
In the notations of this paper, $Q_{(a)}^1=V^{12\ldots a}$ for $a\leq \rank-2$, $Q_{(r-1)}^1=\Psi_{r}$, $Q_{(r)}^1=\Psi_{\es}$. Using definitions of $\Phi$ and $\bP_A$, as well as \eqref{eq:VectorUpperFromSpinor}, we see that this input is equivalent to knowing $\bP_{\es}=1$, $\bP_{\rank},\bP_{\rank-1,\rank},\ldots, \bP_{\fs}\equiv \bP_{1\ldots\rank}$. This set of $\bP$'s plays the role of the set of $\glQ_{1\ldots a}\equiv \glQ_{(a)}^{1}$ for $\gl_r$ Q-system. From it, getting all the components $\glQ_{a}$ of $\mathcal{Q}_{(1)}$ (in our notations these would be $\bP_a$) is a well-established procedure which we repeat here for clarity: Let $a\ldots b$ mean a sequence of all integers from $a$ to $b$ (with $b\geq a>0$). We recursively use QQ-relations
\be
\label{eq:639}
W(Q,\bar Q)=J\,,
\ee
where $Q=\bP_{a+1\ldots b}$, $\bar Q=\bP_{a\ldots b-1}$, $J=-\bP_{a+1\ldots b-1}\,\bP_{a\ldots b}$,
to compute $\bP_{a\ldots b-1}$ from the known $\bP_{a\ldots b}$, $\bP_{a+1\ldots b}$, $\bP_{a+1\ldots b-1}$. The input for the recursion is the known from $Q_{(a)}^1$ functions $\bP_{A}$ that were listed above. The procedure terminates with the computation of single-indexed $\bP_a$, $1\leq a\leq \rank$.

We get $\Psi_a$ from $\bP_a$ by a simple rescaling.

Solution of \eqref{eq:639} for $\bar Q$ is not unique but defined up to a replacement ${\bar Q}\to{\bar Q}+h\,Q$, where $h^+=h^-$; and any two solutions are related in this way. To compute $\bP_{a}$, we need to perform $\rank\choose 2$ steps of the recursion, and the possible solution ambiguities account precisely for the symmetry transformations $\Psi_a\to\Psi_a+\sum g_{ab}\Psi_b$.

Once $\Psi_a$ are known, $\Psi_{ab}$ are obtained as solutions to \eqref{eq:U1Conditions}. The latter should be viewed as of type \eqref{eq:639} with $\Psi_{ab}$ playing the role of $\bar Q$, so $\Psi_{ab}$ are fixed up to the symmetry ambiguity $\Psi_{ab}\to\Psi_{ab}+{\tilde g}_{ab}\Psi_\es$. 

Only a subset of relations of the extended Q-system was used in the computation of $\Psi_{\es},\Psi_{a},\Psi_{ab}$. So let us assume that a different computation algorithm using different relations was suggested. Let us take the triple $\Psi_{\es},\Psi_{a},\Psi_{ab}$ obtained in this alternative way and compute all other Q-functions from it. Owing to Theorem~\ref{thm:derivingextendedQsystem}, all the relations of the extended Q-system shall be satisfied  by these Q-functions. Focusing on the relations in the alternative algorithm we confirm that $Q_{(a)}^1$ computed from the triple coincide with the ones we started from. But also the relations that were used to derive $\Psi_{\es},\Psi_{a},\Psi_{ab}$ following guidelines of this proof shall be valid as well. So two different computation approaches must produce the same up to the symmetry triples. This argumentation assumes that the suggested algorithms can be indeed performed producing at every their step valid functions starting from the chosen concrete functions $Q_{(a)}^1$, and thus we assume general position initial data. More comments on this matter are given later in this section.

In conclusion, the values of $\Psi_{\es},\Psi_{a},\Psi_{ab}$, up to the symmetry, do not depend on  the choice of a procedure to compute them.
\end{proof}
\begin{corollary}
For any general position functions $Q_{(a)}^1$, $a=1,\ldots r$, it is always possible to define Q-vectors $Q_{(a)}=\sum Q_{(a)}^{\mathfrak{i}}\basvec_{\mathfrak{i}}^{(a)}$ such that the relations of the extended Q-system hold. This construction is unique up to the symmetry.\qed
\end{corollary}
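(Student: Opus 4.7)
The corollary is essentially a packaging of the preceding theorem together with the constructive results of Sections~\ref{sec:Cartandecomp}--\ref{sec:DerivationOfExtendedQsystem}, so the plan is to assemble these ingredients rather than prove anything substantially new.

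First I would invoke the theorem to extract, from the given highest-weight components $Q_{(a)}^1$, a triple $\Psi_\es,\Psi_a,\Psi_{ab}$ (unique up to the symmetries $\Psi_a\to\Psi_a+\sum_{b>a} g_{ab}\Psi_b$ and $\Psi_{ab}\to\Psi_{ab}+\tilde g_{ab}\Psi_\es$) satisfying \eqref{eq:SpinorQSystem}. Then I would define the full Q-vectors by the explicit formulae developed in Section~\ref{sec:PSparameterisation}: the spinor components of $Q_{(r)}$ and $Q_{(r-1)}$ are determined by the Cartan parameterisation \eqref{eq:PureSpinorCondition}, the vector and antisymmetric tensor components $V^I$ for $|I|\leq r-1$ follow from \eqref{eq:VQPsi} (after converting lower to upper indices via the metric), and $V^I$ for $|I|=r$ is fixed by the Wronskian prescription \eqref{eq:TensorQ}. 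The resulting Q-vectors $Q_{(a)}=\sum_\mathfrak{i} Q_{(a)}^\mathfrak{i} \basvec_\mathfrak{i}^{(a)}$ are therefore unambiguously assembled from the triple.

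The verification that this assembly satisfies all relations of the extended Q-system listed in Section~\ref{sec:FusedFlags} is precisely the content of Theorem~\ref{thm:derivingextendedQsystem}, so no further combinatorial work is needed. It remains to check that the highest-weight entries of the constructed $Q_{(a)}$ actually reproduce the input $Q_{(a)}^1$: this is immediate from \eqref{eq:419} together with $V^{12\ldots a}=\Qq_{a+1\ldots r}/(\Phi^{[r-1-a]}\Phi^{[-r+1+a]})$ obtained from \eqref{eq:VectorUpperFromSpinor}, which is exactly the identification used at the start of the proof of the theorem when inverting from $Q_{(a)}^1$ to the $\bP$'s.

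For uniqueness, suppose two Q-vector realisations $Q_{(a)}$ and $Q'_{(a)}$ both satisfy the extended Q-system and share the same $Q_{(a)}^1$. By Section~\ref{sec:Cartandecomp} each realisation determines its own triple, and by the theorem these two triples coincide up to the ${\rm exp}(\mathfrak{n}_-)$-symmetry parametrised by $g_{ab},\tilde g_{ab}$. Since all other components of the Q-vectors are rational-combinatorial functions of the triple (through \eqref{eq:PureSpinorCondition} and \eqref{eq:VQPsi}), the two sets of Q-vectors are related by the corresponding $\mathsf{Spin}(2r)$-valued gauge transformation with $G^+=G^-$, which is the claimed symmetry. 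The only subtle point, and the place I expect to need the most care, is tracking the general-position hypothesis: the recursive solution of \eqref{eq:639} and of \eqref{eq:U1Conditions} for $\Psi_{ab}$ requires that certain functions built along the way (such as $\Psi_\es$ and the intermediate $\bP$'s) do not vanish identically, so I would state explicitly that ``general position'' means non-vanishing of these denominators, ensuring every step of the construction is well defined.
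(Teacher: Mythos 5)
Your proposal is correct and follows exactly the route the paper intends: the corollary is stated with an immediate \qed precisely because it is the combination of the existence-and-uniqueness theorem for the triple $\Psi_\es,\Psi_a,\Psi_{ab}$ with Theorem~\ref{thm:derivingextendedQsystem}, which is what you assemble. Your extra care about verifying that the reconstructed highest-weight components match the input and about pinning down the general-position hypothesis only makes explicit what the paper leaves implicit.
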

There are various ways to fix the symmetry, for instance by picking a concrete solution of the QQ-system on the Weyl orbit \eqref{eq:DynkinDiagramQQ}. Then the above Corollary may be stated differently:
\begin{corollary}
For any general position functions $Q_{(a)}^{\weyl(1)}$ that solve \eqref{eq:DynkinDiagramQQ}, where $a=1,\ldots r$ and $\weyl$ are elements of Weyl group, there is a unique way to define Q-vectors $Q_{(a)}=\sum Q_{(a)}^{\mathfrak{i}}\basvec_{\mathfrak{i}}^{(a)}$ such that the relations of the extended Q-system hold.\qed
\end{corollary}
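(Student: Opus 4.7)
The plan is to leverage the preceding Corollary, which already supplies existence and uniqueness up to the residual symmetry $G \in \exp(\mathfrak{n}_-)$ from the highest-weight data $Q_{(a)}^1$ alone. The extra content here is that prescribing the full Weyl-orbit data $Q_{(a)}^{\weyl(1)}$, assumed to solve \eqref{eq:DynkinDiagramQQ}, rigidifies this residual symmetry.

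First I would restrict the input to $\weyl = e$ and apply the previous Corollary, producing a family of Q-vectors satisfying all relations of the extended Q-system, parameterised by an unknown $G \in \exp(\mathfrak{n}_-)$. By Theorem~\ref{thm:derivingextendedQsystem} every member of this family yields a consistent set of extremal-weight components $Q_{(a)}^{\weyl(1)}$ solving \eqref{eq:DynkinDiagramQQ} on the Weyl orbit, starting from the same highest-weight seed as the prescribed data.

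Second, to exhibit existence of a representative matching the prescribed input, I would recursively solve \eqref{eq:DynkinDiagramQQ} exactly as in the proof of the previous Theorem, passing from one extremal-weight component to the next via difference equations of the type~\eqref{eq:639}. The $\sh$-periodic ambiguity at each step of this recursion corresponds precisely to one generator of $\mathfrak{n}_-$, so tuning the parameters of $G$ generator by generator aligns the computed extremal components with the prescribed ones. Generic non-vanishing of $Q_{(a)}^1$ guarantees that each tuning step is well-defined.

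Finally, uniqueness follows by reversing the same argument: any non-trivial element of $\exp(\mathfrak{n}_-)$ acts non-trivially on at least one extremal-weight component, so two Q-vector configurations agreeing on all prescribed $Q_{(a)}^{\weyl(1)}$ must be related by the identity element and hence coincide. The main obstacle I anticipate is the bookkeeping in step two: one must exhibit a reduced word for each Weyl element whose sequence of simple reflections matches the $\mathfrak{n}_-$ generators in a consistent order, and verify general-position hypotheses preventing degeneracies in the linear equations for the group parameters. For the simply-laced case of $\so_{2\rank}$ this reduces to standard facts about extremal weights in fundamental representations and the Bruhat order, so no serious obstruction is expected.
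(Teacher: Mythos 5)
Your proposal is correct and follows essentially the same route as the paper: the preceding Corollary gives existence and uniqueness up to the residual $\exp(\mathfrak{n}_-)$ symmetry, and prescribing the Weyl-orbit data rigidifies that symmetry because the shifts $\Psi_a\to\Psi_a+\sum_{b>a}g_{ab}\Psi_b$ and $\Psi_{ab}\to\Psi_{ab}+\tilde g_{ab}\Psi_{\es}$ act non-trivially (in general position) on components that lie on the Weyl orbit. Your anticipated bookkeeping with reduced words and Bruhat order is not actually needed, since the paper identifies the Weyl-orbit components concretely as all components of $\Psi$ together with the $V_{A}{}^{B}$ with $A\cap B=\emptyset$, so the triple $\Psi_{\es},\Psi_a,\Psi_{ab}$ is directly part of the prescribed input and uniqueness is immediate from Theorem~\ref{thm:derivingextendedQsystem}.
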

We refer to \cite{Ekhammar:2020enr} for the definition of Q-functions on the Weyl orbit $Q_{(a)}^{\weyl(1)}$ distilling here only what it means in practice: those are all components of $\Psi$, and such $V_{A}{}^{B}$ that $A\cap B=\emptyset$. Computing $Q_{(a)}^{\mathfrak{i}}$ outside of the Weyl orbit is easy, e.g. by \eqref{eq:TensorQ}, we however want to put the emphasis with the last corollary that all relations of the extended Q-system follow from \eqref{eq:DynkinDiagramQQ}. 

\medskip
\noindent
We see that the set of $Q_{(a)}^1$ and the set of $\Psi_{\es},\Psi_a,\Psi_{ab}$ contains the same information, but there is an important difference about how we extract it. Construction of all Q-functions from $\Psi_{\es},\Psi_a,\Psi_{ab}$ is straightforward as it requires only arithmetic operations on functions of the triple (also with appropriate shifts of the spectral parameter).  In contrast, doing the same from $Q_{(a)}^1$ requires solving relations of type \eqref{eq:639} which is more subtle. Of course, the difficulty of solving \eqref{eq:639} is not absent from the pure spinor Q-system, it is the difficultly of solving relations \eqref{eq:SpinorQSystem}.

Let us discuss in more detail how \eqref{eq:639}  is meant to be solved. Recall that so far we work in a general set up where no concrete physical model is assumed.  We therefore cannot require beforehand any analytic properties on Q-functions but only that their values are well-defined. In such generality, we can find $\bar Q$ as follows: decompose all values of the spectral parameter into $2\mathbb{Z}$-orbits, where $\mathbb{Z}$ is the shift group, \ie into the non-intersecting sets $\{u_0+\ii\,n|n\in\mathbb{Z}\}$. For each such set fix the initial value $\bar Q(u_0)$ at will, this precisely corresponds to the ambiguity ${\bar Q}\to{\bar Q}+h\,Q$ of the solution parameterised by $h$, and then find ${\bar Q}(u_0+\ii\,n)$ recursively by using \eqref{eq:639}. For instance, for $n>0$:
\be\label{eq:651}
\bar Q^{[2n]}=\left(\frac{\bar Q}{Q}-\sum_{k=0}^{n-1}\left(\frac{J}{Q^+Q^-}\right)^{[2k+1]}\right)Q^{[2n]}\,.
\ee
'General position' requirement used in the theorem above refers to the assumption that values of Q-functions are such that solutions for $\bar Q$ are still well-defined functions despite of appearance of Q-functions in the denominators. The same remark goes for usage of other formulae like \eqref{eq:PureSpinorCondition} and \eqref{eq:VQPsi}. In other words: If some function $Q$ or a collection of functions are used as an input for computing other Q-functions then such computation eventually reduces to computing a rational combination of a finite number of $Q^{[n]}$. Treating $Q^{[n]}$ with different $n$ as independent variables, the computations presented in this paper, including those used in the proofs, are valid on a Zariski-open subset of possible values for $Q^{[n]}$.

The above approach is sufficient for combinatorial reasoning related to the extended Q-system but one can do better if we assume mild analytic features of Q-functions. Namely, if there exist such $H(t)$ and contour $\gamma$ that the integrals
\be
\frac{J}{Q^+Q^-}=\int_{\gamma}e^{u\,t}H(t)dt\,,\quad \ \bar Q(u)=Q(u)\int_{\gamma}\frac{e^{u\,t}}{e^{-it/2}-e^{it/2}}\,H(t)dt\,,
\ee
are defined then the offered integral expression for $\bar Q$ is an analytic in $u$ function that solves \eqref{eq:639} for certain domain of the spectral parameter. Formuale like \eqref{eq:PureSpinorCondition} and \eqref{eq:VQPsi} are just rational combinations of Q-functions and they also naturally offer an analiticity domain for a Q-function they compute. It is in this sense we understand the existence theorem then: that Q-functions can be constructed within a certain domain of analyticity.

We emphasise that the analyticity domain is not just an artefact of the integral representation as there indeed can be singularities in $\bar Q$ reducing such a domain. For instance, any singularities of $\frac{J}{Q^+Q^-}$, in particular those coming from zeros of $Q$, typically result in a semi-infinte ladder of singularities in $\bar Q$. Although the integral representation disguises this feature, it becomes transparent from \eqref{eq:651}. Cancellation of these singularities {\it is not required} in the existence theorem and overall in the combinatorial study of Q-system in this paper. However, such a cancellation is often either a consequence or one of the explicit requirements of the analytic Ansatz on Q-functions which one imposes to precise the physical model.

\section{\label{sec:Tfunc} T-functions}
T-functions $T_{a,s}$ have the meaning of transfer matrices in the context of integrable spin chains with Kirillov-Reshetikhin module of highest weight $s\,\omega_a$ in the auxiliary space; certain spectral determinants in the context of ODE/IM correspondence; their ratios are Y-functions in the context of TBA.  They emerge within the extended Q-system as inner products
\begin{subequations}
\label{eq:Ts}
\begin{align}
\label{eq:Tvector}
   &T_{a,s} = (-1)^{a(\rank-1)}(V^I)^{[\rank-1+s]}V_{I}^{[-\rank+1-s]}\,,
   |I|=a\leq r-2\,,
   \\
\label{eq:Tspinor}
   &T_{r-1,s} = \QDirac^{[-\rank+1-s]} C\Projm\QDirac^{[\rank-1+s]}\,, \\
   \label{eq:Tspinor2}
   &T_{r,s} = \QDirac^{[-\rank+1-s]}C\Projp\QDirac^{[\rank-1+s]}\,,
\end{align}
\end{subequations}
where $s=0$ (the trivial representation case) corresponds to the normalisation condition \eqref{eq:Quant}. 
Expressions \eqref{eq:Ts} were obtained as solutions to Hirota equations in \cite{Ferrando:2020vzk},\cite{Ekhammar:2020enr} and they also, at least partially, are supported by operatorial constructions in \cite{Frassek:2020nki},\cite{Ferrando:2020vzk},\cite{Costello:2021zcl}.

T-functions are $\so_{2\rank}$-invariants~\footnote{with respect to the Langlands-dual action, the invariance with respect to the action of physical $\so_{2\rank}$ is not required in this section although it might hold as well. For the distinction between the two actions, see the discussion in the paragraph on page~\pageref{par:PLd}.} and we shall explore their decomposition in terms of $\gl_{\rank}$-invariants. In conclusion we shall get several relatively compact expressions. Results for $T_{a,s}$, $a\leq r-2$ and $T_{a,1}$, $a=r-1, r$ are present already in \cite{Ferrando:2020vzk} or earlier works modulo change of conventions, while expressions for fermionic $T_{a,s}$ ($a=r-1,r$) for $s>1$ are new.

For future convenience, we introduce  $\gl_{\rank}$-invariants using Weyl-type determinant formulae
\begin{align}
    \Tg_{\lambda}&=\frac{|\Qq_a^{[2\hat\lambda_b]}|}{\Qq_{\fullset}}
    \,,&
      \bTg_{\lambda}&=(-1)^{\frac{r(r-1)}2}\frac{|\Qq_a^{[-2\hat\lambda_b]}|}{\Qq_{\fullset}}\,,
\end{align}
where $\lambda=(\lambda_1,\ldots,\lambda_N)$ (typically an integer partition), and $\hat\lambda_a=\lambda_a-a+\frac{r+1}{2}$. Observe that $\bar T_{\lambda}$ is the complex conjugate of $T_{\lambda}$ when $\Qq_a$ are real. $\bTg_{\lambda}$ becomes $\gl_{\rank}$-character $\chi_{\lambda}$ in the case of the character solution of $\so_{2\rank}$ Q-system \cite{Ekhammar:2020enr} for which $\Qq_{a} \propto x_a^{\ii \spa}$.

If \eqref{eq:QPsiMap} were the Q-system of a $\gl_{\rank}$ spin chain, for instance the rational spin chain with monodromy matrix $M_{ab}(u)=\delta_{ab}+\ii\frac{E_{ab}}{u}+\ldots$ satisfying Yangian RTT relations $[M_{ab}(u),M_{cd}(v)]=\frac{\ii}{u-v}(M_{cb}(u)M_{ad}(v)-M_{cb}(v)M_{ad}(u))$, then  $\Tg_{\lambda}$ would correspond to the transfer matrix with irrep $\lambda$ in the auxiliary space. It is unclear whether this interpretation is realised in practice because we work with $\so_{2\rank}$ systems, but we use it as a useful analogy. In particular, certain transformations of Q-functions can be viewed as inherited from $\Yangian(\gl_{\rank})$ automorphisms:
\begin{equation}
\label{eq:auto}
    \begin{tabular}{r|r}
        $M\mapsto \ldots\hphantom{abcdfabcdf} $ & $\Qq_{A}\mapsto\ldots\hphantom{abcdfabcdf}$  \\
        \hline
       $M^{[a]}$ &  $\Qq_{A}^{[a]}$
       \\
       $\frac{f^+}{f^-} M$ & $f^{[A]_{\sh}}\Qq_{A}$
       \\
       $\TS(M)\equiv (M^{-1})^{\rm T}$ & $\TS(\Qq_{A})=\frac{(\Qq^A)^{[\rank]}}{\Qq_{\fullset}^{[\rank-|A|]}}$
    \end{tabular}\,.
\end{equation}

To get a better intuition about decomposition into $\gl_{\rank}$-invariants, we make a detour and consider the generating functional proposed in \cite{Kuniba:2001ub}. With some rearrangements, it can be written  as
\be
&&\bmB\frac 1{1-\sh^{-4}}\mB=\frac{\Pso^+}{\Pso^-}\left(\sum\limits_{a=0}^{\rank-2}  (-1)^a\sh^{-a} T_{a,1}\sh^{-a}\right)\frac{\Pso^-}{\Pso^+}+\ldots\,,
\nonumber
\\
&&\bmB=\prod_{a=1}^{\rank}(1-{\Lambda}_a\sh^{-2})\,,\quad
\mB=\prod_{a=1}^{\rank}(1-\sh^{-2}\bar\Lambda_a)\,,
\label{eq:genfun}
\ee
where all entries should be viewed as operators. Their right and left actions are defined by $\Lambda \triangleright f=f\triangleleft \Lambda=\Lambda f$, $\sh\triangleright f= f^+$, $f\triangleleft\sh= f^{-}$.

`Quantum eigenvalues’ $\Lambda_a$, $\bar\Lambda_a$, or rather their relevant combinations, can be found from functional relations
\begin{subequations}
\label{Baxters}
\begin{align}
\bmB \triangleright \Qq_a^+ &=0\,,&  \Qq_a^-\triangleleft \mB  &=0
\,,
\\
\left(\frac{\Qq^a}{\Qq_{\fullset}^+}\right)^{[-r+1]}\triangleleft \bmB  &=0\,,
&  
\mB\triangleright \left(\frac{\Qq^a}{\Qq_{\fullset}^-}\right)^{[r-1]} &=0\,.
\end{align}
\end{subequations}
There is a rich combinatorics that uses quantum eigenvalues and calculus of Young tableaux that allows computing various $\so_{2\rank}$ T-functions, see for instance \cite{Kuniba:1994na,Nakai2007}. We won't follow this approach but instead remark that \eqref{Baxters} have the form of $\gl_\rank$ Baxter equations which suggests that $\mB$ and $\bmB$ are generating functions for fundamental $\gl_\rank$ T-functions. By following this analogy, one finds
\be
\bmB = \sum_{a=0}^\rank(-1)^a\left(\Tg_{(1^a)}\right)^{[-r]}\sh^{-2a}\,,
\quad
\mB =\sum_{a=0}^\rank(-1)^a\sh^{-2a}\left({\bTg_{(1^a)}}\right)^{[r]}
\,.
\ee
We can note from \eqref{eq:auto} and \eqref{Baxters} that $\TS(\bmB)=\mB^{[2]}$ implying $\TS({\Tg_{(1^a)}}^{[a]})={\bTg_{(1^a)}}^{[2\rank+2-a]}$ which very explicitly reads $\bTg_{(1^a)}=\frac{{\Qq_\fs}^{[-2]}}{\Qq_\fs}{\Tg_{(1^{\rank-a})}}\!\!\!\!\!{\vphantom{{\Tg}}}^{[-2]}$. Using \eqref{eq:genfun}, we therefore see that $T_{a\leq r-2,1}$ are bilinear combinations of $\Tg_{(1^a)}$ with prefactors assembled from $\Phi$ (that in particular control the correct $\mathsf{U}(1)$ charge).

In general, $T_{a\leq r-2,s}$ are also given as bilinear combinations of $\Tg_{\lambda}$ with prefactors from $\Phi$, derivation of this fact is similar to \eqref{eq:V}: starting from \eqref{eq:Tvector}, we write $T_{a,s}$ as a sum over terms of type $\left(\mu_{AA'}^{[m']}\Qq^{A'B}\right)^{[r-1+s]}\left(\mu_{BB'}^{[m'']}\Qq^{B'A}\right)^{[-r+1-s]}$. Recursively using \eqref{eq:shiftmutrick}, we bring all $\mu_{ab}$ to the same value of shift and then the terms containing $\mu_{ab}$ will cancel each other. In the course of the recursion, it is not always that the second term in the \rhs of \eqref{eq:shiftmutrick} cancels and so one gets a multitude of terms $\Qq_{ab}$; the generated combinations of type ${\Qq^{a_1'\ldots b_1\ldots}}^{[r-1+s]}\left(\Qq_{a_1a_1'}^{[s_1]}\ldots\Qq_{b_1'b_1}^{[\tilde s_1]}\ldots\right){\Qq^{b_1'\ldots a_1\ldots}}^{[-r+1-s]}$ simplify to products of two $\Tg_\lambda$. This logic also provides an alternative approach to derive \eqref{eq:genfun}.

As an explicit example, consider $a=1$. One gets
\begin{align}
\label{eq:T1SQQ}
    T_{1,s} = 
        (-1)^{\rank-1}\,\sum_{k=0}^{s} {\qV^a}^{[-s-\rank+1]}\,\Qq_{ab}^{[2k-s]}\,{\qV^b}^{[s+\rank-1]}\,.
\end{align}
Then one uses $\Qq_{ab}=\Qq_a^+\Qq_b^--\Qq_a^-\Qq_b^+$ and for example ${V^{a}}^{[-s-\rank+1]}{\Qq_{a}}^{[2n-s+1]} = \frac{\Pso^{[1-s]}}{\Pso^{[-1-s]}}{\Tg_{(n)}}^{[-\rank+2-s]}$. Putting it together, 
\begin{equation}
\begin{aligned}
    T_{1,s} &=
    \frac{\Pso^{[1-s]}\Pso^{[-1+s]}}{\Pso^{[1+s]}\Pso^{[-1-s]}}\,\sum_{k=0}^{s}\left({\Tg_{(k)}}^{[-\rank+2-s]}\,{\bTg_{(s-k)}}^{[\rank-2+s]}
    -{\Tg_{(k-1)}}^{[-\rank+2-s]}\,{\bTg_{(s-k-1)}}^{[\rank-2+s]}\right)\,.  
\end{aligned}
\end{equation}
The $a=1$ case is also special because it~\mbox{allows} for another way to proceed. Using \eqref{eq:QPsiMap}, we can see that $(\qV^{a})^{[-\rank+1-s]}\Qq_{ab}^{[m]}(\qV^{b})^{[\rank-1+s]}=\frac{(-1)^{\rank-1}}{{V^{12\ldots\rank}}^{[m]}}|\qV^{b[\delta_{a,1}(s-m)-\delta_{a,\rank}(s+m)+m+\rank+1-2a]}|$
which~yields the Weyl-type expression of \cite{Ferrando:2020vzk}:
\begin{equation}\label{eq:WeylFormulaT1s}
    T_{1,s} = \sum_{k=0}^{s} \frac{|(\qV^{b})^{[2\delta_{a,1}(s-k)-2\delta_{a,\rank}k+2k-s+\rank+1-2a]}|}{(\qV^{12\ldots\rank})^{[2k-s]}}\,.
\end{equation}
It can be suggestively rewritten as
\begin{equation}
\label{eq:suggestive}
     T_{1,s} = \sum_{k=0}^{s}\frac{\Pso^{[2k-s+1]}\Pso^{[2k-s-1]}}{\Pso^+\Pso^-}T^V_{(s/2,k-s/2,\dots,k-s/2,-s/2)}
\end{equation}
for $T^V_{\lambda} = \Pso^+\Pso^- |{(V^{a})^{[2\hat{\lambda}_b]}}|$. T-functions $T^V_{\lambda}$ are also $\gl_{\rank}$-invariants; they can be obtained from $\Tg_{\lambda}$ through the chain of automorphisms \eqref{eq:auto}: $T^V_{\lambda}=\varphi(\Tg_{\lambda})$ where $\varphi:M\mapsto \left(\frac{\Pso^2}{\Pso^{[2]}\Pso^{[-2]}}\right)^{+} \TS(M^{[-\rank]})$.

The spinor T-functions can be also decomposed starting from  \eqref{eq:Tspinor},\eqref{eq:Tspinor2} using similar techniques. There is a slight rank dependence in the intermediate computations, we start from the even $r$ case. The key observation is to rewrite \eqref{eq:GammaToForms} in the form
\begin{align}
\label{eq:SpinorTCalculation}
    \QDirac^{[-m]} C\Projp \QDirac^{[m]}
    = \frac{1}{\frac{r}{2}!}\frac{\left(\mu^{[m]}-\mu^{[-m]}\right)^{\frac{\rank}{2}}}{\Pso^{[m]}\Pso^{[-m]}}
    =\frac{1}{\frac{\rank}{2}!}\frac{\left([m]_\sh {\Qq_{(2)}}\right)^{\frac{r}{2}}}{\Pso^{[m]}\Pso^{[-m]}}\,,
\end{align}
where $\mu=\sum\limits_{a<b}\mu_{ab}\theta^a\theta^b$ and $\Qq_{(k)}=\sum\limits_{|A|=k} \Qq_{A}\theta^A$.
Then
\begin{align}
\label{eq:SpinorTEven}
    {T_{r-1,s}} &=\frac 1{\frac{r-2}{2}!}{\Qq^{[\rank-1+s]}_{(1)}}{} \frac{\left([\rank\!-\!3+s]_{\sh}\Qq_{(2)}\right)^{\frac{r-2}{2}}}{\Pso^{[r-1+s]}\Pso^{[-r+1-s]}}\Qq^{[-\rank+1-s]}_{(1)}\,,
    &
    {T_{\rank,s} } &= \frac{1}{\frac{r}{2}!}\frac{\left([\rank\!-\!1+s]_{\sh}\Qq_{(2)}\right)^{\frac{r}{2}}}{\Pso^{[r-1+s]}\Pso^{[-r+1-s]}}\,.
\\
\shortintertext{
For odd ranks, the equivalent computation yields}
\label{eq:SpinorTOdd}
    {T_{\rank-1,s}} &= \frac {1}{{\frac{r-1}{2}!}}{\Qq^{[\rank-1+s]}_{(1)}} \frac{\left([r-1+s]_{\sh}\Qq_{(2)}\right)^{\frac{r-1}{2}}}{{\Pso^{[\rank-1+s]}\Pso^{[-\rank+1-s]}}}\,,&
    {T_{\rank,s} } &=
    \frac{1}{\frac{\rank-1}{2}!}\frac{\left([\rank-1+s]_{\sh}\Qq_{(2)}\right)^{\frac{\rank-1}{2}}}{\Pso^{[\rank-1+s]}\Pso^{[-\rank+1-s]}} {\Qq^{[-\rank+1-s]}_{(1)}}\,.
\end{align}
When $s=1$, \eqref{eq:SpinorTEven} and \eqref{eq:SpinorTOdd} agree with the expressions in \cite{Ferrando:2020vzk}. The results \eqref{eq:SpinorTEven} and \eqref{eq:SpinorTOdd} can be assembled as
\begin{equation}
\label{eq:Tassp}
    T_{a,s} = \frac{\Pso^{[\rank-1]}\Pso^{[-\rank+1]}}{\Pso^{[\rank-1+s]}\Pso^{[-\rank+1-s]}}\sum_{\lambda}\bTg_{\lambda}\,, \quad
    a=\rank-1,\rank\,,
\end{equation}
where the sum runs over all $\gl_\rank$-irreps $\lambda$ that occur in the decomposition of the corresponding $\so_{2\rank}$-irrep.

Existence of linear decompositions \eqref{eq:suggestive}, \eqref{eq:Tassp} can be explained by the fact that $\so_{2\rank}$-irrep with the highest weight $s\,\omega_a$ for $a=1,\rank-1,\rank$ can be lifted to the evaluation irreps of a quantum algebra~\footnote{This property is the same for Yangian and quantum affine cases.}, and so the decomposition of such representations follows the character decomposition on the Lie algebra level. On the contrary, if $2\leq a\leq r-2$, the $\so_{2\rank}$-irreps are not quantisable---the corresponding KR-modules are larger as vector spaces---and we do not expect similar simplifications, we also checked numerically that this indeed does not happen.

The advantage of $\Tg_\lambda$-decomposition is that the result depends only on $\Qq_a$ and their Wronskian determinants (all the prefactors containing $\Pso$ can be rewritten as rational combinations of $\Qq_{\fullset}$). However, we saw that these formulae are telescoped-up versions of inner products \eqref{eq:Ts} that are much more compact expressions. In addition to $\Qq_a$ and $\Pso$, \eqref{eq:Ts} depend also on $\mu_{ab}$, but this is a little price to pay because one needs to find $\mu_{ab}$ anyway when solving the Q-system equations. Hence we believe that, although $\Tg_\lambda$-decompositions have a conceptual value in illuminating various aspects of $q$-character theory, \eqref{eq:Ts} is a more economical way to compute $T_{a,s}$ in practice.

\section{Rational Spin Chain}\label{sec:RationalSpinChains}
To apply the developed formalism for a concrete physical model one should require specific analytic features of Q-functions. This requirement called {\it analytic Bethe Ansatz} by analogy with \cite{Reshetikhin:1983vw} strongly constrains possibility to satisfy the functional relations of the extended Q-system. Each possible solution corresponds then to a physical state of the model, potentially modulo symmetries on both sides of the correspondence. As a concrete example, we shall do rational spin chains.

\subsection{Analytic Bethe Ansatz}\label{sec:5a}
The explicit demand on Q-functions to explore is \cite{Ekhammar:2020enr}
\be\label{eq:AnalyticalAnsatz}
    Q_{(a)}^{\mathfrak{i}} \propto  \dressing_{a}\times q_{(a)}^{\mathfrak{i}}\,.
\ee
Here $\propto$ means `equal up to a $u$-independent normalisation', and the normalisation is chosen by requiring the compatibility with \eqref{eq:DynkinDiagramQQ}; $q_{(a)}^{\mathfrak{i}}$ are monic polynomials in $u$, it is natural to call them Baxter polynomials with their zeros being Bethe roots, and $\dressing_{a}$ are dressing factors satisfying
\be
\prod_{b=1}^\rank\sigma_b^{-[C_{ab}]_\sh}=P_a\,,
\ee
where $C_{ab}$ is the $\so_{2r}$ Cartan matrix and $P_a$ are some fixed monic polynomials in $u$. The structure of dressing factors was tailored for QQ-relations \eqref{eq:DynkinDiagramQQ} to become
\be\label{eq:DynkinDiagramqq}
W(q_{(a)}^{\weyl(1)},q_{(a)}^{\weyl(2)})\propto P_a  \prod_{b\sim a} q_{(b)}^{\weyl(1)}\,.
\ee
One can choose $\weyl={\rm Id}$ and apply the standard argument to get nested Bethe equations
\be
\label{eq:nBAE}
\frac{q_{(a)}^{1[+2]}}{q_{(a)}^{1[-2]}}\prod_{b\sim a}\frac{q_{(b)}^{1[-1]}}{q_{(b)}^{1[+1]}}=-\frac{P_a^+}{P_a^-}\quad\quad\text{at zeros of }q_{(a)}^1\,.
\ee
Comparing to nested Bethe equations from the literature \cite{Reshetikhin:1983vw,OGIEVETSKY1986360},
we see that the chosen analytic Bethe Ansatz is supposed to describe the spectrum of rational spin chains in finite-dimensional representations of Yangian $\Yangian(\so_{2\rank})$, and with periodic boundary conditions used in the construction of the Bethe algebra.  In particular, it should describe a spin chain of length $L$ with sites in the vector representation of $\so_{2\rank}$ if 
\begin{equation}
\label{eq:Dvec}
    P_1 = \prod_{l=1}^{L}(u-\theta_l)\,,
    \quad
    P_{a\neq 1} = 1\,.
\end{equation}
Numbers $\theta_l$ are known as inhomogeneities. The homogeneous case $\theta_l=0$ corresponds to a chain with nearest-neighbour interactions.

Every finite-dimensional representation of $\Yangian(\so_{2\rank})$ has the highest weight state dubbed the ferromagnetic vacuum which is uniquely characterised by the collection of Drinfeld polynomials. From \eqref{eq:nBAE}, one identifies that $P_a$, $a=1,\ldots,\rank$, are Drinfeld polynomials \cite{Kirillov1990RepresentationsOY}.

\medskip
\noindent
For generic values of parameters, \eqref{eq:DynkinDiagramqq} for $\weyl={\rm Id}$ follows from the nested Bethe equations and then using arguments of the reproduction procedure in \cite{Mukhin:2005aa} we conclude that all Q-functions on the Weyl orbit satisfy the analytic Bethe Ansatz; finally, with help of \eg \eqref{eq:TensorQ} we show that all Q-functions of the extended Q-system  satisfy the analytic Bethe Ansatz. Hence the analytic Bethe Ansatz is equivalent to the nested Bethe equations, however in generic position.

We included the discussion of nested Bethe equations because it is a useful connection point with other works. However, we have no goal to rely on them, instead we shall use Wronskian Bethe equations.

\medskip
\paragraph*{Physical vs Langlands-dual $\so_{2\rank}$.}
\label{par:PLd}
Before going into the computational hurdles, we need to further discuss the setting of the problem. There are {\it two different} $\so_{2\rank}$ algebras involved. One $\so_{2\rank}$ algebra is the physical symmetry. It is a subalgebra of $\Yangian(\so_{2\rank})$ which commutes with the Bethe algebra and in particular with the spin chain Hamiltonian. As a module under this $\so_{2\rank}$ action, the Hilbert space decomposes as
\be\label{eq:HilberSpaceDecomposition}
    \mathcal{H} =  \bigoplus\limits_{\lambda} \multiplicity_{\lambda} \, L(\lambda)\,,
\ee
where $L(\lambda)$ are $\so_{2\rank}$ irreps with the highest weight $\lambda$ and $\multiplicity_{\lambda} $ is the multiplicity of their occurrence. The multiplicity of the ferromagnetic vacuum multiplet is $\multiplicity_{\lambda_{\rm max}}=1$; the weight of the ferromagnetic vacuum is encoded in the degrees of Drinfeld polynomials:
\be
\label{eq:HilberSpaceDecompositionaaa}
\lambda_{\rm max}=\sum_{a=1}^{\rank}\omega_a\,{\rm deg}\,P_a\,.
\ee

The other $\so_{2\rank}$ algebra is the one which we worked with in all the previous sections. We remind that it acts on $Q_{(a)}$ transforming them as vectors in the $a$-th fundamental representation $L(\omega_a)$ and this action is the symmetry of the equations of the extended Q-system. In the decomposition $Q_{(a)}=\sum_{\mathfrak{i}}Q_{(a)}^{\mathfrak{i}}\basvec_{\mathfrak{i}}^{(a)}$, vectors $\basvec_{\mathfrak{i}}^{(a)}$ are weight basis vectors of $L(\omega_a)$ of weight $\gamma_{(a)}^{\mathfrak{i}}$, and it is the convention that $\gamma_{(a)}^{1}\equiv \omega_a$ is the highest weight. The following rules for computing $\gamma_{(a)}^{\mathfrak{i}}$ apply
\begin{align}
    \qV^{\pm a} \to \gamma_{(1)}^{\pm a}= \pm \obasis_{a}\,,\quad
    \QPS_{a} \to \gamma_{(\rank-1)}^{\rank-a+1} = \omega_\rank - \obasis_{a}\,,\quad
    \QPS_{\emptyset} \to \gamma_{(\rank)}^{1} = \omega_\rank\,.
\end{align}
More generally, the weight of $\basvec_{\mathfrak{i}}^{(a)}$ associated to  $\QPS_{A}$ is $\frac 12\big(\sum\limits_{a\notin A}\obasis_a -\sum\limits_{a\in A}\obasis_{a}\big)$, and to $V^I$ is $\sum\limits_{a\in I}\obasis_a-\sum\limits_{-a\in I}\obasis_a$. Here we used the standard orthonormal basis $\obasis_a$, $(\obasis_a,\obasis_b) = \delta_{ab}$, of the weight space. The fundamental weights are expanded as $\omega_{a} = \sum\limits_{b=1}^{a}\obasis_{b}$ if $a<\rank-1$, $\omega_{\rank}=\frac{1}{2}\sum\limits_{a=1}^{\rank}\obasis_a$ and $\omega_{\rank-1}=\omega_\rank-\obasis_\rank$ in this basis.

\medskip
\noindent
Each physical multiplet $L(\lambda)$ is expected to be represented by a solution of the extended Q-system and hence we expect $\multiplicity_{\lambda}$ solutions representing multiplets of weight $\lambda$. There is one solution describing the ferromagnetic vacuum multiplet, it has the simplest possible highest-weight components of Q-functions: $q_{(a)}^1=1$. Other multiplets---excitations around the vacuum---have non-trivial polynomials $q_{(a)}^1$ whose degrees are known as magnon numbers. Overall, degrees of Baxter polynomials for a multiplet $L(\lambda)$ are decided as follows
\begin{align}
\label{eq:degrees}
{\rm deg}\,q_{(a)}^{\mathfrak{i}}=(\omega_{a},\lambda_{\rm max}+\rho)-(\gamma_{(a)}^\mathfrak{i},\lambda+\rho)\,,
\end{align}
where $\rho=\sum\limits_{a}\omega_a=\sum\limits_{a=1}^{\rank-1} (\rank-a)\obasis_a$ is the Weyl vector.

In the `scalar products' $(\omega_{a},\lambda_{\rm max}+\rho)$ and $(\gamma_{(a)}^{\mathfrak{i}},\lambda+\rho)$, vectors $\lambda_{\rm max},\lambda,\rho$ are the ones from the weight space of $\so_{2\rank}$ which is the physical symmetry, while $\omega_a,\gamma_{(a),\mathfrak{i}}$ are from the weight space of $\so_{2\rank}$ which is the Q-system symmetry. Taking scalar products of vectors from different spaces is of course nonsense and we use this notation only because it produces correct numbers within the chosen computational prescriptions. A conceptually accurate interpretation of \eqref{eq:degrees} is that the `scalar product' is in fact the pairing between dual vector spaces, and the two $\so_{2\rank}$ algebras are Langlands duals of one another.

\subsection{Implementation via Wronskian Bethe equations}
\label{sec:VB}
We shall search for solutions of the extended Q-system by first focusing on the triple of spinor Q-functions. For them, the analytic Bethe Ansatz reads $\Psi_\emptyset\propto\sigma_{\rank}\times \qPS_0$, $\Psi_{a}\propto \sigma_{\rank-1}\times \qPS_a$, $\Psi_{ab}\propto \sigma_{\rank}\times\qPS_{ab}$. Let $\degQ_a$ be the degree of $\psi_a$ and $\degQ_0$ the degree of $\psi_0$. Parameterise these polynomials as
\begin{align}\label{eq:Parameterisation}
    \qPS_{a} = u^{\degQ_{a}}+ \sum^{\degQ_{a}-1}_{l=0}c_{a,l}\, u^{l} \,,
    \quad
    \qPS_{0} = u^{\degQ_{0}}+ \sum^{\degQ_{0}-1}_{l=0}c_{0,l}\,u^{l}\,.
\end{align}
Our goal shall be to derive equations on $c_{a,l}$ for $a=0,\ldots,\rank$.

The Q-system is covariant under action of $\so_{2r}$. Generally, one can introduce a $u$-dependent holomorphic connection making this covariance into a gauge symmetry. As was mentioned in the introduction, we work in a gauge where the connection is trivial or more accurately where the parallel transport from $u$ to $u+\ii$ is trivial. Assuming this, we need to fix the residual gauge freedom. Ansatz \eqref{eq:AnalyticalAnsatz} allows for only global $u$-independent $\so_{2r}$ transformations, and only transformations from the Borel subalgebra remain once \eqref{eq:degrees} is imposed. Symmetry transformations from Cartan subalgebra can rescale \eg $\Psi_a$ by arbitrary constants, but this is hidden in the $\propto$ sign of \eqref{eq:AnalyticalAnsatz}. We are left with the nilpotent subalgebra only, and we partially fix this freedom by setting
\begin{subequations}
\begin{align}
\label{eq:c0}
    &c_{a,\degQ_b} = 0\,,
    \quad
    a<b\leq \rank\,.
\end{align}
\end{subequations}

It is finally the time to consider Wronskian Bethe equations \eqref{eq:SpinorQSystem}. Start by rewriting \eqref{eq:U1Conditions} in terms of polynomial functions
\begin{equation}\label{eq:FiniteDifferenceWronsk}
    \begin{split}
    W(\qPS_{ab},\qPS_{0}) \propto \SourcePS_{ab}\,,
    \quad
   \SourcePS_{ab} = \frac{P_r}{P_{r-1}}W(\qPS_{a},\qPS_{b})\,,    
\end{split}
\end{equation}
and then formally invert this relation to find $\qPS_{ab}$:
\begin{equation}\label{eq:InvertedSpinorWronskian} 
    \qPS_{ab} \propto \qPS_0\frac{1}{D-D^{-1}} \frac{\SourcePS_{ab}}{\qPS_0^{+}\qPS_{0}^-}\,,
\end{equation}
where the overall normalisations are fixed from the requirement that $\psi_{ab}$ are monic.
Expand \rhs around $u=\infty$, to this end one can use $D= e^{\frac{\ii}{2}\partial_u}$ and so
\begin{equation}
    \frac{1}{D-D^{-1}} = -\frac{\ii}{\partial_{u}}-\frac{\ii}{24}\partial_u - \frac{7 \ii}{5760}\partial^3_u + \dots\,.
\end{equation}
Since $\qPS_{ab}$ has to be polynomial, the obtained Laurent series in \eqref{eq:InvertedSpinorWronskian} must terminate and we can discard terms of order $\mathcal{O}(\frac{1}{u})$, for the same reason we can set $\frac 1{\partial_{u}}\frac 1u=0$, for other powers $\frac 1{\partial_{u}}$ is the usual integration $\frac 1{\partial_{u}}u^n=\frac 1{n+1}u^{n+1}$. The integration constant can be chosen arbitrarily because it is yet unused piece of the global $\so_{2\rank}$ symmetry: we can subtract any multiple of $\qPS_0$ from $\qPS_{ab}$. We fix this symmetry by setting the coefficient in $\qPS_{ab}$ in front of $u^{\degQ_{0}}$ to zero. All symmetries are fixed after this step.

Now we plug the obtained $\qPS_{ab}$ back into \eqref{eq:FiniteDifferenceWronsk}, this gives non-trivial equations for the remaining $c_{a,l}$. We combine them with~\eqref{eq:DeterminantCondition}
\begin{equation}
\label{eq:W}
    W(\qDirac_{1},\ldots,\qDirac_{\rank}) \propto \qDirac_{0}^{[\rank-2]_{\sh}}\prod_{a=1}^{\rank-1} P^{[a]_{\sh}}_{a}\,.
\end{equation}

Equations \eqref{eq:W} and \eqref{eq:FiniteDifferenceWronsk} is an explicit implementation of the analytic Bethe Ansatz applied to Wronskian Bethe equations \eqref{eq:SpinorQSystem}. Let us assume we have solved them, \ie found $c_{a,l}$ and hence $\Psi_{\es},\Psi_{a},\Psi_{ab}$. Does it mean that we solved the extended Q-system as well? Indeed,  it is possible to compute other Q-functions from $\Psi_{\es},\Psi_{a},\Psi_{ab}$ and, by Theorem~\ref{thm:derivingextendedQsystem}, they will satisfy the extended Q-system relations. But this argument is not sufficient. We need also to verify whether all the Q-functions computed in this way conform with analytic Bethe Ansatz \eqref{eq:AnalyticalAnsatz}, otherwise the solution of Wronskian Bethe equations should be deemed nonphysical.

One can readily notice that \eqref{eq:W} and \eqref{eq:FiniteDifferenceWronsk} cannot reflect all details of a sophisticated enough spin chain because Drinfeld polynomials $P_a$, which could be in principle $\rank$ independent functions, appear only through combinations $P_{\rank}/P_{\rank-1}$ and $\prod\limits_{a=1}^{\rank-1} P^{[a]_{\sh}}_{a}$. We mark off important cases  when these combinations nonetheless provide the sufficient information---when $P_r=1$ for $r\neq 1,a-1,a$ and $P_{\rank-1}$ and $P_{\rank}$ are co-prime. Such cases shall be referred to as basic spin chains. They include for instance chains where every node is in the vector representation, or in a Weyl spinor representation of the same chirality. 

For non-basic spin chains, we need further detailing. To this end we notice that tensor Q-functions $V^A$ computed via \eqref{eq:VectorUpperFromSpinor} do not automatically satisfy the analytic Bethe Ansatz in consequence of Wronskian Bethe equations. Indeed, with the Ansatz $V^{A}\propto \sigma_{|A|}v^A$, we get 
$
v^A\propto \frac{\epsilon^{Aa'_1\ldots a'_k}W(\qDirac_{a'_1},\ldots,\qDirac_{a'_k})} {\qDirac_0^{[k-2]_{\sh}}\prod\limits_{b=r-k+1}^{r-1} P_{b}^{[b+k-r]_{\sh}} }\,.
$
For $v^A$ to be polynomials for $1\leq |A|\leq r-2$, the following requirement should be
additionally imposed:
\begin{align}\label{eq:SpinorFromVec}
    {\rm Remainder}
\left(\frac{W(\qDirac_{a_1},\ldots,\qDirac_{a_k})} {\qDirac_0^{[k-2]_{\sh}}\prod\limits_{b=r-k+1}^{r-1} P_{b}^{[b+k-r]_{\sh}} }\right)=0\,\quad\quad\text{for $2\leq k\leq r-1\,.$}
\end{align}
Practical interpretation is the following: if we impose only Wronskian Bethe equations \eqref{eq:W} and \eqref{eq:FiniteDifferenceWronsk}, this would correspond to considering an effective basic spin chain with $P_{\rank}^{\rm eff}/P_{\rank-1}^{\rm eff}=P_{\rank}/P_{\rank-1}$ and $P_1^{\rm eff}=(P_{\rank-1}^{\rm eff})^{-[r-1]_{\sh}}\prod\limits_{a=1}^{\rank-1} P^{[a]_{\sh}}_{a}$. We shall get a finite number of solutions but this number is excessive compared to the dimension of Hilbert space if the true spin chain is not basic. For  description of the latter one needs to filter the obtained solutions and keep only those that satisfy \eqref{eq:SpinorFromVec}; we refer to \eqref{eq:SpinorFromVec} as `kinematic constraints' for this reason. They are written as an overdetermined system for all possible choices of $a_1,\ldots,a_k$ but using them in this way turns out to be time-efficient in explicit computations.

\medskip
\noindent
We are now ready to discuss analytic properties of all Q-functions in consequence of \eqref{eq:FiniteDifferenceWronsk}, \eqref{eq:W}, and \eqref{eq:SpinorFromVec}.
\begin{lemma}
\label{lemma3}
In consequence of \eqref{eq:FiniteDifferenceWronsk} only, and in generic position, $W(\qDirac_{a_1},\ldots,\qDirac_{a_k})$ is divisible by $\qDirac_0^{[k-2]_{\sh}}$~for~$k\leq r$.
\end{lemma}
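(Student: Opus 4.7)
The plan is induction on $k$. The cases $k\leq 2$ are vacuous, since $\qDirac_0^{[k-2]_{\sh}}$ is then an empty product equal to $1$. Two ingredients drive the remainder: the Desnanot--Jacobi (Dodgson condensation) identity for finite-difference Wronskian determinants, and the input \eqref{eq:FiniteDifferenceWronsk}.

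For the genuine base case $k=3$, I would apply Desnanot--Jacobi to the $3\times 3$ matrix defining $W(\qDirac_a,\qDirac_b,\qDirac_c)$ with the central entry as pivot, obtaining
\begin{equation*}
  W(\qDirac_a,\qDirac_b,\qDirac_c)\cdot\qDirac_b=W\bigl(W(\qDirac_a,\qDirac_b),\,W(\qDirac_b,\qDirac_c)\bigr).
\end{equation*}
By \eqref{eq:FiniteDifferenceWronsk}, both inner $2$-Wronskians on the right are proportional to $W(\qDirac_{ab},\qDirac_0)$ and $W(\qDirac_{bc},\qDirac_0)$ with the same scalar prefactor $P_{r-1}/P_r$, which therefore pulls out of the outer Wronskian. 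Applying Desnanot--Jacobi in reverse to $W\bigl(W(\qDirac_{ab},\qDirac_0),W(\qDirac_{bc},\qDirac_0)\bigr)$ with $\qDirac_0$ as pivot gives, up to sign, $\qDirac_0\cdot W(\qDirac_{ab},\qDirac_{bc},\qDirac_0)$: an explicit factor of $\qDirac_0$ emerges. Generic coprimality of $\qDirac_0$ with $\qDirac_b$ and with $P_r$ then delivers $\qDirac_0\mid W(\qDirac_a,\qDirac_b,\qDirac_c)$.

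For the inductive step $k\geq 4$, set $Q_A:=W(\qDirac_{a_1},\ldots,\qDirac_{a_{|A|}})$; such Wronskians automatically satisfy the discrete Pl\"ucker identity $W(Q_{A'a},Q_{A'b})=Q_{A'ab}\,Q_{A'}$. Choosing $A'=\{a_1,\ldots,a_{k-2}\}$ and solving for $Q_{a_1\ldots a_k}$, then inserting the induction hypothesis $Q_{a_1\ldots a_{k-1}}=\qDirac_0^{[k-3]_{\sh}}R_1$, $Q_{a_1\ldots a_{k-2}a_k}=\qDirac_0^{[k-3]_{\sh}}R_2$, $Q_{a_1\ldots a_{k-2}}=\qDirac_0^{[k-4]_{\sh}}R_3$, one invokes the elementary shift identity $(f^{[m]_{\sh}})^+(f^{[m]_{\sh}})^-=f^{[m+1]_{\sh}}f^{[m-1]_{\sh}}$ (valid for $m\geq 1$, which is satisfied here with $m=k-3\geq 1$) to collect powers of shifts of $\qDirac_0$ and arrive at
\begin{equation*}
  Q_{a_1\ldots a_k}=\qDirac_0^{[k-2]_{\sh}}\,\frac{W(R_1,R_2)}{R_3}.
\end{equation*}
Polynomiality of $Q_{a_1\ldots a_k}$, combined with generic coprimality of $R_3$ and $\qDirac_0^{[k-2]_{\sh}}$, forces $R_3\mid W(R_1,R_2)$, so $\qDirac_0^{[k-2]_{\sh}}\mid Q_{a_1\ldots a_k}$ and the induction closes.

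The main obstacle I anticipate is not the bookkeeping of shifts but the genericity hypothesis: each step requires a certain lower Q-polynomial ($R_3$, or $\qDirac_b$ and $P_r$ at $k=3$) to share no zero with the finite list of shifts of $\qDirac_0$ occurring in $\qDirac_0^{[k-2]_{\sh}}$. These Zariski-open conditions are exactly what the ``generic position'' caveat in the lemma statement is designed to cover.
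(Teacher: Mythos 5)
Your proof is correct and follows essentially the same route as the paper: the Desnanot--Jacobi (Pl\"ucker) identity applied forward and backward at $k=3$, the recursion $W(\qDirac_{a_1},\ldots,\qDirac_{a_k})=W\bigl(W(\qDirac_{a_1},\ldots,\qDirac_{a_{k-1}}),W(\qDirac_{a_1},\ldots,\qDirac_{a_{k-2}},\qDirac_{a_k})\bigr)/W(\qDirac_{a_1},\ldots,\qDirac_{a_{k-2}})$ for the inductive step, and the same genericity caveat (no common zeros of $\qDirac_0$ with $\qDirac_b$ and the shifted $P_r$). The only differences are cosmetic — you pivot on the middle function at $k=3$ where the paper pivots on $\qDirac_a$, and you spell out the shift bookkeeping $(f^{[m]_{\sh}})^+(f^{[m]_{\sh}})^-=f^{[m+1]_{\sh}}f^{[m-1]_{\sh}}$ that the paper leaves implicit.
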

\begin{proof}
For $k=3$: $W(\psi_{a},\psi_{b},\psi_{c})=\frac{W(W(\psi_{a},\psi_{b}),W(\psi_{a},\psi_{c}))}{\psi_{a}}=\left(\frac{P_{r-1}}{P_{r}}\right)^{[2]_{\sh}}\frac 1{\psi_a}W(\psi_{0},\psi_{ab},\psi_{ac})\psi_{0}$ from where divisibility by $\psi_0$ follows assuming $\psi_0$ and $P_r^{[2]_\sh}\psi_a$ do not have common zeros. For  $k>3$, we use $W(\psi_{a_1},\ldots,\psi_{a_k})=\frac{W(W(\psi_{a_1},\ldots,\psi_{a_{k-2}},\psi_{a_{k-1}}),W(\psi_{a_1},\ldots,\psi_{a_{k-2}},\psi_{a_{k}}))}{W(\psi_{a_1},\ldots,\psi_{a_{k-2}})}$ and confirm divisiblity by $\psi_0^{[k-2]_\sh}$ using induction in $k$.
\end{proof}
This was a technical observation demonstrating that $\qDirac_0^{[k-2]_{\sh}}$ can be, under the assumptions of the lemma, pulled out from the remainder in \eqref{eq:SpinorFromVec}  and hence \eqref{eq:SpinorFromVec} in the case of basic spin chains offers no additional information compared to Wronskian Bethe equations.

\begin{lemma}[{Sufficient condition for polynomiality of $q_{(a)}^{\mathfrak{i}}$}]
\label{lemma:sufficient}
Assuming Wronskian Bethe equations \eqref{eq:FiniteDifferenceWronsk}, \eqref{eq:W} together with kinematic constraints \eqref{eq:SpinorFromVec}, all the Q-functions of the extended Q-system computed from $\Psi_\es,\Psi_a,\Psi_{ab}$ satisfy analytic Bethe Ansatz \eqref{eq:AnalyticalAnsatz} if $\psi_0$ does not contain zeros separated by $\ii$.
\end{lemma}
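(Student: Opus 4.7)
The plan is to verify \eqref{eq:AnalyticalAnsatz} for every Q-function produced from the triple $\Psi_\es,\Psi_a,\Psi_{ab}$, working separately through the three classes: positive-multi-index tensor components $V^A$, mixed tensor components $V_A{}^B$, and higher spinor components $\Psi_A$ with $|A|\geq 3$. The hypothesis on $\psi_0$ will enter only in the spinor step.

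First I would handle the tensor Q-functions. The $V^A$ with positive multi-index follow immediately from \eqref{eq:VectorUpperFromSpinor}: after stripping the dressing factors, polynomiality of $v^A$ is precisely the content of the kinematic constraints \eqref{eq:SpinorFromVec}, with Lemma~\ref{lemma3} already subsuming the $\psi_0^{[|A|-2]_\sh}$-part of the divisibility. For the mixed components $V_A{}^B$ I would use \eqref{eq:VQPsi}: converting $\mu_{AA'}^{[m]}$ into polynomial form introduces $(\psi_0^{[m]})^{|A|}$ in the denominator, but the freedom to change the shift through \eqref{eq:shiftmutrick} together with \eqref{eq:QuantisationSUN} allows me to redistribute $\psi_0$-factors so that they cancel against those supplied by $\bP^{A'B}$ and by the $\Phi$-prefactors, reducing the question to polynomiality of Wronskians already secured in the previous step.

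The main work lies in the higher spinor components. Cartan formulae \eqref{eq:PureSpinorCondition} give, for $|A|=2n$ and after extracting the common dressing, $\psi_A=\psi_0^{1-n}\,\mathrm{Pf}(\psi_{ab})_{a,b\in A}$, with an analogous expression involving one $\psi_a$ factor for odd $|A|$; polynomiality of $\psi_A$ is thus equivalent to divisibility of the relevant Pfaffian by $\psi_0^{n-1}$. At a zero $u_0$ of $\psi_0$, the hypothesis ensures $\psi_0(u_0\pm\ii)\neq 0$, so I can evaluate \eqref{eq:FiniteDifferenceWronsk} at $u=u_0\pm\tfrac{\ii}{2}$ and solve algebraically for $\psi_{ab}(u_0)$, exhibiting it in the manifestly rank-two antisymmetric form $\psi_{ab}(u_0)\propto\psi_a(u_0)\psi_b(u_0+\ii)-\psi_b(u_0)\psi_a(u_0+\ii)$. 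For any rank-two antisymmetric matrix of size $2n\times 2n$ with $n\geq 2$, the determinant of any perturbation vanishes to order at least $2n-2$ at $u_0$ by elementary linear algebra; combined with $\mathrm{Pf}(M)^2=\det M$, the Pfaffian vanishes to order at least $n-1$, delivering the required divisibility.

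The principal obstacle is precisely this last Pfaffian argument: the rank-two structure delivers the order of vanishing through $\mathrm{Pf}^2=\det$ cleanly for simple zeros, but careful bookkeeping is needed for zeros of $\psi_0$ of higher multiplicity, and the odd-$|A|$ Cartan expression requires a parallel analysis. A natural fallback, should direct Pfaffian manipulation prove awkward, is to use the pure-spinor relation \eqref{eq:PsiPsi} as an inductive tool in $|A|$: assuming polynomiality of $\Psi_A,\Psi_{Aa},\Psi_{Ab}$, the Wronskian equation $W(\Psi_{Aab},\Psi_A)=W(\Psi_{Aa},\Psi_{Ab})$ algebraically determines $\Psi_{Aab}$, and the same $\psi_0$-hypothesis together with the shift trick yields its polynomiality one cardinality at a time.
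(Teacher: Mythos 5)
Your treatment of the mixed tensor components $V_A{}^B$ contains a genuine gap, and it is exactly where you declare the hypothesis on $\psi_0$ unnecessary. Writing $\mu_{AA'}^{[m]}=\det(\psi_{aa'}^{[m]})/(\psi_0^{[m]})^{|A|}$ puts $|A|$ factors of $\psi_0^{[m]}$ in the denominator, whereas the $\Pso$-prefactors of \eqref{eq:VQPsi} supply only \emph{two} factors of $\psi_0$, at shifts $\pm(r-1-|A|-|B|)$ unrelated to $m$; there is no algebraic cancellation against these or against $\bP^{A'B}$, and even granting the spinor step (so that the numerator minors vanish to order $|A|-2$ at zeros of $\psi_0$) a pole of order $2$ per zero survives. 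The mechanism that actually kills these poles is the one the paper uses: $v_A{}^B=\mu_{AA'}^{[m]}v^{A'B}$ holds for more than one admissible value of $m$ (for $|A|+|B|\le r-2$, i.e.\ all Dynkin-diagram tensors), the potential pole locations move with $m$, and since the left-hand side is $m$-independent a genuine pole would force $\psi_0$ to have zeros separated by $\ii$ --- so the hypothesis is consumed here just as much as in the spinor step. You gesture at the shift freedom via \eqref{eq:shiftmutrick} and \eqref{eq:QuantisationSUN}, but the ``redistribute and cancel'' reading of it is not a proof; you need the ``same function, two representations with disjoint pole sets'' reading.

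Your spinor step, by contrast, is a genuinely different and essentially correct route. The paper argues globally: the Cartan formulae admit either choice of shift sign ($\psi_{(2n+1)}^{+}\propto\psi_{(1)}^{+}\wedge((\psi_{(2)}/\psi_0)^{n})^{\pm}$, and the identity \eqref{eq:diffrelation} for even forms), so a pole would again require $\psi_0^{+}$ and $\psi_0^{-}$ to vanish simultaneously. You instead work locally at a zero $u_0$ of $\psi_0$: evaluating \eqref{eq:FiniteDifferenceWronsk} at $u_0\pm\tfrac{\ii}{2}$ shows $\psi_{ab}\equiv c\,(\psi_a\tilde\psi_b-\psi_b\tilde\psi_a)\bmod(u-u_0)^{\kappa}$ with $\tilde\psi_a(u)=\psi_a(u+\ii)$, and the column expansion of $\det(\text{rank-}2+O(\epsilon^{\kappa}))$ plus $\mathrm{Pf}^2=\det$ gives vanishing to order $\kappa(n-1)$; the two loose ends you flag are in fact harmless --- higher-multiplicity zeros go through the same expansion verbatim, and the odd case follows because one of the two vectors spanning the rank-2 form is $\psi_{(1)}(u_0)$ itself, so $\psi_{(1)}\wedge\psi_{(2)}^{\wedge(n-1)}$ loses every term containing the rank-2 part. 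This local argument is arguably more self-contained than the paper's (it does not lean on \eqref{eq:diffrelation}), at the cost of the multiplicity bookkeeping. Fix the $V_A{}^B$ step as above and the proof is complete.
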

The proof is in Appendix~\ref{app:techproofs}. 

It is natural to assume that zeros of $\psi_0$ non-trivially depend on the input data---zeros of Drinfeld polynomials a.k.a. inhomogeneities---and thus we can get zeros of $\psi_0$ not differ by $\ii$ by changing inhomogeneities a bit. Likewise it is assumed that genericity assumptions used in the proof of Lemma~\ref{lemma3} can be achieved by modifying inhomogeneities as well. Although we have a solid experimental evidence for this assumption we cannot offer a systematic rigorous proof.

With this assumption on inhomogeneities, we conclude that analytic Bethe Ansatz follows from \eqref{eq:FiniteDifferenceWronsk}, \eqref{eq:W}, \eqref{eq:SpinorFromVec} for generic values of parameters and in particular, the proposed equations can be considered as an alternative equivalent to nested Bethe equations. The observations we made are fully analogous to the ones between Wronskian Bethe equations and nested Bethe equations for $\mathfrak{gl}_{\rank}$ chains \cite{MTV,Chernyak:2020lgw}. Moreover, the analogy includes the role of kinematic constraints: The focus of \cite{MTV,Chernyak:2020lgw} was on a spin chain with nodes in vector representations which is an example of a basic spin chain. If one wants to explore more complicated examples, a supplement with kinematic constraints similar to \eqref{eq:SpinorFromVec} is needed for $\mathfrak{gl}_{\rank}$ case as well, such constraints were for instance a part of the AdS/CFT spectrum computation \cite{Marboe:2017dmb}.

\setlength{\textfloatsep}{0.8\baselineskip plus 0.2\baselineskip minus 0.2\baselineskip}
\subsection{Explicit results}\label{sec:SolvingQSystem}
Here is the distilled summary of what is being done for the explicit computations: Use $P_a$---Drinfeld polynomials---and $\lambda$---the weight of physical $\so_{2\rank}$-irrep in \eqref{eq:HilberSpaceDecomposition}---as an input, derive polynomial equations for $c_{a,l}$ following the routine of subsection~\ref{sec:VB} from \eqref{eq:Parameterisation} to \eqref{eq:W}, optionally add \eqref{eq:SpinorFromVec}, and ask a computer software to solve the obtained system. The corresponding code in {\it Mathematica} is provided in the ancillary notebook.
Among numerous tests, we made a systematic study of spin chains described by Drinfeld polynomials $P_b=u^{L\,\delta_{ba}}$, \ie of homogeneous chains with all $L$ nodes in the same $a$-th fundamental representation of Yangian, $a=1,\ldots,\rank$. The summary is given in tables~\ref{tab:D4Solutions} and~\ref{tab:D5Solutions} where we provide the number of solutions and the time required for generation of them analytically, in terms of algebraic numbers. 
\newsavebox{\tmpbox}
\sbox{\tmpbox}{%
\begin{minipage}[b]{0.48\linewidth}
    \begin{tabular}{c|c|c|c|c}
         $L\backslash a$ & 1 & 2 & 3 & 4 \\
         \hline
        2 & 3(0.9s)  & 10(11.5s) & 3(2.0s) & 3(2.4s) \\
        3 & 7(2.0s) & 68(95.0s) & 7(5.0s) & 7(11.0s) \\
        4 & 26(11.2s)& 631(1177s)& 26(29.6s)& 26(120s)\\
        5 & 85(28.0s)& - & 85(78s)& 85(322s)\\
        6 & 365(79s)& - & 365(1435s)& 365(2278s)  \\
        7 & 1456(1483s)& - & - & - 
    \end{tabular}
    \vfill
    {\captionof{table}{\label{tab:D4Solutions}$\so_{8}$}}
    \end{minipage}
}
\begin{figure}[h]
\centering
\usebox{\tmpbox}
    ~
    \begin{minipage}[b][{\ht\tmpbox}]{0.48\linewidth}
    \begin{tabular}{c|c|c|c|c|c}
        $L\backslash a$ & 1 & 2 & 3 & 4 & 5\\
        \hline
        2 & 3(2.4s) & 9(21.1s) & 20(108s) & 3(8s) & 3(18.5s)  \\
        3 & 7(4.7s) & 60(176s) & - & 9(22.8s) & 9(136s)\\
        4 & 25(21.0s) & - & - & 42(325s) & 42(1571s) \\
        5 & 82(215s) & - & - & - & - \\
    \end{tabular}
    \vfill
    \captionof{table}{ \label{tab:D5Solutions}$\so_{10}$}
    \end{minipage}
\end{figure}

The resulting procedure turned out to be remarkably efficient, and there should be ample room to further improve the performance. In particular, it would be interesting to develop ideas similar to those of \cite{Marboe:2016yyn}.

The proposed approach always gave the expected number of solutions $\multiplicity_{\lambda}$, and this should be contrasted with other approaches that have known shortcomings: Probably the most known other approach---based on nested Bethe equations---is not supplied with a prescription how to treat cases with coinciding Bethe roots and how to treat $0/0$-type cases (exceptional solutions) which complicates counting; The other approach is to solve QQ-system on Dynkin diagram that is relations \eqref{eq:DynkinDiagramqq} with $\weyl={\rm Id}$ involving only $q_{(a)}^{1}$ and $q_{(a)}^{2}$. This one is typically over-complete. Indeed, the exceptional solutions where both terms of the Wronskian determinants vanish simultaneously at some point can be erroneously accepted while being non-physical. Our point of view is that demanding analytic Bethe Ansatz for $Q_{(a)}^{\mathfrak{i}}$ beyond $\mathfrak{i}=1,2$ is what selects physical solutions resolving over-completeness of QQ-system on Dynkin diagram.

The simplest instance where QQ-system on Dynkin diagram is not enough is $L=2$ homogeneous vector $\mathfrak{so}_8$ spin chain. It has one copy of symmetric traceless, anti-symmetric, and trace representations: $
    \mathcal{H}= L(\omega_2)\oplus L(\omega_1) \oplus L(0)\,.
$ Solving the QQ-system on Dynkin diagram for  $\lambda=0$, we find four solutions
\begin{subequations}
\begin{align}
    &\qv^1 \propto \{u^2-\frac{1}{8},u^2,u^2+\frac{1}{4},u^2+\frac{1}{4}\}\,, \\
    &\qv^{12} \propto \{u^2-\frac{1}{8},u^2+\frac{1}{4},u^2,u^2+\frac{1}{4}\}\,,\\
    &\psi_{4} \propto \psi_{0}\propto \{u,u,u,u\}\,,
\end{align}
\end{subequations}
however there should be only one physical solution. By solving Wronskian Bethe equations, we indeed observe that only one solution is valid, with $v^1=u^2-\frac 18$.

In the example above, all the problematic for nested Bethe equations solutions (\ie those with double zeros or zeros at $u=\pm \ii/2$) were discarded, but of course they do appear as physical solutions in other examples. There are plenty of them in $\sl_{\mathsf{n}}$ sectors as we know from experience with $\gl_{\mathsf{m}|\mathsf{n}}$ spin chains, but they exist in full $\so_{\mathsf{2\rank}}$ sectors as well. One of the first examples is the $\so_8$ homogeneous vector spin chain of length four. For $\lambda=0$, there is an exceptional solution featuring $
    \qv^1 \propto (u^2+\frac{1}{4})(u^2-\frac{5}{28})
    $. In nested Bethe equations, deciding whether this solution is physical requires going through a regularisation hurdle whereas our approach offers it as the valid solution automatically. 
\setlength{\textfloatsep}{0.8\baselineskip plus 0.2\baselineskip minus 0.2\baselineskip}

\subsection{Completeness and faithfulness conjecture}
\label{sec:CF}
\label{sec:cfconjecture}
In the conducted experiments, not only generic position values of inhomogeneities were tested meaning we considered situations where physically valid solutions contained Q-functions with coinciding Bethe roots or roots separated by $\ii$. In fact, a homogeneous chain was often not a generic position  either.  Although we could not have probably tested all possibilities, the achieved success in all trials is a good indication that the `generic position' clause can be waived in Wronskian Bethe equation describing the spectrum statements and hence we conjecture that the developed formalism can be used to formulate the rigorous completeness theorems. This is also supported by previous experience with $\mathfrak{gl}_{\rank}$ chains \cite{MTV,Chernyak:2020lgw}, the results of these papers invited us to investigate the completeness features at the first place.

The completeness question consists of two parts. The one, which \cite{Chernyak:2020lgw} refers to as faithfulness, is to prove that the Q-system is indeed isomorphic to the Bethe algebra generated by transfer matrices. More accurately, the isomorphism to establish is between Wronskian Bethe algebra---the polynomial ring $\mathcal{W}=\mathbb{C}[c_{a,l}]/\langle \text{Wronskian Bethe equations}\rangle$ computed at weight $\lambda$---and the Bethe algebra restricted to the highest-weight subspace of weight $\lambda$. An important argument for this assesment that would prove surjectivity of the map from $\mathcal{W}$ to the Bethe algebra and facilitate proving injectivity is to construct Q-functions as eigenvalues of Baxter Q-operators. For $\so_{2\rank}$ chains, this construction and in particular the derivation of the analytic Bethe Ansatz conditions is not yet fully accomplished, but an important progress was made in \cite{Frassek:2020nki}. Combining these results with combinatorics of Theorem~\ref{thm:derivingextendedQsystem} seems to be a sufficient toolkit to finalise the Q-operator question, at least for basic spin chains, but it is yet to be done.

We did not do experimental checks on the level of Q-operators. However, where it was technically possible, we computed the eigenvalues of transfer matrices as operators acting on the spin chain constructed explicitly from Lax operators and confirmed them to match with T-functions computed via \eqref{eq:Ts}.

\medskip
The second part is the actual completeness statement. Within the established isomorphism according to the faithfulness property, it can be equivalently formulated either as the fact that the Bethe algebra contains all possible simultaneously commuting operators, \ie it is a maximal commutative algebra, or that the algebraic number of solutions, \ie the dimension of the Wronskian Bethe algebra $\mathcal{W}$ as a vector space over $\mathbb{C}$, is $\multiplicity_{\lambda}$. Our main argument in favour of completeness is reproducing numbers  $\multiplicity_{\lambda}$ in various experiments and we notice that even generic position points are worthwhile of checking as, to our knowledge, completeness of nested Bethe equations was not proven for the $\so_{2\rank}$ case even under generic position assumptions.

We explain now more accurately how the Hilbert space $\mathcal{H}$ of spin chain is constructed, from here it follows what values of $\multiplicity_{\lambda}$ we are checking to reproduce on experiments. The $a$-th fundamental representation of Yangian is, by definition, the irrep with $P_b=(u-\theta)^{\delta_{ba}}$, $b=1,\ldots,\rank$. We denote it by $\qrep_a(\theta)$.
In contrast to the $\sl_{\rank}$ case, these fundamental representations can be {\it reducible} as representations of $\so_{2r}$. The description of $\qrep_{a}(\theta)$ as an $\so_{2r}$-module reads
\begin{subequations}\label{eq:YangianDecomposition}
\begin{align}
\label{eq:74a}
    &\qrep_{a}(\theta) = L(\omega_a)\oplus L(\omega_{a-2})\oplus \dots \,,
    \quad
    a = 1,2,\dots,\rank-2\,, \\
    &\qrep_{{\rank-1}}(\theta) = L(\omega_{\rank-1})\,,
    \quad
    \qrep_{\rank}(\theta) = L(\omega_\rank)\,,
\end{align}
\end{subequations}
where the sum in \eqref{eq:74a} terminates either by $L(\omega_1)$ for odd $a$ or by the trivial representation $L(0)$ for even $a$.

By taking tensor products of the fundamental representations, one builds a spin chain of length $L$, with Hilbert space
\be
\label{eq:H}
\mathcal{H}=\bigotimes\limits_{\ell=1}^L \qrep_{a_\ell}(\theta_{\ell})\,.
\ee
Its highest-weight vector is characterised by $P_a=\prod\limits_{\ell=1}^{L}(u-\theta_{\ell})^{\delta_{a a_{\ell}}}$. Typically $\mathcal{H}$ is an irreducible Yangian representation. It can become reducible for specific values of $\theta_{\ell}$ in which case the order in which the tensor product \eqref{eq:H} is taken is important. Solving Wronskian Bethe equations produces $\multiplicity_{\lambda}$ for $\mathcal{H}$ as in \eqref{eq:H} regardless of whether it is reducible or not. So far it is an experimental fact which however was expected. Indeed, the isomorphism between Wronskian and Bethe algebras was proven in \cite{MTV} for $\gl_{\rank}$ case under the assumption of cyclicity of Yangian representation, cyclicity in the case of reducible representations was achieved by choosing an order in product \eqref{eq:H}. These observations explain the logic behind the conjecture below.

\begin{conjecture*}
Let $\mathcal{H}$ \eqref{eq:H} be a cyclic representation of $\Yangian(\so_{2\rank})$ with the highest-weight vector being a cyclic vector. Define the Bethe algebra as the commutative subalgebra of ${\rm End}(\mathcal{H})$ generated by the row-to-row transfer matrices (in all finite-dimensional representations in the auxiliary space) with periodic boundary conditions, and let $\mathfrak{B}$ be the Bethe algebra restricted to the highest-weight subspace (\wrt $\so_{2\rank}$ action) of $\multiplicity_\lambda \grep_\lambda\subset \mathcal{H}$ \eqref{eq:HilberSpaceDecomposition}. Then $\mathfrak{B}$ is a maximal commutative subalgebra of this subspace endomorphisms (completeness), and variables $c_{a,l}$, $a=0,\ldots,\rank$, that satisfy conditions \eqref{eq:FiniteDifferenceWronsk},\eqref{eq:W},\eqref{eq:SpinorFromVec} and  the gauge-fixing prescriptions for $\psi_a,\psi_{ab}$   form an algebra $\mathcal{W}$ (Wronskian Bethe algebra) isomorphic to $\mathfrak{B}$ (faithfulness). In particular, the algebraic number of solutions to \eqref{eq:FiniteDifferenceWronsk},\eqref{eq:W},\eqref{eq:SpinorFromVec} with the prescribed gauge fixing is $\multiplicity_{\lambda}$.
\end{conjecture*}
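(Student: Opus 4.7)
The plan is to follow the blueprint of Mukhin--Tarasov--Varchenko for $\gl_{\rank}$ chains and adapt it to the pure spinor Q-system developed here. First I would construct an algebra homomorphism $\Xi\colon\mathcal{W}\to\mathfrak{B}$. Each geometric point of $\mathrm{Spec}\,\mathcal{W}$ is a collection $(\psi_{0},\psi_{a},\psi_{ab})$ satisfying the Wronskian Bethe equations, from which all $Q_{(a)}^{\mathfrak{i}}$ are rationally reconstructible by Theorem~\ref{thm:derivingextendedQsystem}; transfer-matrix eigenvalues are then produced by the explicit bilinear formulas \eqref{eq:Ts}. At the algebra level $\Xi$ is defined by expressing coefficients of every $T_{a,s}(u)$ as polynomials in the $c_{a,l}$ using Section~\ref{sec:Tfunc}, and compatibility with commutativity of $\mathfrak{B}$ is automatic since the universal fusion and Hirota identities satisfied by these expressions hold identically in $\mathcal{W}$, again by Theorem~\ref{thm:derivingextendedQsystem}.

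Second, for surjectivity of $\Xi$ the cleanest route is through Baxter Q-operators. Building on \cite{Frassek:2020nki} and \cite{Costello:2021zcl}, one would construct operators $\mathbf{Q}_{a}(u)\in\mathfrak{B}$ whose joint spectrum on $\mathcal{H}$ reproduces the extended Q-system; then every joint eigenspace of $\mathfrak{B}$ furnishes a point of $\mathrm{Spec}\,\mathcal{W}$ and, since all $T_{a,s}$ are explicit combinations of the $\mathbf{Q}_{a}$ via \eqref{eq:Ts}, surjectivity follows. Where Q-operators are not yet available (beyond basic chains), a fallback is that the generating function \eqref{eq:genfun} exhibits $T_{a\leq r-2,1}$ as a bilinear in $\Tg_{(1^{a})}$, so that a sufficiently large collection of transfer-matrix coefficients already determines the symmetric functions of the $\Qq_{a}$ by a discrete Jacobi inversion in $\sh$, which in turn pins down $(\psi_{0},\psi_{a},\psi_{ab})$ uniquely up to the gauge fixed in Section~\ref{sec:VB}.

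Third, for faithfulness and completeness I would argue scheme-theoretically in the inhomogeneities $\btheta=(\theta_{1},\dots,\theta_{L})$. Consider the family $\mathcal{W}_{\btheta}$ over $\mathbb{C}^{L}$ and establish: (i) finiteness and flatness of $\mathcal{W}_{\btheta}$ in $\btheta$, (ii) an upper bound $\dim_{\mathbb{C}}\mathcal{W}_{\btheta}\leq\multiplicity_{\lambda}$ via the character limit---sending $\btheta$ to a large real regime, the Wronskian Bethe equations degenerate to the character solution of Section~\ref{sec:Tfunc}, and monic solutions compatible with the degree prescription \eqref{eq:degrees} are counted by the $\gl_{\rank}$-branching of the relevant $\so_{2\rank}$-irreps, matching $\multiplicity_{\lambda}$ for $\mathcal{H}=\bigotimes_{\ell}\qrep_{a_{\ell}}(\theta_{\ell})$---and (iii) reducedness of $\mathcal{W}_{\btheta}$ at that degeneration point. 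Combining these with surjectivity of $\Xi$ and the independent lower bound $\dim\mathfrak{B}\geq\multiplicity_{\lambda}$ coming from simplicity of the transfer-matrix spectrum at generic $\btheta$, one concludes that $\Xi$ is an isomorphism and $\mathfrak{B}$ is maximal commutative on each highest-weight isotypical component; semicontinuity in $\btheta$ then extends faithfulness to all cyclic $\mathcal{H}$.

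The main obstacle is not the combinatorics---Theorem~\ref{thm:derivingextendedQsystem} reduces the full extended Q-system to the Wronskian equations---but rather the two analytic inputs above. The first is rigorous construction of $\so_{2\rank}$ Baxter Q-operators realising the pure spinor Q-system on $\mathcal{H}$; partial results exist \cite{Frassek:2020nki,Costello:2021zcl} but not in the generality required for non-basic chains. The second is reducedness of $\mathcal{W}_{\btheta}$ at a useful degeneration: the $\gl_{\rank}$ precedent uses the Shapovalov form together with simplicity of the Gaudin Hamiltonian spectrum at distinct real $\btheta$, and an $\so_{2\rank}$ counterpart requires an analogous Gaudin subalgebra with a controlled spectrum, presently only partly available. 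Once these two ingredients are in place, I expect the scheme-theoretic packaging to proceed essentially as in the $\gl_{\rank}$ case, and the experimental evidence of Section~\ref{sec:SolvingQSystem}---including non-generic points where nested Bethe equations fail---gives good reason to believe the expected dimension count holds without the generic-position caveat.
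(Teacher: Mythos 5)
First, note that the statement you are addressing is presented in the paper as a \emph{conjecture}: the authors do not prove it, and their supporting material in Sections~\ref{sec:VB}--\ref{sec:CF} consists of (i) the combinatorial reduction of the extended Q-system to the Wronskian Bethe equations (Theorem~\ref{thm:derivingextendedQsystem} together with Lemmas~\ref{lemma3} and~\ref{lemma:sufficient}), (ii) the remark that a Q-operator construction in the spirit of \cite{Frassek:2020nki}, combined with that combinatorics, ``seems to be a sufficient toolkit'' but ``is yet to be done'', and (iii) extensive numerical evidence. Your proposal is therefore not competing with a proof in the paper; it is a proof \emph{program}, and as a program it coincides with the authors' own intended route (the Mukhin--Tarasov--Varchenko blueprint of \cite{MTV,Chernyak:2020lgw}: a homomorphism $\mathcal{W}\to\mathfrak{B}$ via the $T$-function formulas \eqref{eq:Ts}, surjectivity via Q-operators, and a dimension count via degeneration).

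The genuine gaps are the two you flag yourself, and they are exactly where the conjecture currently lives, so the proposal does not close them. Concretely: (a) the existence of Baxter Q-operators on $\mathcal{H}$ whose eigenvalues realise the pure spinor Q-system \emph{and} satisfy the analytic Bethe Ansatz \eqref{eq:AnalyticalAnsatz} is not established for $\so_{2\rank}$, and your fallback --- recovering $(\psi_0,\psi_a,\psi_{ab})$ from the coefficients of $T_{a\leq r-2,1}$ by a ``discrete Jacobi inversion'' --- does not work as stated, because the bilinear combinations of $\Tg_{(1^a)}$ entering \eqref{eq:genfun} determine at best symmetric data of the $\Qq_a$ and do not obviously separate $\Phi$, $\Qq_a$ and $\mu_{ab}$; moreover, for non-basic chains the kinematic constraints \eqref{eq:SpinorFromVec} must be built into $\mathrm{Spec}\,\mathcal{W}$ and into the purported flat family over $\btheta$, which is delicate precisely where $\mathcal{H}$ becomes reducible and the order of the tensor product \eqref{eq:H} matters. (b) The upper bound $\dim_{\mathbb{C}}\mathcal{W}_{\btheta}\leq\multiplicity_{\lambda}$ ``via the character limit'' is itself an unproven counting statement for $\so_{2\rank}$: in the $\gl_{\rank}$ case the analogous count rests on transversality/Schubert-calculus input with no worked-out $D_{\rank}$ analogue, and the paper's evidence for the count here is purely experimental (Tables~\ref{tab:D4Solutions} and~\ref{tab:D5Solutions}). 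In short, your write-up is an honest and well-aligned roadmap, but the two missing analytic inputs are not technicalities to be filled in later --- they are the substance of the conjecture, which remains open.
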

\noindent
Our general expectation is that the requirement for all $Q_{(a)}^{\mathfrak{i}}$ to satisfy the analytic Bethe anastz is the necessary and sufficient condition for faithfullness. 
This requirement is not part of the above conjecture because we expect it to be a consequence. As was discussed in Section~\ref{sec:VB},  it would be a consequence for generic values of parameters and it was a consequence for all attempted experiments as well. A weaker and safer version of the conjecture would be to include this requirement. From a practical perspective, checking it starting from the solution of Wronskian Bethe equations demands a small fraction of the computation time comparted to solving of Wronskian Bethe equations themselves.

\medskip
\noindent
In the case when $\mathcal{H}$ is reducible, and if one wants to select the solutions corresponding to the quotient-representation $\mathcal{H}/U$, where $U$ is an invariant subspace, one needs further constraints on $c_{a,l}$. Consider for instance a Kirillov-Reshetikhin module $\qrep_{1,m}(\theta)$ that is the {\it irrep} with $P_a=(u-\theta)^{\delta_{1,a}[m]_{\sh}}$, and build a spin chain by taking tensor products $\mathcal{H}'=\bigotimes_{\ell=1}^L \qrep_{1,m}(\theta_\ell)$. It is isomorphic to the quotient-representation inside spin chain \eqref{eq:H} of length $m\,L$ which is based on the same Drinfeld polynomials as $\mathcal{H}'$ but reducible. To select the Bethe algebra on the quotient one imposes the extra requirement: the polynomials $\frac{T_{1,s}}{\sigma_{1}^{[\rank-1+s]}\sigma_{1}^{[-\rank+1-s]}}$ are divisible by $ (p^{[\rank-2]}p^{[-\rank+2]})^{[m-s]_\sh}$, $s<m$, where  $p=\prod\limits_{\ell=1}^{L}(u-\theta_{\ell})$.

Kinematic constraints \eqref{eq:SpinorFromVec} serve a similar purpose. We can construct any spin chain as a quotient from a specially designed basic spin chain, and applying the kinematic constraints corresponds to performing this quotient. In particular, the $a$’th fundamental representation of Yangian with $P_a=(u-\theta)$, $1<a\leq \rank-2$, can be achieved as a quotient-representation in the vector spin chain of length $a$ described by $P_1=(u-\theta)^{[a]_\sh}$, this feature can be spotted in the structure of \rhs of \eqref{eq:W}. As for spinor representations, consider as an illustration the simplest case when $P_{\rank-1}$ and $P_{\rank}$ are not co-prime: $P_{\rank}=P_{\rank-1}=u-\theta$. This corresponds to an irreducible Yangian representation of dimension $2^{2r-2}$ which decomposes as an $\mathfrak{so}_{2\rank}$-module into the sum $\Lambda^{r-1}(\mathbb{C}^{2r})\oplus\Lambda^{r-3}(\mathbb{C}^{2r})\oplus\ldots$. This Yangian representation can be built as a quotient-representation in  the vector spin chain of length $r-1$ described by $P_1=(u-\theta)^{[r-1]_\sh}$.

\newpage
\section{Conclusions}
We put into practice the extended Q-system introduced in \cite{Ferrando:2020vzk} and \cite{Ekhammar:2020enr}. To this end, we compromise between maintaining a covariant description and keeping a manageable number of Q-functions: the full $\so_{2\rank}$ covariance is partially broken to $\gl_{\rank}$ subalgebra. Whereas this breakdown and an equivalent parameterisation is already  present in \cite{Ferrando:2020vzk}, we make additionally a geometric connection to the fused pure spinor point of view of \cite{Ekhammar:2020enr}. This allows us to single out a system of equations \eqref{eq:SpinorQSystem} that contains all the essential information and is useful for explicit computations. 

Across the paper we encountered three classes of functions: $V_{A},\Qq_{A},\Psi_{A}$ transforming covariantly under $\gl_\rank$ action. Their contra-variant counterparts are $V^{A},\Qq^{A},\Psi^{A}$. Let us summarise basic relationships between them. To focus on the conceptual features, we shall suppress shifts of the spectral parameter in $\mu_{AB}$ and set $\Phi=1$ (while the bulk of the article contains full expressions):
\be
\Qq^A=\epsilon^{AB}\Qq_B\,,\quad \Psi^A=\epsilon^{AB}\Psi_B\,,\quad V^A=\mu^{AB}V_{B}\,\quad
\Qq^a=V^a\,,\quad \Qq_a=\Psi_a.
\ee
While $\Qq^A=V^A$ in the suggested simplifications, we emphasise that $\bP_A\neq \Psi_A$ for $|A|>1$. These features reflect the following fact: $\bP^A,\bP_B$ form $\gl_{\rank}$ Q-system \eqref{eq:QQoriginal}, so do $V^A,V_B$ and in fact $V_{A}{}^B=g_{AI}V^{IB}$ is a piece of $\gl_{2\rank}$ Q-system, while, in contrast, $\Psi^A,\Psi_B$ form pure spinor Q-system \eqref{eq:PsiPsi}.

Equations \eqref{eq:SpinorQSystem} are formulated for the triple $\Psi_\es,\Psi_a,\Psi_{ab}$ which plays the fundamental role for this article. All other Q-functions can be computed from the triple through \eqref{eq:PureSpinorCondition} and \eqref{eq:VQPsi} and we proved that all the relations of the extended Q-system follow from \eqref{eq:SpinorQSystem}. This in turn implies that an extended Q-system can always be constructed from $Q_{(a)}^1$ and is unique up to the symmetry. The derivation of the extended Q-system was not easy to demonstrate previously: In \cite{Ferrando:2020vzk}, some of its relations were conjectured while derived for low-rank cases only; In \cite{Ekhammar:2020enr}, the relations were obtained using an analytic and non-combinatorial argument via ODE/IM. We now offer their  combinatorial derivation for an arbitrary rank. 

We demonstrated that, in general position, $\Psi_\es,\Psi_a,\Psi_{ab}$ and $Q_{(a)}^1$ can be derived from one another and, owing to the fact that zeros of $Q_{(a)}^1$ satisfy nested Bethe equations, we confirm that our approach is in principle consistent with these equations in general position. We however circumvented the passage to the nested Bethe equations: Section~\ref{sec:RationalSpinChains} demonstrates how to use directly Wronskian Bethe equations to find spectrum of rational spin chains in a variety of different representations. The proposed approach is more efficient, by a large margin, than solving nested Bethe equations, at least when the goal is to find all solutions instead of focusing on a single one. Furthermore, the offered equations define de-facto a coordinate ring of an algebraic variety, and we conjecture based on rich numerical evidence that this ring is isomorphic to the Bethe algebra which should be maximally commutative, notably the conjecture does not involve a general position assumption. Using nested Bethe equations towards the same type of statement is impractical if even possible. These observations are in full parallel with the ones for Wronskian Bethe equations of $\sl_{\rank}$ spin chains in \cite{MTV,Chernyak:2020lgw}.

Having additionally in mind potential applications to \eg TBA equations, we investigated in Section \ref{sec:Tfunc} how T-functions $T_{a,s}$ are decomposed into $\gl_{\rank}$-invariants. We got closed although rather bulky formulae for all symmetric powers of miniscule representations ($a=1,r-1,r$); many expressions were known previously, in particular in \cite{Ferrando:2020vzk}, but expressions involving fermionic representations ($a=r-1,r$) with $s>1$ are new. Arguments from representation theory, notably sign-free decompositions \eqref{eq:suggestive} and \eqref{eq:Tassp}, suggest that the obtained expressions are as compact as they can possibly be. In this respect, more elegant and explicitly $\so_{2\rank}$-invariant expressions \eqref{eq:Ts} seem to be more practical even though they use more Q-functions, especially because computing all the Q-functions is streamlined by \eqref{eq:PureSpinorCondition} and \eqref{eq:VQPsi}.

\begin{acknowledgments}
\vspace{-1em}
We would like to thank Luca~Cassia, Vladimir~Kazakov, Paul~Ryan, and Maor~Ben-Shahar for stimulating and interesting discussions.  This work was supported by the Knut and Alice Wallenberg Foundation under grant ``Exact Results in Gauge and String Theories''  Dnr KAW 2015.0083. 
\end{acknowledgments}
\appendix

\section{Implementing Weyl group action}
\label{sec:Weyl}
Let $U$ be an invertible linear transformation from the Clifford algebra acting on spinors and $\CO$ an invertible linear transformation acting on vectors. The transformation $\Psi \mapsto U\,\Psi\,,V \mapsto \CO\,V$ is a symmetry of the Q-system if it preserves inner products \eqref{eq:Quant} and if $U$ and $\CO$ are mutually compatible. To preserve the inner product means that $\CO\in \mathsf{O}(2r)$ while $U$ must satisfy $ U^TCU =  C$. Compatibility requires that $U^{-1}\Gamma^{i}U = \CO^{i}{}_j \Gamma^j$. The two constraints on $U$ imply that $U$ is an element of $\mathsf{Pin}(2\rank)$ and hence we can consider $\mathsf{Pin}(2\rank)$ as our global symmetry group.

Weyl group $\Weyl$ is the symmetry group acting on the space of weights, it preserves the set of weights for any finite-dimensional representation of $\so_{2\rank}$. If $\weyl\in\Weyl$, we call $\hweyl\in\mathsf{Pin}(2\rank)$ a representative of $\weyl$ if, in any given representation $L(\omega)$ the following holds:
\be
\hweyl\cdot \basvec_{\gamma}=\#\,\basvec_{\weyl\cdot\gamma}\,,
\ee
where $\basvec_{\gamma}\in L(\omega)$ is a basis vector of weight $\gamma$ and $\#$ is a numerical constant that can depend on $\weyl$ and $\gamma$. Except in special cases, it is impossible to set all such constants to identity or to make $\hweyl$ a homomorphism from $\Weyl$ to $\mathsf{Pin}(2\rank)$. Nevertheless, since $\hweyl\in\mathsf{Pin}(2\rank)$, $\hweyl$ is obviously a symmetry transformation of the extended Q-system and therefore, allowing some freedom of speech, we say the Weyl group is the symmetry of the extended Q-system as well.  This Weyl symmetry is another important organisational principle of the extended Q-system, note for instance its role in the original QQ-relations \eqref{eq:DynkinDiagramQQ}. To work with it, it is pertinent to fix $\#$. There are prescriptions with simple choices of  $\#$, and the goal of this appendix is to explicitly write them down. It is enough to do it for minuscule representations, the rest follows easily.

\medskip
\noindent
One choice for fixing $\#$ was proposed in \cite{Ekhammar:2020enr} using general Lie algebra formalism: For $\elweyl_a$ being the elementary Weyl reflection, its representative is picked as $\helweyl_{a} = e^{e_{a}}e^{-f_{a}}e^{e_{a}}$ where $e_{a}$ and $f_a$ are Chevalley generators. For the spinor representation we take, cf. Section~\ref{sec:Cartandecomp}, $e_{a} = \theta^{a+1}\partial_{a}\,, f_{a}=\theta^{a}\partial_{a+1}$ for $a<r$ and $e_{r} = \partial_{r}\partial_{r-1}\,,  f_{r}=\theta^{r-1}\theta^{r}$. Then, when acting on basis vectors of fermionic representations, the explicit action computes to
\begin{subequations}
\label{prescription0}
\begin{align}
    \helweyl_a\cdot\theta^{A,a} &= \theta^{A,a+1} \,,
    &
    \helweyl_{a}\cdot \theta^{A,a+1} &= -\theta^{A,a}\,,
     &
    \helweyl_r\cdot \theta^B&=\theta^{B,r-1,r}\,, 
    & 
    \helweyl_r\cdot\theta^{B,r-1,r} &= -\theta^{B}\,,
\end{align}
and the consistent transformation for basis vectors for vector representation is
\begin{align}
\helweyl_a\cdot \mathbf{e}_{\pm a} &=  \mathbf{e}_{\pm (a+1)} \,,
    &
\helweyl_a \cdot \mathbf{e}_{\pm (a+1)}  &= - \mathbf{e}_{\pm a} \,,
    &
\helweyl_r \cdot \mathbf{e}_{\pm(r-1)} &= \mathbf{e}_{\mp r}\,, 
 &
\helweyl_r \cdot \mathbf{e}_{\pm r} &= -\mathbf{e}_{\mp (r-1)}\,,
\end{align}
\end{subequations}
In these formulae, $a=1,2,\ldots,\rank-1$, $A$ does not contain $a$ and $a+1$ and $B$ does not contain $r$ and $r-1$; only nontrivial transformations were written down. 

It is useful to tabulate more transformations than just the elementary Weyl reflections. We will here describe explicitly permutations of any two indices dubbed $\sigweyl_{ab}$ and sign swaps dubbed $\sigweyl_a$. Their action on the orthonormal basis of the weight space is
\begin{eqnarray}
\sigweyl_{ab}\cdot \varepsilon_a =\varepsilon_b\,,\quad
\sigweyl_{ab}\cdot \varepsilon_b =\varepsilon_a\,,\quad
\sigweyl_{ab}\cdot \varepsilon_c &=&\varepsilon_c\,,\ 
c\neq a,b\,,
\\
\sigweyl_a\cdot \varepsilon_a =-\varepsilon_a\,,\quad\quad\quad\quad
\sigweyl_a\cdot \varepsilon_b &=&\varepsilon_b\,,\ b\neq a\,.
\end{eqnarray}
Reflections $\sigweyl_a$ are not elements of Weyl group, their products $\sigweyl_a\sigweyl_b$ are. Nevertheless,  $\sigweyl_a$ are symmetries for weights in all vector representations and for Dirac spinor, so considering them is still useful.

Note that $\elweyl_a=\sigweyl_{a,a+1}$ for $a<r$ and $\elweyl_r=\sigweyl_{r-1,r}\sigweyl_{r-1}\sigweyl_{r}$. In \cite{Ekhammar:2020enr}, defining $\hweyl$ for $\weyl$ other than  $\elweyl_a$ was based on representing $\weyl$ as a minimal length product of $\elweyl_a$. Given specifics of $\mathfrak{so}_{2\rank}$, it is more suitable to use a different convention for $\hat\sigweyl_{ab}$ and $\hat\sigweyl_a$ which symmetrically treats indices, this means that sign prescriptions in $\#$ shall be not always the same as implied by the recipe of \cite{Ekhammar:2020enr}.

For permutations, one can pick $\hat \sigweyl_{ab} = \Gamma^b \Gamma_a -\Gamma^a\Gamma_{b}+ \Gamma^b \Gamma^a\Gamma_a \Gamma_b +\Gamma_{a}\Gamma_{b}\Gamma^{b}\Gamma^{a}$. On the level of components this translates into
\begin{subequations}
\begin{align}
\label{prescriptionA}
    &\hat \sigweyl_{ab}: \Psi_{Aab} \mapsto \Psi_{Aab}\,,
    &
    &\hat \sigweyl_{ab}: \Psi_{Aa} \mapsto -\Psi_{Ab}\,,
    &
    &\hat \sigweyl_{ab}: \Psi_{Ab} \mapsto \Psi_{Aa}\,,
    &
    &\hat \sigweyl_{ab}: \Psi_{A} \mapsto \Psi_A\,, \\
    &\hat \sigweyl_{ab}: V^{\pm a} \mapsto -V^{\pm b}\,,
    &
    &\hat \sigweyl_{ab}: V^{\pm b} \mapsto V^{\pm a}\,.
    &
    &\hat \sigweyl_{ab}: V^{i} \rightarrow V^{i}\,,
\end{align}
\end{subequations}
where $A$ does not contain $a,-a$ and $i\neq \pm a,b$.

For sign flips, a good choice is $\hat \sigweyl_a = (-1)^N(\Gamma_a-\Gamma^{a})$ where $(-1)^{N}\theta^{A} = (-1)^{|A|}\theta^{A}$ and explicitly in Gamma-matrices $(-1)^{N}=2^{\rank}\, \Gamma_{1}{}^1\dots \Gamma_{\rank}{}^{\rank}$.  The action on components is
\begin{align}
\label{signflips}
    &\hat\sigweyl_a: \Psi_{A} \mapsto \Psi_{Aa}\,,
    &
    &\hat\sigweyl_a: \Psi_{Aa} \mapsto \Psi_{A}\,.
    &\hat\sigweyl_a: V^a \mapsto V_a\,,
    &
    &\hat\sigweyl_a: V_a \mapsto V^a\,,
    &
    &\hat\sigweyl_a: V^i \mapsto V^i\,,
\end{align}
where $A$ does not contain $a$ and  $i\neq \pm a$.

We note that clearly the action on $V^i$ is a reflection which is in the disconnected component from identity of $\mathsf{O}(2\rank)$. Another remark is that $(-1)^{N}=2^{\rank}\, \Gamma_{1}{}^1\dots \Gamma_{\rank}{}^{\rank}$ is recognisable as the chiral gamma-matrix $\Gamma^5$, however in the chosen normalisation it is not an element of $\mathsf{Pin}(2r)$, the correctly normalised option is $\Gamma^5=e^{-\frac{i\pi}{2}\rank}(-1)^N$; under the map $\mathsf{Pin}(2r)\to \mathsf{O}(2r)$, $\Gamma^5$ and $-\Gamma^5$ have their image in the connected component $\mathsf{SO}(2r)$ as $-{\rm Id}$ which is nothing but the space inversion $V^i\mapsto -V^i$.

\medskip
\noindent
Prescription \eqref{prescriptionA} is compatible with \eqref{prescription0} for what concerns the fundamental reflections $\elweyl_a$, however this comes at price that $\hat\sigweyl_{ba}=\hat\sigweyl_{ab}^{-1}\neq\hat\sigweyl_{ab}$. If not to insist on \eqref{prescription0}, a more symmetric prescription is possible with $\hat\sigweyl_{ab}=\hat\sigweyl_{ba}=\ii(\Gamma^a\Gamma_{b}+\Gamma^b \Gamma_a + \Gamma^a \Gamma^b\Gamma_a \Gamma_b +\Gamma_{a}\Gamma_{b}\Gamma^{b}\Gamma^{a})$ which explicitly translates in the action on components as 
\begin{subequations}
\label{prescriptionB}
\begin{align}
    &\hat \sigweyl_{ab}: \Psi_{Aab} \mapsto -\ii\,\Psi_{Aab}\,,
    &
    &\hat \sigweyl_{ab}: \Psi_{Aa} \mapsto \ii\,\Psi_{Ab}\,,
    &
    &\hat \sigweyl_{ab}: \Psi_{Ab} \mapsto \ii\,\Psi_{Aa}\,,
    &
    &\hat \sigweyl_{ab}: \Psi_{A} \mapsto \ii\,\Psi_A\,, \\
    &\hat \sigweyl_{ab}: V^{\pm a} \mapsto V^{\pm b}\,,
    &
    &\hat \sigweyl_{ab}: V^{\pm b} \mapsto V^{\pm a}\,.
    &
    &\hat \sigweyl_{ab}: V^{i} \rightarrow V^{i}\,.
\end{align}
\end{subequations}
For what concerns vectors, all $\#$ factors in this prescription are set to $1$, so \eqref{prescriptionB} and \eqref{signflips} realise an embedding of $\Weyl$ as subgroup of $\mathsf{SO}(2\rank)$, and $\Weyl$ adjoined with $\sigweyl_a$ is embedded in $\mathsf{O}(2\rank)$.

\medskip
\noindent
 The discussed Weyl symmetry properties allow for further comparison with results of \cite{Ferrando:2020vzk}. Most of the relations there  enjoy both $\gl_{\rank}$ covariance and covariance with respect to Weyl group action. This is expected as \cite{Ferrando:2020vzk} works with a Q-system on the Weyl orbit. In hindsight, as we have typically the same covariance structures, many of our expressions have a counterpart in \cite{Ferrando:2020vzk}, note however that some of the relations in \cite{Ferrando:2020vzk} were not derived but conjectured for arbitrary rank, for instance an equivalent of fused orthogonality  \eqref{eq:fusedortho}. To relate formulae between the two papers, we note that spinor Q-functions are denoted in \cite{Ferrando:2020vzk} as $S_{\{i_1,\ldots,i_{\rank}\}}$, and the relation to $\Psi_{A}$ is the following: $\Psi_{A}= S_{\{A'\bar{A}\}}$, where $A'$ means a multi-index with all entries having swapped signs, and ${\bar A}$ means the multi-index with entries from the complementary set. Q-functions for the tensor nodes of Dynkin diagram are identified as $V^{I}\ ({\rm here})=Q_{I}\ ({\rm in}\ \text{\cite{Ferrando:2020vzk}})$. The offered identifications might need an adjustment in normalisations: \cite{Ferrando:2020vzk} tends to take twist variables ($\tau_{a,{\rm in\ [8]}}=x_{a,{\rm here}}$) and source terms of Bethe equations outside of Q-functions whereas we pack them all inside, in particular through the dressing factors, see Section~4.3 of \cite{Ekhammar:2020enr}.

Among the most notable identifications with \cite{Ferrando:2020vzk}, we mention that (5.10) there is the same as our \eqref{eq:VectorUpperFromSpinor} when $I$ in $Q_I$ has only positive entries. A particular case $I=A=\emptyset$ is the Wronskian condition \eqref{eq:DeterminantCondition} present in \cite{Ferrando:2020vzk} as (7.2). There are also many equivalent statements when it comes to the expressions for T-functions,  we comment on that in Section~\ref{sec:Tfunc}.

The key difference with \cite{Ferrando:2020vzk} comes in the recognition that $\Psi_{A}$ are coordinates of pure spinors. This observation, originally from \cite{Ekhammar:2020enr}, allows us to use Cartan formulae \eqref{eq:PureSpinorCondition} and reduce all the analysis to $\Psi_{\emptyset},\Psi_a,\Psi_{ab}$ which is the main working tool for achieving novel results in our work (they are summarised in Conclusions). To our understanding, an analogous observation is missing from \cite{Ferrando:2020vzk}. For instance, if $I$ has negative entries, (5.10) shall involve some $\Psi_{B}$ with $|B|\geq 2$. It is unclear how to practically use (5.10) then, whereas we derive instead equation \eqref{eq:VQPsi} which offers even a way to compute all and not only those that are on the Weyl orbit. So, even though somewhat in disguise, \eqref{eq:VQPsi} restores the full $\so_{2\rank}$ symmetry of the formalism.
\section{Technical details}
\label{app:techproofs}
\begin{proof}[{\bf Proof of Lemma~\ref{leamma:purespinor}}] Due to rescaling invariance of equations and condition $\Psi_{\emptyset}\neq 0$, we shall assume $\Psi_{\emptyset}=1$ without loss of generality.

We need to prove that $\Psi_{(k)}$ computed by Cartan formulae \eqref{eq:PureSpinorCondition} solve equations of pure spinor Q-system \eqref{eq:PsiPsi}. To demonstrate the statement for $\Psi_{(k)}$, we shall recursively assume that it was already proven for all $\Psi_{(k')}$ with $k'<k$. The statement is obvious for $k=1,2$ which is the base for the induction. 

By the nature of equation  \eqref{eq:PsiPsi} which involves only a subset of indices $Aab$, we can assume that $|Aab|=k=r$, so $\Psi_{(k)}$ is the top form and $\Psi_{Aab}=*\Psi_{(k)}\,\epsilon_{Aab}$, where $*$ is Hodge operation. We can rewrite then \eqref{eq:PsiPsi} as
\be\label{eq:A1}
W(*\Psi_{(k)},*(\ba\wedge\bb\wedge\Psi_{(k-2)}))=W(*(\ba\wedge\Psi_{(k-1)}),*(\bb\wedge\Psi_{(k-1)}))
\ee
that should hold for arbitrary vectors $\ba,\bb$. Furthermore, working with top forms enables us to use Plücker identity
\be
*({\bf v}_1\wedge{\bf v}_2\ldots\wedge{\bf v}_k)\,{\bf a}=\sum_{i=1}^k *({\bf v}_1\wedge{\bf v}_2\ldots\wedge\underset{i{\rm-th\ position}}{{\bf a}}\wedge\ldots\wedge{\bf v}_k)\,{\bf v}_i\,.
\ee

Details of the computation depends on the parity of $k$, we shall consider the two cases separately.

\medskip
\noindent{\it Even case, $k=2n+2$.} First, derive the important technical result:
\be
\label{eq:diffrelation}
\Psi_{(2n+2)}^+-\Psi_{(2n+2)}^-=(\Psi_{(2)}^+-\Psi_{(2)}^-)\wedge\Psi_{(2n)}^{\pm}=\Psi_{(1)}^+\wedge\Psi_{(1)}^-\wedge\Psi_{(2n)}^{\pm}.
\ee
Derivation is done by induction in $n$. For $n=0$, the first equality is tautological and the second is $$\Psi_{(2)}^+-\Psi_{(2)}^-=\Psi_{(1)}^+\wedge\Psi_{(1)}^-$$ which is nothing but \eqref{eq:U1Conditions}. We also note that, obviously, $\Psi_{(1)}^{\pm}\wedge(\Psi_{(1)}^+\wedge\Psi_{(1)}^-)=0$ and then $\Psi_{(1)}^{\pm}\wedge\Psi_{(2n)}^+=\Psi_{(1)}^{\pm}\wedge\Psi_{(2n)}^-$ because $\Psi_{(2n)}=\frac 1{n!}\Psi_{(2)}^n$, this explains the possibility to choose an arbitrary shift sign in \eqref{eq:diffrelation}.

For $n>0$, the derivation of \eqref{eq:diffrelation} proceeds as follows 
\begin{align}
\nonumber
&\Psi_{(2n+2)}^+ =\frac{1}{n+1}\Psi_{(2)}^+\wedge\Psi_{(2n)}^+=\frac{1}{n+1}\Psi_{(2)}^+\wedge\Psi_{(2n)}^-+\frac{1}{n+1}(\Psi_{(2)}^+-\Psi_{(2)}^-)\wedge\Psi_{(2n-2)}^{\pm}\wedge{\Psi_{(2)}^{\pm}}
\\
&=\Psi_{(2n+2)}^-+\frac{1}{n+1}(\Psi_{(2)}^+-\Psi_{(2)}^-)\wedge\Psi_{(2n)}^{-}+\frac{n}{n+1}(\Psi_{(2)}^+-\Psi_{(2)}^-)\wedge\Psi_{(2n)}^{\pm}=\Psi_{(2n+2)}^-+(\Psi_{(2)}^+-\Psi_{(2)}^-)\wedge\Psi_{(2n)}^{\pm}\,.
\nonumber
\end{align}

We use \eqref{eq:diffrelation} to rewrite the \lhs of \eqref{eq:A1} 
\begin{align}
\label{eq:A4}
W(*\Psi_{(2n+2)},*(\ba\wedge\bb\wedge\Psi_{(2n)}))=*(\Psi_{(1)}^+\wedge\Psi_{(1)}^-\wedge\Psi_{(2n)}^{\pm})\,*\!(\ba\wedge\bb\wedge\Psi_{(2n)}^-)-*\Psi_{(2n+2)}^-\,*\!(\ba\wedge\bb\wedge\Psi_{(1)}^+\wedge\Psi_{(1)}^-\wedge\Psi_{(2n-2)}^{\pm})\,,
\end{align}
and in what follows we pick the negative shift $\Psi^-$ in both places where $\Psi^{\pm}$ appears. 

Consider now $\Psi_{ab}^-$ as components of an anti-symmetric matrix. Anti-symmetric matrices of even dimension have eigenvalues coming in pairs, where in each pair the two eigenvalues sum up to zero. Denote hence the eigenvalues by $\lambda_{\pm\alpha}\equiv\pm\lambda_{\alpha}$ and the corresponding eigenvectors as $\Upsilon_{\pm\alpha}$. Note that two eigenvectors $\Upsilon_{\alpha}$ and $\Upsilon_{\beta}$ are orthogonal (with naive metric) if $\alpha+\beta\neq 0$. With these remarks, it is easy to conclude that, with an appropriate normalisation of the eigenvectors, one can represent $\Psi_{(2)}^-=\sum\limits_{\alpha=1}^{n+1}\Upsilon_{\alpha}\wedge \Upsilon_{-\alpha}$. From here
\begin{align*}
    \Psi_{(2n+2)} &=\prod_{\gamma}\Upsilon_{\gamma}\wedge \Upsilon_{-\gamma}\,,&
    \Psi_{(2n)} &=\sum_{\alpha}\prod_{\gamma\neq\alpha}\Upsilon_{\gamma}\wedge \Upsilon_{-\gamma}\,,&
    \Psi_{(2n-2)}&=\sum_{\alpha<\beta}\prod_{\gamma\neq\alpha,\beta}\Upsilon_{\gamma}\wedge \Upsilon_{-\gamma}\,.
\end{align*}
We substitute these expressions into \eqref{eq:A4} and employ \Plucker relations: for vector $\ba$ in the first term and for vector $\bb$ in the second term. All the unwanted terms (those coming from the exchange of $\ba$ or $\bb$ with $\Upsilon_{\pm\gamma}$) shall cancel and what remains is precisely the \rhs of \eqref{eq:A1}, if we use $\Psi_{(2n+1)}^+=\Psi_{(1)}^{+}\wedge\Psi_{(2n)}^{\pm}$ and $\Psi_{(2n+1)}^-=\Psi_{(1)}^{-}\wedge\Psi_{(2n)}^{\pm}$.

\medskip
\noindent{\it Odd case, $k=2n+1$.} Using the above-established properties, we rewrite both sides of \eqref{eq:A1} as follows
\begin{subequations}
\label{eq:A5}
\begin{align}
W(*\Psi_{(2n+1)},*(\ba\wedge\bb\wedge\Psi_{(2n-1)})) &= *(\Psi_{(1)}^+\wedge\Psi_{(2n)}^-)\,*\!(\ba\wedge\bb\wedge\Psi_{(1)}^-\wedge\Psi_{(2n-2)}^-)-*(\Psi_{(1)}^-\wedge\Psi_{(2n)}^-)\,*\!(\ba\wedge\bb\wedge\Psi_{(1)}^+\wedge\Psi_{(2n-2)}^-)\,,
\\
W(*(\ba\wedge\Psi_{(2n)}),*(\bb\wedge\Psi_{(2n)})) &=*(\ba\wedge\Psi_{(1)}^+\wedge\Psi_{(1)}^-\wedge\Psi_{(2n-2)}^-)\,*\!(\bb\wedge\Psi_{(2n)}^-)-*(\ba\wedge\Psi_{(2n)}^-)\,*\!(\bb\wedge\Psi_{(1)}^+\wedge\Psi_{(1)}^-\wedge\Psi_{(2n-2)}^-)\,.
\end{align}
\end{subequations}
In odd $(2n+1)$-dimensions, a two-form can be represented as $\Psi_{(2)}^-=\sum\limits_{\alpha=1}^{n}\Upsilon_{\alpha}\wedge \Upsilon_{-\alpha}$, here one eigenvector of $\Psi_{ab}^-$ does not participate in the sum, the one with zero eigenvalue which is orthogonal to all other eigenvectors. Consequently, we have
\begin{align*}
&\hphantom{eeeee}&
    \Psi_{(2n)} &=\prod_{\gamma}\Upsilon_{\gamma}\wedge \Upsilon_{-\gamma}\,,&
    \Psi_{(2n-2)}&=\sum_{\alpha}\prod_{\gamma\neq\alpha}\Upsilon_{\gamma}\wedge \Upsilon_{-\gamma}\,.
&\hphantom{eeeee}&
\end{align*}
We substitute these expressions to both lines of \eqref{eq:A5} and then demonstrate that these lines are equal by employing \Plucker identity in the first line, with vector $\ba$ for the first summand and vector $\bb$ for the second. This accomplishes the proof that Ansatz \eqref{eq:PureSpinorCondition} solves \eqref{eq:PsiPsi}.

 To demonstrate that \eqref{eq:PureSpinorCondition} is the unique solution, assume the opposite: let there exists $\Psi_{(k)}'$ that is different from $\Psi_{(k)}$ and that also solves \eqref{eq:PsiPsi}. Consider the smallest $k$ for which such feature holds. Construct $\delta\Psi_{(k)}=\Psi_{(k)}'-\Psi_{(k)}$. Since $k\geq 3$ we can parameterise the components of $\delta\Psi_{(k)}$ as $\delta\Psi_{Aabc}$. As both $\Psi_{(k)}'$ and $\Psi_{(k)}$ solve \eqref{eq:PsiPsi}, it must be that $W(\delta\Psi_{Aabc},\Psi_{Aa})=0$ and so $\delta\Psi_{Aabc}=h_{bc}\Psi_{Aa}$, where $h_{bc}^+=h_{bc}^-$. On the other hand, $\delta\Psi_{Aabc}$ is anti-symmetric in its indices and so it must hold $\delta\Psi_{Aabc}=h_{ca}\Psi_{Ab}$. Then $0=W(\delta\Psi_{Aabc},\Psi_{Aa})=h_{ca}^\pm\, W(\Psi_{Ab},\Psi_{Aa})$ and we arrive at contradiction with the assumption of the lemma.
\end{proof}

\begin{proof}[{\bf Derivation of \eqref{eq:VectorSpinorRelationsb} for the proof of Theorem~\ref{thm:derivingextendedQsystem}}]
Recall that by definition $V^{i_1\dots i_\rank} = W(V^{i_1},\dots,V^{i_\rank})$, our aim is to establish the identification $V^{i_1\dots i_\rank}=\Psi^{-}C\Gamma^{i_1,\dots,i_\rank}\Psi^+$. The calculation will be divided into 3 cases: only positive indices, one negative index, and two or more negative~indices.

We start with the case when all indices are positive, then, using the definition of $\Gamma$ and $C$, we find $\Psi^{-}C\Gamma^{1\dots r}\Psi^{+} = \Psi_{\es}^{[2]_D}$. It is also true that $V^{1\dots r} = \Psi_{\es}^{[2]_D}$ as follows from $V^{1\dots r} V^{1\dots r-2}= W(V^{12\dots r-1},V^{1\dots,r-2,r})$ and $V^{a_1\dots a_{\rank-1}} = \epsilon^{a_1\dots a_{\rank-1}a_{\rank}}\frac{\bP_{a_{\rank}}}{\Phi^2}$. We conclude that the highest-weight components agree.

Next, let one superscript in $V^{i_1,\dots,i_r}$ be negative, for convenience we will write $V^{Aa}{}_{b}$ with $|A|=\rank-2$. The reason to introduce the extra index $a$ is the following $\Gamma$-matrix identity: $\Gamma^{Aa}{}_{b} = \Gamma^{Aa}\Gamma_{b} - \frac{1}{2}\Gamma^{[A}\delta^{a]}_{b}$ where $[\,\,]$ denotes anti-symmetrisation. Using this we find
\begin{equation}
\begin{split}
    \Psi^{-}C\Gamma^{Aa}{}_{b}\Psi^{+} &= \mu^+_{bc}V^{Aac} + \bP^+_{b}\Psi^{-}C\Gamma^{Aa}\Psi^{+} -V^{[A}\delta^{a]}_{b}
    =\mu_{bc}^+ V^{Aac}+\Psi^{[2]_D}_{\es}\epsilon^{Aac}\bP^+_{b}\bP^-_{c}-V^{[A}\delta^{a]}_{b}\,.
\end{split}
\end{equation}
Notice that $\Psi_{\es}^{[2]_D}\epsilon^{Aac}\bP_{bc} = V^{B}\epsilon^{Aac}\epsilon_{Bbc} = V^{[A}\delta^{a]}_{b}$ so that we have
\begin{equation}\label{eq:FinalOneIndexV}
    \Psi^{-}C\Gamma^{Aa}{}_{b}\Psi^{+} = \mu_{bc}^+V^{Aac}+\Psi_{\es}^{[2]_D} \epsilon^{Aac}\bP_{c}^+\bP_{b}^-\,.
\end{equation}
From the definition of $V^{Aa}{}_{b}$ as a Wronskian of vectors $V^i$ it follows that
\begin{equation}
    V^{Aa}{}_{b} = \mu^+_{bc}V^{Aac}-(V^{Aa})^+\bP_{bc}(V^{c})^{[-r+1]} = \mu^+_{bc}V^{Aac}+\Psi^{[2]_D}_{\es}\epsilon^{Aac}\bP_{c}^+\bP^-_{b} = \Psi^{-}C\Gamma^{Aa}{}_{b}\Psi^{+}\,.
\end{equation}
This completes the case $V^{Aa}{}_{b}$.

Finally we consider the case when two or more indices are lowered, the notation $V^{Aa}{}_{Bbc}$ with $|A|+|B|=\rank-3$ will be used. We proceed just as before and start by finding an identity for $\Gamma$-matrices that allows us to raise indices. Multiplying \eqref{eq:PSDifferential} from the left with additional differential operators and reordering leads to
\begin{align}\label{eq:DifferentialGeneralized}
    &\Gamma_{Bbc}\Psi = (\mu_{Bbc,C}\Gamma^{C}+\mu_{[Bb,|D|}\bP_{c]}\Gamma^{D}\Gamma^+-\mu_{[B,|E|}\mu_{bc]}\Gamma^{E})\Psi+\dots\,,
\end{align}
where all suppressed terms contain $\Gamma$-matrices with less than $|B|$ indices. In the above equation, indices surrounded by $|\,\,|$ are not to be anti-symmetrized with the others enclosed by $[\,\,]$. Using this as well as  $\Gamma^{Aa}{}_{Bbc} = \Gamma^{Aa}\Gamma_{Bbc}-\frac{1}{2}\Gamma^{[A}{}_{[cB}\delta^{a]}_{b]}+\dots$ leads to
\begin{equation}\label{eq:VrankPsiExpression}
    \Psi^{-}C\Gamma^{Aa}{}_{Bbc}\Psi^{+} = \mu^+_{Bbc,C}V^{AaC}+\mu^+_{[Bb,|D|}\bP^+_{c]}\Psi^-C\Gamma^{AaD}\Psi^{+}-V^{Aa}{}_{[B}\mu^+_{bc]} - V^{[A}{}_{[cB}\delta^{a]}_{b]}\,.
\end{equation}
All terms previously omitted in \eqref{eq:DifferentialGeneralized} have now dropped out due to the projection relations. The second term can be simplified according to $\mu^{+}_{[Bb,|D|}\bP^+_{c]}\Psi^{-}C\Gamma^{AaD}\Psi^{+} = \frac{\Psi^{+}_{\es}}{\Psi^{-}_{\es}}\mu^+_{[Bb,|D|}\bP_{c]}V^{AaD}_{-} = \Psi_{\es}^{[2]_D}\epsilon^{AaDd}\mu^+_{[Bb,|D|}\bP^+_{c]}\bP^-_{d}$. Notice that
\begin{equation}
    \Psi_{\es}^{[2]_D} \epsilon^{AaDd}\mu^+_{[Bb,|D|}\bP_{c]d} = \epsilon^{AaDd}\mu^+_{[Bb,|D}\epsilon_{F|c]d}V^{F} = V^{Aa}{}_{[B}\mu^{+}_{bc]} + V^{[A}{}_{[cB}\delta^{a]}_{b]}\,.
\end{equation}
So that in the end one can write
\begin{equation}
    \Psi^{-}C\Gamma^{Aa}{}_{Bbc}\Psi^{+} = \mu^+_{Bbc,C}V^{AaC}+\Psi_{\es}^{[2]_D}\epsilon^{AaDd}\mu^+_{[Bb,|D|}\bP^+_{c]}\bP^-_{d}\,.
\end{equation}
From the definition of $V^{Aa}{}_{Bbc}$ as Wronskian determinant of vectors, one can calculate
\begin{equation}
\begin{split}\label{eq:VrankUpperLower}
    V^{Aa}{}_{Bbc} &=  
    \mu^{+}_{Bbc,C}V^{AaC} -(V^{Aa}{}_{[Bb})^+ \bP_{c]d}(V^{d})^{[-r+1]} 
    =\mu^+_{Bbc,C}V^{AaC} + \epsilon^{AaDd}\Psi_{\es}^{[2]_D}\mu^+_{[Bb,|D|}\bP^-_{c]}\bP_{d}^+
\end{split}
\end{equation}
showing that indeed $\Psi^-C\Gamma^{Aa}{}_{Bbc}\Psi^+ = V^{Aa}{}_{Bbc}$.

\end{proof}
\begin{proof}[{\bf Proof of Lemma~\ref{lemma:sufficient}}]\ 
By rescaling invariance, we can replace $\Psi_{(k)}$ with $\psi_{(k)}$ in \eqref{eq:PureSpinorCondition} and so all $\psi_{A}$ with $|A|>2$ are rational combinatinos of polynomials $\psi_{0},\psi_{a},\psi_{ab}$, with only $\psi_0$ in the denominators. Hence the only potential way to violate analytic Bethe Ansatz for fermionic Q-functions is to get a pole in $\psi_{(k)}$ located at a zero of $\psi_{0}$. 

For forms of odd rank, one has $\psi_{(2n+1)}^+\propto \psi_{(1)}^+\wedge\left(\left(\frac{\psi_{(2)}}{\psi_0}\right)^n\right)^\pm$ and therefore if $\psi_{(2n+1)}^+$ has a pole at a point $u^*$, it must be that both $\psi_0^+$ and $\psi_0^-$ vanish at this point or in other words $\psi_0$ has zeros at $u^*+\ii/2$ and $u^*-\ii/2$. 

For forms of even rank, consider \eqref{eq:diffrelation} and restore $\Psi_\es$ in it. Rewriting it for polynomial functions, one gets
\be
\psi_{(2n+2)}^+\psi_0^--\psi_{(2n+2)}^-\psi_0^+\propto \psi_{(1)}^+\wedge\psi_{(1)}^-\wedge \left(\left(\frac{\psi_{(2)}}{\psi_0}\right)^n\right)^\pm\,.
\ee
If $\psi_{(2n+2)}^+$ has a pole but $\psi_{(2n+2)}^+\psi_0^-$ does not this means that $\psi_0^+$ and $\psi_0^-$ have zeros at the same point. The same applies to the second term on the \lhs of the above relation. If both terms on the \lhs have a pole at the same point, we again come to the same conclusion about zeros of $\psi_0^+$ and $\psi_0^-$. Finally, if at a given point only one term on the \lhs has a pole then the \rhs has a pole as well and we come back to the argument used for $\psi_{(2n+1)}^+$.

For vector Q-functions, the analytic Bethe Ansatz reads $V^I\propto \sigma_{|I|}v^{I}$, where $v^{I}$ are polynomials. For the case of positive indices only, the polynomiality of $v^A$ is ensured by \eqref{eq:SpinorFromVec}.  For arbitrary $I$, we use $v_{A}{}^B=\mu_{AA'}^{[m]}v^{A'B}$. Notice now that $\mu_{ab}=\frac{\Psi_{ab}}{\Psi_\es}=\frac{\psi_{ab}}{\psi_0}$ is actually a rational function of the spectral parameter with only possible poles at zeros of $\psi_0$, and notice that there are at least two different admissible values of $m$ separated by $1$ for $|A|+B|\leq r-2$ which covers all the cases of tensor $Q_{(a)}^{\mathfrak{i}}$ on Dynkin diagram. By varying $m$, we get again to the conclusion that poles in $v^{I}$---the only way to violate analytic Bethe Ansatz in the setting of the lemma---would imply existence of zeros in $\psi_0$ separated by $\ii$.
\end{proof}

\section*{Conflict of interest statement}
\vspace{-1em}
On behalf of all authors, the corresponding author states that there is no conflict of interest.
\bigskip
\bibliography{apssamp}

\end{document}